\documentclass[11pt]{article}%
\usepackage{amssymb}
\usepackage{amsfonts}
\usepackage{amsmath}
\usepackage{amsbsy}
\usepackage{graphicx}
\usepackage{enumerate}
\usepackage{geometry}
\usepackage{amssymb}
\usepackage{amsfonts}
\usepackage{amsmath}
\usepackage{amsbsy}
\usepackage{graphicx}
\usepackage{enumerate}
\usepackage{shuffle}

\usepackage{tikz}
\usetikzlibrary{arrows,shapes.geometric,positioning}

\usepackage[hidelinks]{hyperref}

\setcounter{MaxMatrixCols}{30}
\providecommand{\U}[1]{\protect\rule{.1in}{.1in}}

\allowdisplaybreaks
\newtheorem{theorem}{Theorem}

\newtheorem{corollary}[theorem]{Corollary}

\newtheorem{definition}[theorem]{Definition}
\newtheorem{example}[theorem]{Example}

\newtheorem{proposition}[theorem]{Proposition}

\newenvironment{proof}[1][Proof]{\noindent\textbf{#1.} }{\ \rule{0.5em}{0.5em}}
\geometry{left=3cm,right=3cm,top=3cm,bottom=3cm}

\definecolor{darkorange}{rgb}{1.0, 0.55, 0.0}

\allowdisplaybreaks
\setcounter{MaxMatrixCols}{30}

\newcommand{\B}{{\mathcal{B}}}
\newcommand{\wB}{{\mathcal{wB}}}
\newcommand{\e}{{\tilde{\varphi}}}
\newcommand{\x}{{\tilde{\xi}}}
\newcommand{\ps}{{\tilde{\psi}}}

\newcommand{\blue}[1]{{{\textcolor{blue}{#1}}}}

\newcommand{\hide}[1]{}
\providecommand{\keywords}[1]{\textbf{Keywords:} #1}
\begin{document}

\title{Modelling architectures of parametric weighted \\ component-based systems}
\author{Maria Pittou$^{a,}$\thanks{\protect\includegraphics[height=0.3cm]{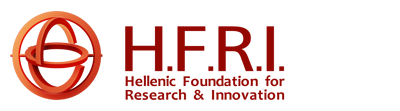}The research work was supported by the Hellenic Foundation for Research and Innovation (HFRI) under the HFRI PhD Fellowship grant (Fellowship Number: 1471).} \ and George Rahonis$^{b}$ \\Department of Mathematics\\Aristotle University of Thessaloniki\\54124 Thessaloniki, Greece\\$^{a}$mpittou@math.auth.gr, $^{b}$grahonis@math.auth.gr}
\date{}
\maketitle

\begin{abstract}The design of complex software systems usually lies
 in multiple coordinating components with an unknown number of instances. 
For such systems a main challenge is modelling efficiently their architecture that determines
 the topology and the interaction principles among the components.
To achieve well-founded design there is need to address the quantitative aspects of software architectures. 
In this paper we study the modelling problem of software architectures applied on parametric weighted component-based systems
where the parameter is the number of instances of each component. For this, we introduce a weighted first-order extended interaction
logic over a commutative semiring in order to serve as a modelling language for parametric quantitative architectures. 
We prove that the equivalence problem of formulas of that logic is decidable in the class (of subsemirings) of skew fields.
Moreover, we show that our weighted logic can efficiently describe well-known
parametric architectures with quantitative characteristics.

\

\noindent \keywords{Architecture modelling. Parametric weighted component-based systems. 
Weighted first-order extended interaction logic.}
\end{abstract}

\section{Introduction}
One of the key challenges in software systems' design is the specification of their architecture. Software architectures depict design principles applied to systems in order to characterize the communication rules and the underlying topology \cite{Ma:Co,Sh:Co}.  In such a setting systems' modelling is mainly component-based where the development lies in the coordination of multiple system components \cite{Bl:Al,Tr:Co}. Therefore, as the systems grow large and complex, architectures become important in the design process since they enforce systems to meet most of their requirements by construction. On the other hand, well-founded design involves not only the qualitative features but also the non-functional aspects of the systems \cite{Kw:Qu,No:Qu}, and hence of their architectures \cite{Pa:On, Si:Ri}. Qualitative modelling cannot capture the timing constraints, available resources, energy consumption, etc., for executing the interactions of a system's architecture. Such optimization and performance aspects require the study of architectures in the quantitative setting. 

In this paper we study architectures by a weighted first-order interaction logic. We consider architectures applied on a wide class of component-based systems, namely parametric weighted systems. Parametric systems consist of 
a finite number of component types with an
unknown number of instances \cite{Am:Pa,Bl:De}. Such systems are found in several applications including cyber-physical systems \cite{Li:St,Ta:In}, as well as distributed algorithms and communication protocols \cite{Ab:Pa,Bl:De,De:Pa,Es:Po}. Quantitative analysis is important for parametric architectures for handling communication issues like `the battery waste for a particular interaction is' 
in a system
with restricted power autonomy like sensor networks or `the required time for completing the interactions among specific components is' 
in time-critical network protocols. 

According to our best knowledge a logical directed description of parametric quantitative architectures, using weighted  first-order configuration logic, has been only considered in \cite{Pa:On} (cf. Section 2). The current paper extends the results of our recent work for modelling parametric architectures \cite{Pi:Ar}, in the quantitative setup. Specifically, in this work we model parametric systems by a classical weighted model, namely weighted labelled transition system. The communication is achieved by the labels, called ports, and is
defined by interactions, i.e., sets of ports. Each port is associated with a weight that represents the `cost' of its participation in an interaction. The weights of the ports range over the common algebraic structure of a commutative semiring $K$. We formalize parametric quantitative  architectures by weighted logic formulas over $K$. Similarly to the work of \cite{Pi:Ar}, our weighted logic has the additional attribute of respecting the execution order of the interactions as imposed by each architecture, a main feature of several important architectures found in applications. In particular, the contributions of this work are the following. 

(1) We introduce \emph{weighted Extended Propositional Interaction Logic} (wEPIL for short) over a finite set of ports,
and a commutative semiring $K$ for representing the weights. wEPIL extends \emph{weighted Propositional Interaction Logic} (wPIL for short) from \cite{Pa:On,Pa:We} with two new weighted operators, namely the weighted concatenation operator $\odot$ and weighted shuffle operator $ \varpi$. Intuitively these two operators allow us to encode the weights of consecutive and interleaving interactions in weighted component-based systems, respectively. We interpret wEPIL formulas as series defined over finite words and $K$. The letters of the words are interactions over the given set of ports. Clearly, the semantics of wEPIL formulas differs from the ones of wPIL formulas. The latter is interpreted over series from sets of interactions, whereas the first one over series of words of interactions, both with values in $K$. 

(2) We apply wEPIL formulas for modelling the architectures of weighted component-based systems with ordered interactions, and  specifically, we consider the architectures \emph{Blackboard} \cite{Co:Bl}, \emph{Request/Response} \cite{Da:Se}, and \emph{Publish/Subscribe} \cite{Eg:Pu}. For different instantiations of the semiring $K$ we derive alternative interpretations
for the resulting total cost of the allowed interactions, that corresponds to some quantitative characteristic.

(3) We introduce the first-order level of wEPIL, namely \emph{weighted First-Order Extended Interaction Logic} (wFOEIL for short). 
The syntax of wFOEIL is equipped with the syntax of wEPIL, the weighted existential and universal quantifiers (similar weighted quantifiers introduced firstly for weighted MSO logics in \cite{Dr:We}), and four new weighted quantifiers, namely the weighted existential and universal concatenation and shuffle quantifiers. The weighted existential and universal concatenation (resp. shuffle) quantifiers serve to compute the weight of the partial and whole participation of the instances of a weighted component type in sequentially (resp. interleaving) executed interactions. For the semantics of wFOEIL we consider triples consisting of 
a mapping defining the number of component instances in the parametric weighted system,
an assignment that attributes a
unique identifier to ports of each component instance, and a finite word of
interactions. Then, we interpret wFOEIL formulas as series from triples of the previous form to elements in $K$.

(4) We show that wFOEIL serves sufficiently for modelling interactions of parametric weighted component-based systems. Specifically, we provide examples of wFOEIL formulas for modelling the quantitative properties of concrete parametric architectures including \emph{Master/Slave} \cite{Ma:Co}, \emph{Star} \cite{Ma:Co}, \emph{Repository} \cite{Cl:Do}, and \emph{Pipes/Filters} \cite{Ga:An} where interactions are executed arbitrarily, as well as for the \emph{Blackboard}, \emph{Request/Response}, and \emph{Publish/Subscribe} architectures. 

(5) We prove the decidability of equivalence of wFOEIL formulas in doubly-exponential time provided that the weight structure is (a subesemiring of) a skew field. For this, we follow a methodology similar to the one considered in \cite{Pi:Ar}, and establish a doubly exponential translation of wFOEIL formulas to weighted automata. Then, we apply well-known decidability and complexity results for weighted automata over (subsemirings of) skew fields.

The structure of the paper is as follows. In Section $2$ we refer to related work and compare our results with the existing ones found in literature. In Section 3 we recall the notions of weighted component-based systems and weighted interactions. Then, in Section 4 we define the syntax and semantics of wEPIL and present examples of component-based systems whose architectures are defined by wEPIL formulas. In Section $5$ we introduce the syntax and semantics of our wFOEIL and provide examples of wFOEIL sentences for modelling specific architectures of parametric weighted component-based systems. Section $5$ studies the decidability results for wFOEIL sentences. Finally, in Conclusion, we discuss future work directions.

\section{Related work}
Existing work has recently investigated the architecture modelling problem of parametric systems mainly in the qualitative setting. 
In particular, in \cite{Ma:Co} the authors introduced a Propositional Configuration Logic (PCL for short) as a
modelling language for the description of architectures. They considered also first- and second-order
configuration logic applied for modelling parametric architectures (called styles of architectures in that
paper). PCL which is interpreted over sets of interactions has a nice property, namely for
every PCL formula one can efficiently construct an equivalent one in a special form called full
normal form. This implies the decidability of equivalence of PCL formulas in an automated
way using the Maude language. 

The weighted version of PCL over commutative semirings was studied in \cite{Pa:On,Pa:We}.  Soundness was proved for that logic and the authors obtained the decidability for the equivalence problem of its formulas.
Weighted PCL was applied for modelling the quantitative aspects of several common architectures. 
In \cite{Ka:We} the authors extended the work of \cite{Pa:On,Pa:We} and studied weighted PCL over a
product valuation monoid. That algebraic structure allows to compute the average cost as well as the maximum
cost among all costs occurring most frequently for executing the interactions within architectures.
The authors applied that logic to model
several weighted software architectures and proved that the equivalence problem of its formulas is decidable. In contrast to our setting, the work of \cite{Ka:We,Ma:Co,Pa:On, Pa:We} describes architectures with
PCL which though encodes no execution order of the interactions. Parametric weighted architectures were considered only in \cite{Pa:On} using the weighted first-order level of weighted PCL, namely weighted FOCL. Nevertheless, weighted FOCL considers no execution order of the interactions of parametric weighted architectures.

In \cite{Ko:Pa} a
first-order interaction logic (FOIL for short) was introduced to describe finitely many interactions, for parametric
systems modelled in BIP framework. FOIL was applied for modelling classical
architectures of parametric systems (such as Ring and Star) and considered in model checking results. 
In \cite{Bo:Ch} the authors introduced an alternative logic, namely monadic interaction logic (MIL for short)
to describe
rendezvous and broadcast communication of parametric component-based systems, and also presented a method for detecting deadlocks for these systems. Next, an interaction logic
with one successor (IL1S for short) was developed in \cite{Bo:St} as a modelling language for architectures of
parametric component-based systems. A method for checking deadlock freeness and mutual exclusion of parametric systems was also studied. 

Recently we introduced an extended propositional interaction logic (EPIL for short) and its
first-order level, namely first-order extended interaction logic (FOEIL for short) \cite{Pi:Ar}. FOEIL was applied for modelling well-known architectures of parametric component-based systems defined by labelled transition systems, and we proved the decidability of equivalence, satisfiability and validity of FOEIL sentences. In
comparison to logics of \cite{Bo:Ch,Bo:St,Ko:Pa,Ma:Co}, FOEIL not only described the permissible interactions characterizing each
architecture, but also encoded the execution order of implementing the interactions, an important feature of several architectures including Publish/Subscribe and Request/Response. 

Multiparty session types described efficiently communication protocols  and their interactions patterns (cf. for instance \cite{Ho:Mu,Hu:Fo}). A novel session type
system and an associated programming language were introduced in \cite{Ch:Pa} for the  description of multi-actor communication in parameterized protocols. The system modelled global types with exclusive and concurrent events as well as
their arbitrary reorderings using a shuffling operator, and used for static verification of asynchronous communication protocols. Though no architectures of systems were considered in \cite{Ch:Pa,Ho:Mu,Hu:Fo}. In \cite{De:Pa} the authors introduced a
type theory for multiparty sessions to globally specify parameterized
communication protocols whose processes carry data. They also developed decidable projection methods of parameterised global
types to local ones and proved type safety and deadlock freedom
for well-typed parameterized processes.
The framework of \cite{De:Pa} did not address the sequential and interleaving interactions of processes.

Some work for quantitative parametric systems was considered in \cite{An:Pa, Be:Pa, Es:Po, Fo:Th}. In
\cite{Es:Po} the authors studied population protocols, a specific class of parametric systems, modelled
by labelled transition systems with Markov chains semantics. Then, a decidability result was
obtained for the model checking problem of population protocols against linear-time specifi-
cations while undecidability was proved for the corresponding probabilistic properties. The work of \cite{Es:Po} does not consider though the systems' architecture in the design process. In \cite{Be:Pa, Fo:Th} the authors considered broadcast communication and cliques topology for networks
of many identical probabilistic timed processes, where the number of processes was a parameter.
Then, they investigated the decidability results of several qualitative parameterized verification problems
for probabilistic timed protocols.
In the subsequent work \cite{An:Pa} the authors extended broadcast protocols
and parametric timed automata and introduced a model of 
parametric timed broadcast
protocols with two different types of parameters,
namely the number of identical processes and the timing features. The
decidability of parametric decision problems were also studied for parametric timed broadcast
protocols in \cite{An:Pa}. In contrast to our framework, the topologies of the protocols studied in 
the above works, are not formalized by means of weighted logics. 
Moreover, we investigate the modelling problem of several more complicated quantitative parameteric
 architectures than those considered in \cite{An:Pa,Be:Pa, Fo:Th}.

The motivation to study parametric systems with quantitative features is also depicted in the recent work of \cite{Ho:Pr, Le:Fa}. 
In \cite{Le:Fa} the authors studied fair termination for parameterized probabilistic
concurrent systems modelled by Markov decision processes. They 
extended the symbolic framework 
of regular model checking for verifying parameterized
concurrent systems in the probabilistic setting. The authors developed a
fully-automatic method that was applied for distributed algorithms and systems in evolutionary biology. Moreover, the work of \cite{Ho:Pr} studied the equivalence problem for probabilistic parameterized systems using bisimulations.
For this, the first-order theory of universal
automatic structures was extended to a probabilistic setup for developing
a decidable automatic method 
for verifying probabilistic bisimulation for parameterized systems. Then, the framework was applied for
studying the anonymity property for the dining cryptographers and grades parameterized protocols.\hide{The recent works of \cite{Ho:Pr, Le:Fa} do not study the architecture modelling problem, though the indicate the need to study parametric systems with quantitative features.} Although the works of \cite{Ho:Pr, Le:Fa} apply their verification results on parametric protocols with ring or linear topologies, 
they do not focus on the investigation of a formal modelling framework for quantitative parametric architectures.

\section{Preliminaries}

\subsection{Notations}
For every natural number $n \geq 1$ we denote by $[n]$ the set $\{1, \ldots, n \}$. Hence, in the sequel, whenever we use the notation $[n]$ we always assume that $n \geq 1$. For every set $S$ we denote by $\mathcal{P}(S)$ the powerset of $S$. Let $A$ be an alphabet, i.e., a finite nonempty set. As usual we denote by $A^*$ the set of all finite words over $A$ and let $A^+ = A^* \setminus \{\varepsilon\}$ where $\varepsilon$ denotes the empty word. Given two words $w, u \in A^*$, the shuffle product $w \shuffle u$ of $w$ and $u$ is a language over $A$ defined by
$$w \shuffle u= \{w_1u_1 \ldots w_mu_m \mid w_1, \ldots, w_m, u_1, \ldots , u_m \in A^* \text{ and } w=w_1 \ldots w_m, u=u_1\ldots u_m \}.$$

\subsection{Semirings}
A \emph{semiring} $(K,+,\cdot,0,1)$\emph{ }consists of a set
$K,$ two binary operations $+$ and $\cdot$ and two constant elements $0$
and $1$ such that $(K,+,0)$ is a commutative monoid, $(K,\cdot,1)$ is a
monoid, multiplication $\cdot$ distributes over addition $+$, and $0 \cdot k=k\cdot 0=0$
for every $k\in K$. If the monoid $(K,\cdot,1)$ is commutative, then the
semiring is called commutative.\ The semiring is denoted simply by $K$ if the
operations and the constant elements are understood. The result of the empty
product as usual equals to $1$. If no confusion arises, we denote sometimes in the sequel the multiplication operation $\cdot$ just by juxtaposition.\hide{The semiring $K$ is called (additively) idempotent if
$k + k=k$ for every $k\in K$.} The following algebraic structures are well-known commutative semirings. \begin{itemize}
\item The semiring $(\mathbb{N},+,\cdot,0,1)$  of natural numbers,
\item the semiring  $(\mathbb{Q},+,\cdot,0,1)$ of rational numbers,
\item the semiring  $(\mathbb{R},+,\cdot,0,1)$ of real numbers,
\item the Boolean semiring $B=(\{0,1\},+,\cdot,0,1)$, 
\item the arctical or max-plus semiring $\mathbb{R}_{\max}=(\mathbb{R}_{+}\cup\{-\infty\},\max,+,-\infty,0)$ where $\mathbb{R}_{+}=\{r\in\mathbb{R}\mid r\geq0\}$,
\item the tropical or min-plus semiring $\mathbb{R}_{\min}=(\mathbb{R}_{+}\cup\{\infty\},\min,+,\infty,0)$,
\item the Viterbi semiring $(\left[  0,1\right]  ,\max,\cdot,0,1)$ used in probability theory, 
\item every bounded distributive lattice with the operations $\sup$ and $\inf$, in particular the fuzzy semiring $F=([0,1],\max,\min,0,1)$.
\end{itemize}

\hide{\noindent All the aforementioned semirings are commutative, and all but the
first two are idempotent.}
A semiring $(K,+,\cdot , 0,1)$ is called a \emph{skew field} if the $(K,+,0)$ and $(K\setminus \{0\}, \cdot, 1)$ are groups. For instance, $(\mathbb{Q}, +, \cdot,0, 1)$ and  $(\mathbb{R}, +, \cdot, 0, 1)$ are skew fields, and more generally every field is a skew field.

 A  \emph{formal series} (or simply \emph{series}) \emph{over}
$A^*$ \emph{and} $K$ is a mapping $s:A^*\rightarrow K$. The \emph{support of} $s$ is the set $\mathrm{supp}(s)=\{w \in A^* \mid s(w) \neq 0  \}$. A series with finite support is called  a \emph{polynomial}. The \emph{constant series}
$\widetilde{k}$ ($k\in K$) is defined, for every $w\in A^*$,$\ $by
$\widetilde{k}(w)=k$. We denote by $K\left\langle \left\langle A^*\right\rangle
\right\rangle $ the class of all series over $A^*$ and $K$, and by $K\left\langle  A^*
\right\rangle $ the class of all polynomials over $A^*$ and $K$. 
Let $s,r\in K\left\langle \left\langle A^*\right\rangle \right\rangle $ and
$k\in K$. The \emph{sum} $s\oplus r$, the \emph{products with scalars} $ks$
and $sk$, and the\emph{ Hadamard product} $s \otimes r$\ are series in $K\left\langle \left\langle A^*\right\rangle \right\rangle $ and defined
elementwise, respectively  by $ s \oplus r(w)=s(w) + r(w), (ks)(w)=k \cdot s(w), (sk)(w)=s(w)\cdot k, s\otimes r(w)=s(w)\cdot r(w)$  for every
$w\in A^*$. It is a folklore result that the structure $\left(  K\left\langle
\left\langle A^*\right\rangle \right\rangle ,\oplus,\otimes,\widetilde
{0},\widetilde{1}\right)$  is a semiring. Moreover, if $K$ is commutative,
then $\left(  K\left\langle \left\langle A^*\right\rangle \right\rangle
\oplus,\otimes,\widetilde{0},\widetilde{1}\right)  $ is also commutative. The \emph{Cauchy product }$s\odot
r\in K\left\langle \left\langle A^{\ast}\right\rangle \right\rangle $ is determined by $s\odot r(w)=\sum\nolimits_{w=w_1w_2}s(w_1)r(w_2)$ for every $w\in A^{\ast}$,  whereas the \emph{shuffle product} $s\varpi r \in K\left\langle \left\langle A^{\ast}\right\rangle \right\rangle $ is defined by $s\varpi r(w)=\sum\nolimits_{w\in w_1\shuffle w_2}s(w_1)r(w_2)$ for every $w\in A^{\ast}$. \hide{Finally, let $L\subseteq A^*$. Then, the \emph{characteristic series} $1_L \in K\left\langle \left\langle A^{\ast}\right\rangle \right\rangle $ \emph{of} $L$ is determined by $(1_L,w)=1$ if $w \in L$, and $(1_L,w)=0$ otherwise for every $w \in A^*$.}

\begin{quote}
\emph{Throughout the paper }$(K, +, \cdot, 0,1)$\emph{ will denote a commutative semiring.}
\end{quote}

\subsection{Extended propositional interaction logic}
In this subsection we recall from \cite{Pi:Ar} the extended propositional interaction logic. With that logic, we succeeded to describe the order of execution of interactions required by specific architectures. We need to recall firstly propositional interaction logic (PIL for short) (cf. \cite{Bo:Ch,Bo:St,Ma:Co}).  
 
Let $P$ be a finite nonempty set of \emph{ports}. We let $I(P)=\mathcal{P}%
(P)\setminus\{\emptyset\}$ for the set of interactions over $P$.\hide{and  $\Gamma(P)=\mathcal{P}(I(P))\setminus \{\emptyset\}$.Every set $a\in I(P)$ is called an
\emph{interaction} and every set $\gamma\in \Gamma(P)$ is called a set of interactions.}
Then, the syntax of PIL formulas $\phi$ over $P$\ is given by the grammar
$$
\phi::=\mathrm{true}\mid p\mid\neg\phi\mid\phi\vee\phi 
$$
where $p\in P$. 

We set $\neg(\neg\phi)=\phi$ for every
PIL formula $\phi$ and $\mathrm{false}=\neg\mathrm{true}$. As usual the conjunction and the implication 
of two PIL formulas $\phi,\phi^{\prime}$ over $P$ are defined respectively, by $\phi\wedge
\phi^{\prime}:=\neg{\left(  \neg\phi\vee\neg\phi^{\prime}\right)  }$ and $\phi \rightarrow \phi': = \neg \phi \vee \phi'$.  
PIL formulas are interpreted over interactions in $I(P)$. More precisely, for every PIL formula  $\phi$ and $a \in I(P)$ we define the satisfaction relation 
$a\models_{\mathrm{PIL}}\phi$ by induction on the structure of $\phi$ as follows:
\begin{itemize}
\item[-] $a\models_{\mathrm{PIL}} \mathrm{true}$,

\item[-] $a\models_{\mathrm{PIL}} p$  \ \ iff \ \  $p \in a$,

\item[-] $a\models_{\mathrm{PIL}} \neg\phi$ \ \ iff \ \ $a \not \models_{\mathrm{PIL}} \phi$,

\item[-] $a\models_{\mathrm{PIL}} \phi_1 \vee \phi_2$ \ \ iff \ \ $a \models_{\mathrm{PIL}} \phi_1$ or $a \models_{\mathrm{PIL}} \phi_2$.
\end{itemize}

\noindent Two PIL formulas $\phi, \phi'$ are called equivalent, and we denote it by $\phi \equiv \phi'$, whenever $a \models \phi$ iff $a \models \phi'$ for every $a \in I(P)$. A PIL formula $\phi$ is called a \emph{monomial over} $P$ if it is of the form $p_1\wedge \ldots \wedge  p_l$, where $l\geq 1$ and $p_{\lambda}\in P$ or $p_{\lambda}=\neg p'_{\lambda}$ with $p'_{\lambda}\in P$, for every $\lambda\in [l]$. 
For every interaction $a=\lbrace p_1,\ldots,p_l\rbrace \in I(P)$ we consider the monomial $\phi_a =p_1\wedge \ldots \wedge  p_l$. Then, it trivially holds $a\models_{\mathrm{PIL}} \phi_a$, and  for every $a, a' \in I(P)$ we get $a=a'$ iff $\phi_{a} \equiv \phi_{a'} $. \hide{We can describe a set of interactions as a disjunction of PIL formulas. More precisely, let $\gamma=\lbrace a_1,\ldots, a_m\rbrace$ be a set of interactions, where $a_{\mu}=\left \{ p_1^{(\mu)},\ldots,p^{(\mu)}_{l_{\mu}}\right \} \in I(P)$ for every $\mu\in [m]$. Then, the PIL formula $\phi_{\gamma} $ of $\gamma$ is $\phi_{\gamma}=\phi_{a_1}\vee \ldots \vee \phi_{a_m}$, i.e., 
$\phi_{\gamma}=\bigvee\limits_{\mu\in [m]}\bigwedge\limits_{\lambda\in [l_{\mu}]}p_{\lambda}^{(\mu)}$. 

We say that a PIL formula $\phi$ is in \emph{disjunctive normal form} (DNF for short) if 
$$\phi= \bigvee\limits_{\mu\in [m]}\bigwedge\limits_{\lambda\in [l_{\mu}]}p_{\lambda}^{(\mu)}$$
 where $p_1^{(\mu)}\wedge \ldots \wedge p^{(\mu)}_{l_{\mu}}$ is a monomial over $P$, for every $\mu\in [m]$. \\
It is well known that for every Boolean formula (and hence PIL formula $\phi$ over $P$) we can effectively construct an equivalent one in DNF \cite{Da:In}.}

Next we recall the \emph{extended propositional interaction logic} (EPIL for short) (cf. \cite{Pi:Ar}).
Let $P$ be a finite set of ports. The syntax of EPIL
formulas $\varphi$ over $P$\ is given by the grammar
\begin{align*}
\zeta & ::=\phi \mid \zeta * \zeta \\
\varphi & ::=\zeta \mid \neg \zeta \mid \varphi\vee\varphi \mid \varphi \wedge \varphi \mid \varphi * \varphi
\mid \varphi \shuffle \varphi
\end{align*}
where $\phi $ is a PIL formula over $P$, $*$ is the concatenation operator, and $\shuffle$ is the shuffle operator.

The binding strength, in decreasing order, of the operators in EPIL is the following: negation, shuffle, concatenation, conjunction, and disjunction.  The reader should notice that we consider a restricted  use of  negation in the syntax of EPIL formulas. Specifically, negation is permitted in PIL formulas and EPIL formulas of type $\zeta$. The latter ensures exclusion of erroneous interactions in architectures.  The restricted use of negation has no impact to description of models characterized by EPIL formulas since most of  known architectures can be described by sentences in our EPIL. Furthermore, it contributed to a reasonable complexity
of translation of first-order extended interaction logic formulas to finite automata. This in
turn implied the decidability of equivalence, satisfiability, and validity of first-order extended
interaction logic sentences (cf. \cite{Pi:Ar}).
  
For the satisfaction of EPIL formulas we consider finite words $w$ over $I(P)$. Intuitively, a word $w$ encodes each of the distinct interactions within a system as a letter. Moreover, the position of each letter in $w$ depicts the order in which the corresponding interaction is executed in the system, in case there is an order restriction.

\begin{definition}\label{sem-epil}
Let $\varphi$ be an \emph{EPIL} formula over $P$ and $w \in I(P)^*$. If $w=\varepsilon$ and $\varphi=\mathrm{true}$, then we set $w \models \mathrm{true}$. If $w \in I(P)^+$, then we define the satisfaction relation 
$w\models\varphi$ by induction on the structure of $\varphi$ as follows:
\begin{itemize}
\item[-] $w\models \phi$  \ \ iff \ \ $ w \models_{\mathrm{PIL}} \phi$,

\item[-] $w \models \zeta_1 * \zeta_2$  \ \ iff \ \ there exist $w_1,w_2 \in I(P)^*$ such that  $w=w_1w_2$ and $w_i \models\zeta_i$ for $i=1,2$,

\item[-] $w \models \neg \zeta$ \ \ iff \ \ $w \not \models \zeta$,

\item[-] $w\models \varphi_1 \vee \varphi_2$ \ \ iff \ \ $w \models \varphi_1$ or $w \models \varphi_2$,

\item[-] $w\models \varphi_1 \wedge \varphi_2$ \ \ iff \ \ $w \models \varphi_1$ and $w \models \varphi_2$,

\item[-] $w \models \varphi_1 * \varphi_2$  \ \ iff \ \ there exist $w_1,w_2 \in I(P)^*$ such that  $w=w_1w_2$ and $w_i \models\varphi_i$ for $i=1,2$,

\item[-] $w \models \varphi_1 \shuffle \varphi_2$  \ \ iff \ \ there exist $w_1,w_2 \in I(P)^*$ such that  $w\in w_1\shuffle w_2$ and $w_i \models\varphi_i$ for $i=1,2$.

\end{itemize}
\end{definition}

If $\varphi=\phi$ is a PIL formula, then $w \models \phi$ implies that $w$ is a letter in $I(P)$. Two EPIL formulas $\varphi, \varphi'$ are called \emph{equivalent}, and we denote it by $\varphi \equiv \varphi' $, whenever $w \models \varphi$ iff $w \models \varphi'$ for every $w \in I(P)^*$.

\subsection{Component-based systems}
\label{comp_based_system}
We recall the concept of component-based systems which are comprised of a finite number of components of the same or different type. In our set up, components  are defined by labelled transition systems (LTS for short) like in several well-known component-based modelling frameworks including BIP \cite{Bl:Al,Tr:Di}, REO \cite{Am:RE}, X-MAN \cite{He:Co}, and B \cite{Al:Th}. We use the terminology of BIP framework for the basic notions and definitions, though we focus only on the communication patterns of components building a component-based system.  Communication among components is achieved through their corresponding interfaces. For an LTS, its interface is the associated set of labels, called ports. Then, communication of components is defined by interactions, i.e., sets of ports, or equivalently by PIL formulas. In \cite{Pi:Ar} we replaced PIL formulas by EPIL formulas. Hence, we succeeded to describe the execution order of interactions required by the underlying architecture of every component-based system.  

\begin{definition}\label{at_comp}
 An \emph{atomic component} is an \emph{LTS} $B=(Q,P,q_0,R)$ where $Q$ is a finite
set of \emph{states}, $P$ is a finite set of \emph{ports}, $q_0$ is the \emph{initial state} and $R\subseteq Q \times P \times Q$ is the set of \emph{transitions}.
\end{definition}

For simplicity we assume that every port in $P$ occurs in at most one transition in $R$. This simplification has been also considered by other authors (cf. for instance \cite{Bo:Ch,Bo:St}). In the sequel, we call an atomic component $B$ a \emph{component}, whenever we deal with several atomic components. For every set $\mathcal{B} = \{B(i) \mid   i \in [n] \} $ of components, with $B(i)=(Q(i),P(i),q_{0}(i),R(i))$, $i \in [n]$, we consider in the paper, we assume that  $(Q(i) \cup P(i))\cap (Q(i') \cup P(i')) = \emptyset $ 
for every $1 \leq i\neq i' \leq n$. 

Let $\mathcal{B} = \{B(i) \mid i \in [n]   \} $ be a set of components. We let $P_{\mathcal{B}} =\bigcup_{i \in [n]}  P(i)$ comprising all ports of the elements of $\mathcal{B}$. Then an \emph{interaction of} $\B$ is an  interaction $a \in I(P_{\mathcal{B}})$ such that  $\vert a \cap P(i)\vert \leq 1$, for every $i \in [n]$. If $p \in a$, then we say that $p$ is active in $a$. We denote by $I_{\mathcal{B}}$ the set of all interactions of $\mathcal{B}$, i.e., 
$$I_{\mathcal{B}} = \left \{ a \in I(P_{\mathcal{B}}) \mid   \vert a \cap P(i)\vert \leq 1 \text{ for every } i \in [n] \right\}.$$

\begin{definition} \label{BIP-def}A \emph{component-based system} is a pair $(\B, \varphi)$ where $\B=\{B(i) \mid i \in [n]  \}$ is a set of components, with $B(i)=(Q(i),P(i),q_{0}(i),R(i))$ for every $i \in [n]$, and $\varphi$  is an \emph{EPIL} formula over  $P_{\B}$.
\end{definition}

In the above definition the EPIL formula $\varphi$ is defined over the set of ports $P_{\B}$ and  is interpreted over words in $I_{\B}^*$.

\section{Weighted EPIL and component-based systems}
In this section, we introduce the notion of weighted EPIL over a set of ports $P$ and the semiring $K$. Furthermore, we  define weighted component-based systems and provide examples of architectures with quantitative characteristics.

\begin{definition}
Let $P$ be a finite set of ports. Then the syntax of \emph{weighted EPIL} (\emph{wEPIL} for short) \emph{formulas over} $P$ \emph{and} $K$ is given by the grammar
$$\tilde{\varphi}::= k \mid \varphi \mid \tilde{\varphi}_1 \oplus  \tilde{\varphi}_2 \mid \tilde{\varphi}_1 \otimes  \tilde{\varphi}_2 \mid \tilde{\varphi}_1 \odot  \tilde{\varphi}_2 \mid \tilde{\varphi}_1 \varpi  \tilde{\varphi}_2  $$ 
where $k \in K$, $\varphi$ is an \emph{EPIL} formula over $P$, and $\oplus$, $\otimes$, $\odot$, and $\varpi$ are the weighted disjunction, conjunction, concatenation, and shuffle operator, respectively.
\end{definition}

If $\tilde{\varphi}$ is composed by elements in $K$ and PIL formulas connected with $\oplus$ and $\otimes$ operators only, then it is called also a weighted PIL (wPIL for short) formula over $P$ and $K$ \cite{Pa:On,Pa:We} and it will be denoted also by $\tilde{\phi}$. 

For the semantics of wEPIL formulas  we consider finite words $w$ over $I(P)$ and interpret wEPIL formulas as series in $K \left \langle \left \langle I(P)^* \right \rangle \right \rangle $.

\begin{definition}\label{wsem-epil}
Let $\e$ be a \emph{wEPIL} formula over $P$ and $K$. The semantics of $\e$ is a series $\left \Vert \e \right \Vert \in K \left \langle \left \langle I(P)^* \right \rangle \right \rangle$.  For every $w \in I(P)^*$ the value $\left \Vert \e \right \Vert (w)$ is defined inductively on the structure of $\e$ as follows: 
\begin{itemize}
\item[-] $\left \Vert k \right \Vert(w) = k$,

\item[-] $\left \Vert \varphi \right \Vert(w) = \left\{
\begin{array}
[c]{rl}%
1 & \textnormal{ if }w\models \varphi\\
0 & \textnormal{ otherwise}%
\end{array}
,\right.  $

\item[-] $\left \Vert \e_1 \oplus \e_2 \right \Vert(w)=\left \Vert \e_1  \right \Vert(w)+ \left \Vert  \e_2 \right \Vert(w)$,

\item[-] $\left \Vert \e_1 \otimes \e_2 \right \Vert(w)=\left \Vert \e_1  \right \Vert(w) \cdot \left \Vert  \e_2 \right \Vert(w)$,

\item[-] $\left \Vert \e_1 \odot \e_2 \right \Vert(w)=\sum\limits_{w=w_1w_2} ( \left \Vert \e_1  \right \Vert(w_1) \cdot \left \Vert  \e_2 \right \Vert(w_2))$,

\item[-] $\left \Vert \e_1 \varpi \e_2 \right \Vert(w)=\sum\limits_{w \in w_1 \shuffle w_2} ( \left \Vert \e_1  \right \Vert(w_1) \cdot \left \Vert  \e_2 \right \Vert(w_2))$.
\end{itemize}
\end{definition}

Next we define weighted component-based systems. For this, we introduce the notion of a weighted atomic component.

\begin{definition} A \emph{weighted atomic component over} $K$ is a pair $wB=(B, wt)$ where  $B=(Q,P,q_0,R)$ is an atomic component and $wt:R\rightarrow K$ is a weight mapping.
\end{definition}

Since every port in $P$ occurs in at most one transition in $R$, we consider, in the sequel, $wt$ as a mapping $wt:P \rightarrow K$.
If a port $p\in P$ occurs in no transition, then we set $wt(p)=0$.

We call a weighted atomic component $wB$ over $K$ a \emph{weighted component}, whenever we deal with several weighted atomic components and the semiring $K$ is understood. Let $w\B = \{wB(i) \mid   i \in [n] \} $ be a set of weighted components where $wB(i)=(B(i), wt(i))$ with $B(i)=(Q(i),P(i),q_{0}(i), R(i)) $ for every $i \in [n]$. The set of ports and the set of interactions of $w\B$  are the sets $P_{\mathcal{B}}$ and $I_{\mathcal{B}}$ respectively, of the underlying set of components $\B =\{B(i) \mid i \in [n] \}$. Let $a=\{p_{j_1},\ldots,p_{j_m}\}$ be an interaction in $I_{\B}$ such that $p_{j_l}\in P(j_l)$ for every $l\in [m]$. Then, the weighted monomial $\tilde{\phi}_a$ of $a$ is given by the wPIL formula 
\begin{align*}
\tilde{\phi}_a &=( wt(j_1)(p_{j_1})\otimes p_{j_1}) \otimes \ldots \otimes (wt(j_m)(p_{j_m}) \otimes p_{j_m}) \\
& \equiv (wt(j_1)(p_{j_1})\otimes   \ldots \otimes wt(j_m)(p_{j_m}))  \otimes  (p_{j_1}\otimes\ldots     \otimes p_{j_m}) \\
& \equiv (wt(j_1)(p_{j_1})\otimes  \ldots \otimes wt(j_m)(p_{j_m}))  \otimes  (p_{j_1} \wedge \ldots  \wedge  p_{j_m})
\end{align*}
where the first equivalence holds since $K$ is commutative and the second one since $p \otimes p' \equiv p\wedge p'$ for every $p,p' \in P_{\B}$.

\begin{definition}
A \emph{weighted component-based system (over} $K$\emph{)} is a pair $(w\B, \tilde{\varphi})$ where $w\B =\{wB(i) \mid i\in [n] \}$ is a set of weighted components and $\e$ is a \emph{wEPIL} formula over $P_{\B}$ and $K$. 
\end{definition}

We should note that, as in the unweighted case,  the wEPIL formula $\e$ is defined over the set of ports $P_{\B}$ and  is interpreted as a series in $K \left \langle \left \langle I_{\B}^* \right \rangle \right \rangle $.

\hide{
Let $\gamma= \{a_1, \ldots, a_m \}$ be a set of interactions in $I_{\B}$. Then the wPIL formula $\e_{\gamma}$ is defined by $\e_{\gamma} = \e_{a_1} \oplus \ldots \oplus \e_{a_m}$, where $\e_{a_i}$ is defined for every $i \in [n]$ as follows. Let us assume that $a_i=\{p_{j_1}, \ldots , p_{j_r}\}$ where $j_1, \ldots , j_r$ are pairwise distinct and $p_{j_l} \in P(j_l)$ for every $\l \in [r]$. Then $\e_{a_i}=wt(j_1)(p_{j_1})\odot p_{j_1}\odot \ldots \odot wt(j_r)(p_{j_r})\odot p_{j_r}$. Since the semiring $K$ is commutative $\e_{a_i}$ is equivalent to  $wt(j_1)(p_{j_1}) \odot  \ldots \odot wt(j_r)(p_{j_r}) \odot(p_{j_1} \wedge \ldots \wedge p_{j_r})$, i.e., to $wt(j_1)(p_{j_1}) \odot  \ldots \odot wt(j_r)(p_{j_r}) \odot \varphi_{a_i}$. By definition of the monomials $\varphi_{a_i}$, $i \in [n]$ we get that $\e_{\gamma}(a_i)=  \sum\limits_{a' \in \gamma, a' \subseteq a_i}\e_{a'}(a_i)$. Nevertheless, in the weighted formulas for well-known architectures, as they are presented in Subsection \ref{chap:examp}, the case $a' \in \gamma, a' \subseteq a_i$ does not occur. Hence, in the sequel, we assume that the set $\gamma \in \Gamma_{\B}$ of interactions, in a wEPIL formula $\e_{\gamma}$ assigned to a  set $\wB$ of weighted components, satisfies the statement: if $a, a' \in \gamma$ with $a \neq a'$, then $a\nsubseteq a'$ and $a' \nsubseteq a$. In general if this is not the case, then we can replace in $\e_{\gamma}$ every formula $\e_a$ by its corresponding full monomial. For instance, keeping the notations for $a_i$ we should write $\e_{a_i}=wt(j_1)(p_{j_1})\odot p_{j_1}\odot \ldots \odot wt(j_r)(p_{j_r})\odot p_{j_r} \odot \left(\bigwedge\limits_{p \notin a}\neg p \right) $.}

\subsection{Examples of architectures described by wEPIL formulas}    
In this subsection,  we present three examples of weighted component-based models whose architectures have ordered interactions encoded by wEPIL formulas. We recall from \cite{Pi:Ar} the following macro EPIL formula. Let $P=\{p_1, \ldots , p_n\}$ be a set of ports. Then, for  $p_{i_1}, \ldots , p_{i_m} \in P$ with $m <n$ we let 
$$\#(p_{i_1} \wedge \ldots \wedge p_{i_m})::=p_{i_1}\wedge \ldots \wedge p_{i_m} \wedge \bigwedge_{p \in P \setminus \{p_{i_1}, \ldots, p_{i_m}\}}\neg p.$$ 
Let us now assume that we assign a  weight $k_{i_l} \in K$ to $p_{i_l}$ for every $l \in [m]$. We define the subsequent macro wEPIL formula 
 $$\#_w(p_{i_1} \otimes \ldots \otimes p_{i_m})::=(k_{i_1} \otimes p_{i_1}) \otimes \ldots \otimes (k_{i_m} \otimes p_{i_m}) \otimes \bigwedge_{p \in P \setminus \{p_{i_1}, \ldots, p_{i_m}\}}\neg p. $$
Then, we get \begin{align*}
\#_w(p_{i_1} \otimes \ldots \otimes p_{i_m})& \equiv (k_{i_1} \otimes \ldots \otimes k_{i_m})   \otimes (p_{i_1} \otimes \ldots \otimes p_{i_m}) \otimes \bigwedge_{p \in P \setminus \{p_{i_1}, \ldots, p_{i_m}\}}\neg p \\
& \equiv (k_{i_1} \otimes \ldots \otimes k_{i_m})   \otimes \bigg((p_{i_1} \wedge \ldots \wedge p_{i_m}) \wedge \bigwedge_{p \in P \setminus \{p_{i_1}, \ldots, p_{i_m}\}}\neg p \bigg)\\
& = (k_{i_1} \otimes \ldots \otimes k_{i_m})   \otimes \#(p_{i_1} \wedge \ldots \wedge p_{i_m}).
\end{align*}

\noindent Clearly the above macro formula $\#_w(p_{i_1} \otimes \ldots \otimes p_{i_m})$ depends on the values $k_{i_1}, \ldots, k_{i_m}$. Though, in order to simplify our wEPIL formulas, we make no special notation about this. If the macro formula is defined in a weighted component-based system, then the values $k_{i_1}, \ldots, k_{i_m}$ are unique in the whole formula. 

\begin{example}\label{ex_b_blackboard}\textbf{\emph{(Weighted Blackboard)}}
Blackboard architecture is applied to multi-agent systems 
for solving problems with nondeterministic strategies that result from multiple partial solutions (\hide{cf. Chapter 2 in \cite{Bu:Pa},}\cite{Co:Bl}).
Several applications are based on a blackboard architecture, 
including 
planning and scheduling (cf. \cite{La:Tw,St:Cr}), 
artificial intelligence (cf. \cite{Co:Bl}) and web applications \cite{Li:Vi,Me:We}. Blackboard architecture involves three component types, one blackboard component, one controller
component and the knowledge sources components \cite{Bu:Pa,Co:Bl}. Blackboard is a global data store that presents the state of the problem to be solved. 
Knowledge sources, simply called sources, are expertised agents that provide 
partial solutions to the given problem. Knowledge sources are independent and do not know about the existence of other sources.\hide{In order to offer their solutions, the sources should ensure
that the necessary information (if any) exists on the blackboard.
For this, sources that lie in the library are informed about the current data stored in the blackboard.
Though, the knowledge sources are interested only in specific
information found on blackboard in order to provide their solutions. In order for the sources not to scan all the
existing contributions, the blackboard is responsible for informing the sources whenever the information of
their interest is available.}Whenever
there is sufficient information for a source to provide its partial solution, the corresponding
source is triggered i.e., is keen to write on the blackboard.\hide{ Only triggered sources can write new data on the blackboard.}  Since multiple sources are triggered and compete to provide their solutions, a controller component is used
to resolve any conflicts. Controller accesses both the blackboard to inspect the available data and the sources to schedule them so that they execute their solutions on the blackboard.  

Consider a weighted component-based system  $(w\B, \tilde{\varphi})$ with the Blackboard architecture and three sources. 
Therefore, we have  $w\B=\lbrace wB(1), wC(1),wS(1),wS(2), wS(3)\rbrace$ referring to blackboard,  controller, and the three sources
weighted components, respectively. \hide{The set of ports of each weighted component is $P(1)=\lbrace p_d,p_a\rbrace,
 P(2)=\lbrace p_r,p_l,p_e\rbrace,$ $P(3)=\lbrace p_{n_1},p_{t_1},p_{w_1} \rbrace$, $P(4)=\lbrace p_{n_2},p_{t_2},p_{w_2} \rbrace$, and $P(5)=\lbrace p_{n_3},p_{t_3},p_{w_3} \rbrace$.}Figure \ref{wb_blackboard} depicts a possible execution of the permissible interactions in the system. The weight associated with each port in the system is shown at the outside of the port. Blackboard component has two ports $p_d,p_a$ to declare the state of the problem and add the new data as obtained by a source, respectively.
Sources  have three ports $p_{n_i},p_{t_i},p_{w_i}$, for $i=1,2,3$, for being notified about the existing data
on the blackboard, the trigger of the source, and for writing the partial solution on the blackboard, respectively. 
Controller has three ports, namely $p_r$ used to record blackboard data, $p_l$ for the log process of
 triggered sources, and $p_e$ for their execution to  blackboard. Here we assume that all knowledge sources are triggered, i.e., that all available sources participate in the architecture. The interactions range over $I_{\B}$ and the \emph{wEPIL} formula $\e$ for the weighted Blackboard architecture is 
\begin{multline*}\e=\#_w(p_d\otimes p_r) \odot \bigg(\#_w(p_d\otimes p_{n_1})\varpi \#_w(p_d\otimes p_{n_2})\varpi \#_w(p_d\otimes p_{n_3}) \bigg) \odot \\ \bigg(\e_1\oplus \e_2\oplus \e_3\oplus(\e_1 \varpi \e_2) \oplus (\e_1\varpi \e_3)\oplus (\e_2\varpi \e_3)\oplus (\e_1\varpi\e_2\varpi\e_3)\bigg)
\end{multline*}
where 
$$\e_i= \#_w(p_l\otimes p_{t_i}) \odot \#_w(p_e\otimes p_{w_i}\otimes p_a)$$
for $i=1,2,3$. The first \emph{wPIL} subformula expresses the cost for the connection of the blackboard and controller. The \emph{wEPIL} subformula between the two weighted concatenation operators represents the cost of the connection of the three  sources to blackboard in order to be informed for existing data. The last part of $\e$ captures the cost of applying the connection of some of the three sources with controller and blackboard for the triggering and writing process. 

\noindent Let $w_1=\lbrace p_d, p_r\rbrace\lbrace p_d, p_{n_1}\rbrace\lbrace p_d, p_{n_2}\rbrace\lbrace p_d, p_{n_3}\rbrace\lbrace p_l, p_{t_2}\rbrace\lbrace p_l, p_{t_3}\rbrace\lbrace p_e, p_{w_2},p_a\rbrace\lbrace p_e, p_{w_3},p_a\rbrace$ and $w_2=\lbrace p_d, p_r\rbrace\lbrace p_d, p_{n_3}\rbrace\lbrace p_d, p_{n_1}\rbrace\lbrace p_d, p_{n_2}\rbrace\lbrace p_l, p_{t_3}\rbrace\lbrace p_e, p_{w_3},p_a\rbrace$ so that in $w_1$ the source $wS(2)$ is triggered and writes data before the third source $wS(3)$, and in $w_2$ only source $wS(3)$ is triggered and writes data. The values $\left \Vert \e \right \Vert(w_1)$ and $\left \Vert \e \right \Vert(w_2)$ represent the cost for executing the interactions with the order encoded by $w_1$ and $w_2$, respectively. Then, $\left \Vert \e \right \Vert(w_1)+\left \Vert \e \right \Vert(w_2)$  is the `total' cost for implementing $w_1$ and $w_2$. For instance, in 
the min-plus semiring the value $\min \lbrace \left \Vert \e \right \Vert(w_1),\left \Vert \e \right \Vert(w_2)\rbrace$ gives information for
the communication with the minimum cost. On the other hand, in Viterbi semiring, we get
the value $\max \lbrace \left \Vert \e \right \Vert(w_1),\left \Vert \e \right \Vert(w_2)\rbrace$ which refers to the communication with the maximum
probability to be executed.

\definecolor{harlequin}{rgb}{0.25, 1.0, 0.0}
\definecolor{ao}{rgb}{0.0, 0.0, 1.0}

\definecolor{darkorange}{rgb}{1.0, 0.55, 0.0}
\definecolor{harlequin}{rgb}{0.25, 1.0, 0.0}
\definecolor{ao}{rgb}{0.0, 0.0, 1.0}

\begin{figure}[h]
\centering
\resizebox{0.65\linewidth}{!}{
\begin{tikzpicture}[>=stealth',shorten >=1pt,auto,node distance=1cm,baseline=(current bounding box.north)]
\tikzstyle{component}=[rectangle,ultra thin,draw=black!75,align=center,inner sep=9pt,minimum size=1.5cm,minimum height=3.4
cm,minimum width=2.6cm]
\tikzstyle{port}=[rectangle,ultra thin,draw=black!75,minimum size=6mm]
\tikzstyle{bubble} = [fill,shape=circle,minimum size=5pt,inner sep=0pt]
\tikzstyle{type} = [draw=none,fill=none] 

\node [component,align=center] (a1)  {};
\node [port] (a2) [above right= -1.6cm and -0.62cm of a1]  {$p_d$};
\node[bubble] (a3) [above right=-1.35cm and -0.08cm of a1]   {};

\node[]       (k1) [left= 0.4cm of a3]{$k_d$};

\node [port] (a4) [above right= -2.8cm and -0.63cm of a1]  {$p_a$};
\node[bubble] (a5) [above right=-2.57cm and -0.08cm of a1]   {};

\node[]       (k2) [left= 0.4cm of a5]{$k_a$};

\node[type] (a6) [above=-0.45cm of a1]{Blackb.};
\node            [below=-0.1cm of a6]{$wB(1)$};

\tikzstyle{control}=[rectangle,ultra thin,draw=black!75,align=center,inner sep=9pt,minimum size=1.5cm,minimum height=2.3cm,minimum width=3.5cm]

\node [control,align=center] (b1)  [below right=8cm and 2.5cm of a1]{};
\node [port] (b2) [above left=-0.605cm  and -1.05 of b1]  {$p_r$};
\node[bubble] (b3) [above left=-0.1cm and -0.35cm of b2]   {};

\node[]       (k3) [below= 0.4cm of b3]{$k_r$};

\node [port] (b4) [right=0.4 of b2]  {$p_l$};
\node[bubble] (b5) [above left=-0.1cm and -0.35cm of b4]   {};

\node[]       (k4) [below= 0.4cm of b5]{$k_l$};

\node [port] (b6) [right=0.4 of b4]  {$p_e$};
\node[bubble] (b7) [above left=-0.1cm and -0.35cm of b6]   {};

\node[]       (k5) [below= 0.4cm of b7]{$k_e$};

\node[type] (b8) [below=-1.2cm of b1]{Contr.};
\node[]      [below=-0.1cm of b8]{$wC(1)$};

\tikzstyle{source}=[rectangle,ultra thin,draw=black!75,align=center,inner sep=9pt,minimum size=1.5cm,minimum height=3.5cm,minimum width=3cm]

\tikzstyle{port}=[rectangle,ultra thin,draw=black!75,minimum size=6mm,inner sep=3pt]

\node [source,align=center] (c1) [above right=-1cm and 9cm of a1] {};
\node [port] (c2) [above left= -0.95cm and -0.75cm of c1]  {$p_{n_1}$};
\node[bubble] (c3) [above left=-0.35cm and -0.08cm of c2]   {};

\node[] (k6) [right= 0.6cm of c3]{$k_{n_1}$};

\tikzstyle{port}=[rectangle,ultra thin,draw=black!75,minimum size=6mm,inner sep=3.5pt]
\node [port] (c4) [above left= -1.6cm and -0.72cm of c2]  {$p_{t_1}$};
\node[bubble] (c5) [above left=-0.35cm and -0.08cm of c4]   {};

\node[] (k7) [right= 0.6cm of c5]{$k_{t_1}$};

\tikzstyle{port}=[rectangle,ultra thin,draw=black!75,minimum size=6mm,inner sep=2.7pt]
\node [port] (c6) [above left= -1.6cm and -0.72cm of c4]  {{\small $p_{w_1}$}};
\node[bubble] (c7) [above left=-0.35cm and -0.08cm of c6]   {};

\node[] (k8) [right= 0.6cm of c7]{$k_{w_1}$};

\node[type,align=center] (c8) [above right=-0.45cm and -1.6cm of c1]{Sour. $1$};
\node[] [below=-0.1cm of c8]{$wS(1)$};

\tikzstyle{port}=[rectangle,ultra thin,draw=black!75,minimum size=6mm,inner sep=3pt]
\node [source,align=center] (d1) [below= 1cm of c1] {};
\node [port] (d2) [above left= -0.95cm and -0.75cm of d1]  {$p_{n_2}$};
\node[bubble] (d3) [above left=-0.35cm and -0.08cm of d2]   {};

\node[] (k9) [right= 0.6cm of d3]{$k_{n_2}$};

\tikzstyle{port}=[rectangle,ultra thin,draw=black!75,minimum size=6mm,inner sep=3.5pt]
\node [port] (d4) [above left= -1.6cm and -0.72cm of d2]  {$p_{t_2}$};
\node[bubble] (d5) [above left=-0.35cm and -0.08cm of d4]   {};

\node[] (k10) [right= 0.6cm of d5]{$k_{t_2}$};

\tikzstyle{port}=[rectangle,ultra thin,draw=black!75,minimum size=6mm,inner sep=2.7pt]
\node [port] (d6) [above left= -1.6cm and -0.72cm of d4]  {{\small $p_{w_2}$}};
\node[bubble] (d7) [above left=-0.35cm and -0.08cm of d6]   {};

\node[] (k11) [right= 0.6cm of d7]{$k_{w_2}$};

\node[type] (d8) [above right=-0.45cm and -1.6cm of d1]{Sour. $2$};
\node[] [below=-0.1cm of d8]{$wS(2)$};

\tikzstyle{port}=[rectangle,ultra thin,draw=black!75,minimum size=6mm,inner sep=3pt]
\node [source,align=center] (e1) [below= 1cm of d1] {};
\node [port] (e2) [above left= -0.95cm and -0.75cm of e1]  {$p_{n_3}$};
\node[bubble] (e3) [above left=-0.35cm and -0.08cm of e2]   {};

\node[] (k12) [right= 0.6cm of e3]{$k_{n_3}$};

\tikzstyle{port}=[rectangle,ultra thin,draw=black!75,minimum size=6mm,inner sep=3.5pt]
\node [port] (e4) [above left= -1.6cm and -0.72cm of e2]  {$p_{t_3}$};
\node[bubble] (e5) [above left=-0.35cm and -0.08cm of e4]   {};

\node[] (k13) [right= 0.6cm of e5]{$k_{t_3}$};

\tikzstyle{port}=[rectangle,ultra thin,draw=black!75,minimum size=6mm,inner sep=2.7pt]
\node [port] (e6) [above left= -1.6cm and -0.72cm of e4]  {{\small $p_{w_3}$}};
\node[bubble] (e7) [above left=-0.35cm and -0.08cm of e6]   {};

\node[] (k14) [right= 0.6cm of e7]{$k_{w_3}$};

\node[type] (e8) [above right=-0.45cm and -1.6cm of e1]{Sour. $3$};
\node[] [below=-0.1cm of e8]{$wS(3)$};
 
 
\path[-]          (a3)  edge                  node {} (c3);
 \path[-]          (a3)  edge                  node {} (d3);
 \path[-]          (a3)  edge                  node {} (e3);

 \path[-]          (a3)  edge                  node {} (b3);
  \path[-]          (b3)  edge                  node {} (a3);
 
 
 \path[-]          (c3)  edge                  node {} (a3);
 \path[-]          (d3)  edge                  node {} (a3);
 \path[-]          (e3)  edge                  node {} (a3);

 
 \path[-]          (a5)  edge  [harlequin]                node {} (d7);
 \path[-]          (a5)  edge  [harlequin]                node {} (e7);

 \path[-]          (a5)  edge  [harlequin]                node {} (b7);
 
 
 \path[-]          (d7)  edge   [harlequin]               node {} (a5);
 \path[-]          (e7)  edge   [harlequin]               node {} (a5);

 \path[-]          (b7)  edge   [harlequin]              node {} (a5);
 
  \path[-]          (b5)  edge  [darkorange]                node {} (d5);
   \path[-]          (b5)  edge [darkorange]                node {} (e5);
   
  \path[-]          (d5)  edge  [darkorange]               node {} (b5);
  \path[-]          (e5)  edge  [darkorange]               node {} (b5);

 
  \path[-]          (b7)  edge  [harlequin]               node {} (d7);
 \path[-]          (d7)  edge  [harlequin]               node {} (b7);
 
    \path[-]          (b7)  edge  [harlequin]               node {} (e7);
 \path[-]          (e7)  edge  [harlequin]               node {} (b7);
   
\end{tikzpicture}}
\caption{A possible execution of the interactions in a weighted Blackboard architecture.}
\label{wb_blackboard}
\end{figure}
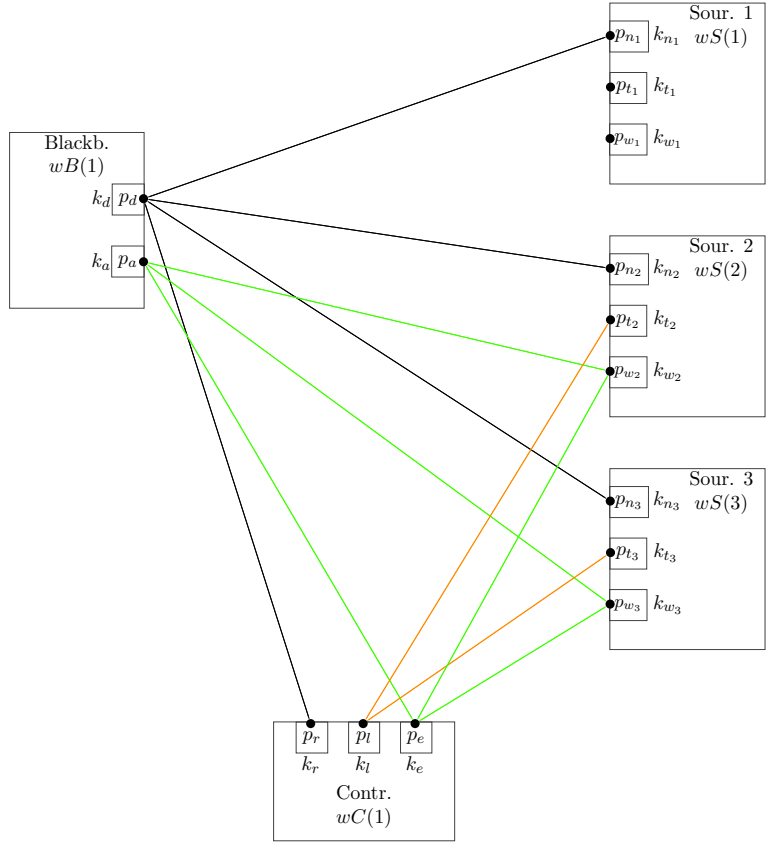
\end{example}

\begin{example} \textbf{\emph{(Weighted Request/Response)}}
\label{b-r-r}
Request/Response architectures represent a classical interaction pattern and are
widely used for web services \cite{Da:Se}. A Request/Response architecture refers to clients and services. Services are offered
by service providers through some common (online) platform. In order for a service to be made available the
service provider needs to subscribe it in the service registry. The enrollment of a service in the registry allows service consumers, simply called clients,
to search the existing services. Once a service is signed up, the client scans the corresponding registry and chooses a service. Then, each 
client that is interested in a service sends a request to the service and waits until the service will respond. No other client can be connected
to the service until the response of the service to the client who sent the request will be completed. In \cite{Ma:Co} the authors described this process
by adding, for each service, a third component called coordinator. Coordinator takes care that only one
client is connected to a service until the process among them is completed. 

\noindent Let $(w\B,\widetilde{\varphi})$ be a weighted component-based system with the Request/Response architecture. We consider four weighted component types namely weighted service registry, service, 
client, and  coordinator. Our weighted system consists of seven components, and specifically, the service registry, two services  with their associated coordinators, and two clients.  
Therefore, we have that $w\B=\lbrace wR(1),wS(1),wS(2),wD(1),wD(2), wC(1),wC(2)\rbrace$ referring to each of the aforementioned weighted components,
respectively. Figure \ref{b_r_r} depicts a case for the permissible interactions in the system. Service registry has three ports denoted by $p_e$, $p_u$, and $p_t$ used for 
 connecting with the service for its enrollment, for authorizing the client to search for a service, and for transmitting the address (link) of the 
service to the client in order for the client to send then its request, respectively. Services have three ports $p_{r_z},p_{g_z}, p_{s_z}$, for $z=1,2$, which establish the connection to the service registry for signing up the service, and the connection to 
a client (via coordinator) for getting a request and responding (sending the response), respectively.
Each client $z$ has five ports denoted by $p_{l_z},p_{o_z},p_{n_z}, p_{q_z}$ and $p_{c_z}$ for $z=1,2$. The first two ports are used for connection with the 
service registry to look up the available services and for obtaining the address of the service that interests the client. 
The latter three ports express the connection of the client to  coordinator, 
 to service (via coordinator) for sending the request, and to service (via coordinator) for collecting its response,
 respectively. Coordinators have three ports namely $p_{m_z}, p_{a_z},p_{d_z}$ for $z=1,2$. The first port controls that only
 one client is connected to a service. The second one is used for acknowledging that the connected client sends a request, and the
 third one disconnects the client when the service responds to the request. The interactions in the architecture range over $I_{\B}$ and the weight of each port is denoted as in Figure \ref{b_r_r}. The \emph{wEPIL} formula $\e$ describing the weighted Request/Response architecture is
\begin{multline*}
\e=\big(\#_w(p_e\otimes p_{r_1})\varpi \#_w(p_e\otimes p_{r_2})\big) \odot \big(\x_1 \varpi \x_2\big) \odot \\ \Bigg( \bigg(\e_{11}  \oplus \e_{21} \oplus (\e_{11} \odot \e_{21}) \oplus (\e_{21} \odot \e_{11})\bigg) \bigoplus  \bigg(\e_{12}  \oplus \e_{22} \oplus (\e_{12} \odot \e_{22}) \oplus (\e_{22} \odot \e_{12}) \bigg) \bigoplus \\ \bigg(\big(\e_{11}  \oplus \e_{21} \oplus (\e_{11} \odot \e_{21}) \oplus (\e_{21} \odot \e_{11})\big) \varpi 
 \\  \big(\e_{12}  \oplus \e_{22} \oplus (\e_{12} \odot \e_{22}) \oplus (\e_{22} \odot \e_{12}) \big)\bigg)\Bigg)
\end{multline*} 
where
\begin{itemize}
 \item[-]$\x_1=\#_w(p_{l_1}\otimes p_u) \odot \#_w(p_{o_1}\otimes p_t)$,
 \item[-]$\x_2=\#_w(p_{l_2}\otimes p_u) \odot \#_w(p_{o_2}\otimes p_t)$,
\end{itemize}
and
\begin{itemize} 
\item[-]$\e_{11}=\#_w(p_{n_1}\otimes p_{m_1}) \odot \#_w(p_{q_1}\otimes p_{a_1}\otimes p_{g_1})\odot \#_w(p_{c_1}\otimes p_{d_1}\otimes p_{s_1})$, 
\item[-] $\e_{12}=\#_w(p_{n_1}\otimes p_{m_2}) \odot \#_w(p_{q_1}\otimes p_{a_2}\otimes p_{g_2})\odot \#_w(p_{c_1}\otimes p_{d_2}\otimes p_{s_2})$, 
\item[-]$\e_{21}=\#_w(p_{n_2}\otimes p_{m_1}) \odot \#_w(p_{q_2}\otimes p_{a_1}\otimes p_{g_1})\odot \#_w(p_{c_2}\otimes p_{d_1}\otimes p_{s_1})$,
\item[-] $\e_{22}=\#_w(p_{n_2}\otimes p_{m_2}) \odot \#_w(p_{q_2}\otimes p_{a_2}\otimes p_{g_2}) \odot \#_w(p_{c_2}\otimes p_{d_2}\otimes p_{s_2})$.
\end{itemize}

\noindent The two \emph{wEPIL} subformulas at the left of the first two weighted concatenation operators encode the cost for the connections of the two services and the two clients with registry, respectively. Then, each of the three \emph{wEPIL} subformulas connected with $\bigoplus$ present the cost for the connection of either one of the two clients or both of them (one at each time) with the first service only, the second service only, or both of the services, respectively.

\noindent Let $w_1=\lbrace p_e,p_{r_1}\rbrace \lbrace p_e,p_{r_2}\rbrace \lbrace p_{l_1},p_u\rbrace  \lbrace p_{l_2},p_u\rbrace\lbrace p_{o_1},p_u\rbrace \lbrace p_{o_2},p_u\rbrace \lbrace p_{n_1}, p_{m_2}\rbrace \lbrace p_{q_1}, p_{a_2}, p_{g_2}\rbrace \lbrace p_{c_1}, p_{d_2},\\ p_{s_2}\rbrace \lbrace p_{n_2}, p_{m_2}\rbrace \lbrace p_{q_2}, p_{a_2}, p_{g_2}\rbrace \lbrace p_{c_2}, p_{d_2}, p_{s_2}\rbrace$ and $w_2=\lbrace p_e,p_{r_2}\rbrace \lbrace p_e,p_{r_1}\rbrace \lbrace p_{l_1},p_u\rbrace  \lbrace p_{l_2},p_u\rbrace\lbrace p_{o_2},p_u\rbrace\\ \lbrace p_{o_1},p_u\rbrace \lbrace p_{n_2}, p_{m_2}\rbrace \lbrace p_{q_2}, p_{a_2}, p_{g_2}\rbrace \lbrace p_{c_2}, p_{d_2}, p_{s_2}\rbrace$. Then $w_1$ encodes one of the possible executions for the interactions in which firstly client $wC(1)$ and then client $wC(2)$ connects via $wD(2)$ to service $wS(2)$, and $w_2$ shows a possible execution for applying the connection only of client $wC(2)$ via $wD(2)$ to $wS(2)$. Then, in max-plus semiring for instance, the value $\max \lbrace \left \Vert \e \right \Vert(w_1),\left \Vert \e \right \Vert(w_2)\rbrace$ gives information for
the communication with the maximum cost.

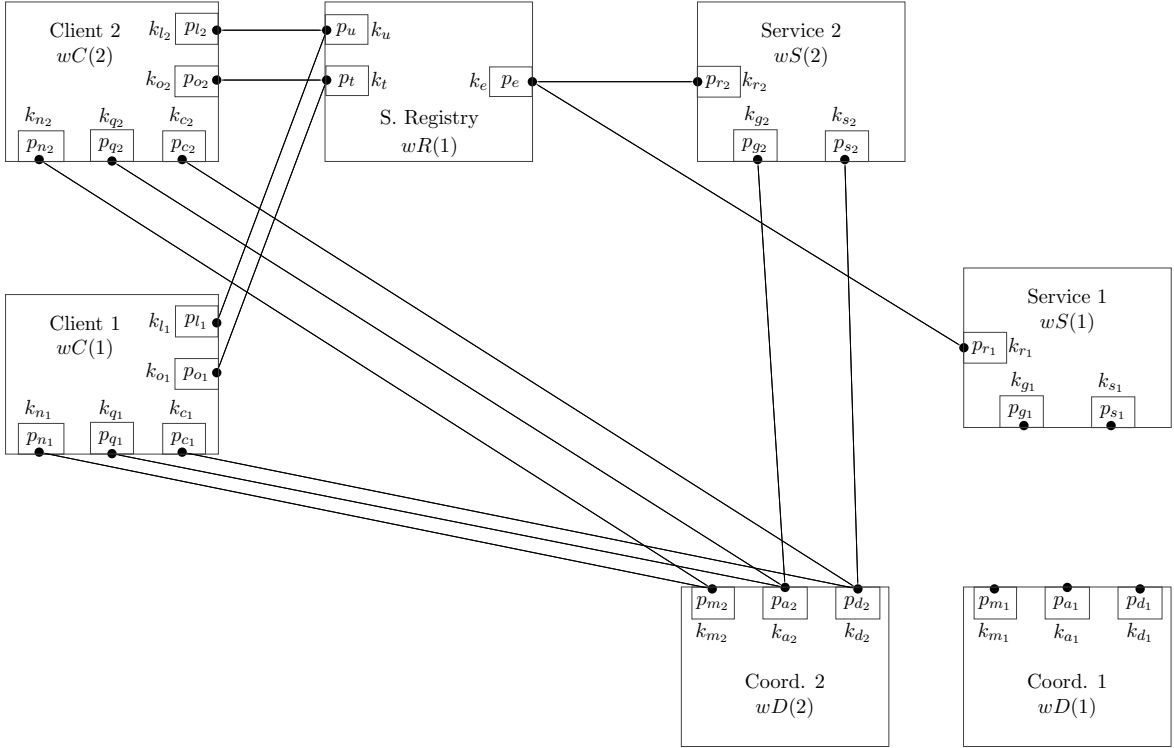
\begin{figure}[h]
\centering
\resizebox{1.0\linewidth}{!}{
\begin{tikzpicture}[>=stealth',shorten >=1pt,auto,node distance=3cm,baseline=(current bounding box.north)]
\tikzstyle{component}=[rectangle,ultra thin,draw=black!75,align=center,inner sep=9pt,minimum size=2.5cm,minimum height=3cm, minimum width=3.9cm]
\tikzstyle{port}=[rectangle,ultra thin,draw=black!75,minimum size=6mm]
\tikzstyle{bubble} = [fill,shape=circle,minimum size=5pt,inner sep=0pt]
\tikzstyle{type} = [draw=none,fill=none]

\node [component] (a1) {};

\tikzstyle{port}=[rectangle,ultra thin,draw=black!75,minimum size=6mm,inner sep=2.5pt]
\node[bubble] (a2)  [above left=-0.105cm and -0.65cm of a1]   {};   
\node [port]  (a3)  [above left=-0.605cm and -1.0cm of a1]  {$p_{m_1}$};  

\node []  (k1)  [below=0.4cm of a2]  {$k_{m_1}$}; 

\tikzstyle{port}=[rectangle,ultra thin,draw=black!75,minimum size=6mm,inner sep=4.5pt]
\node[bubble] (a4)  [above=-0.105cm of a1]   {};   
\node [port]  (a5)  [above=-0.605cm and -2.0cm of a1]  {$p_{a_1}$};  

\node []  (k2)  [below=0.45cm of a4]  {$k_{a_1}$}; 

\node[bubble] (a6)  [above right=-0.105cm and -0.65cm of a1]   {};   
\node [port]  (a7)  [above right=-0.605cm and -1.0cm of a1]  {$p_{d_1}$};  

\node []  (k3)  [below=0.4cm of a6]  {$k_{d_1}$}; 

\node[type]   (a8)      [below=-1.5cm of a1]{Coord. $1$};
\node                   [below=-0.1cm of a8]{$wD(1)$};

\node [component] (h1) [left=1.4 of a1]{};

\tikzstyle{port}=[rectangle,ultra thin,draw=black!75,minimum size=6mm,inner sep=2.5pt]
\node[bubble] (h2)  [above left=-0.105cm and -0.65cm of h1]   {};   
\node [port]  (h3)  [above left=-0.605cm and -1.0cm of h1]  {$p_{m_2}$};  

\node []  (k4)  [below=0.4cm of h2]  {$k_{m_2}$}; 

\tikzstyle{port}=[rectangle,ultra thin,draw=black!75,minimum size=6mm,inner sep=4.5pt]
\node[bubble] (h4)  [above=-0.105cm of h1]   {};   
\node [port]  (h5)  [above=-0.605cm and -2.0cm of h1]  {$p_{a_2}$};  

\node []  (k4)  [below=0.45cm of h4]  {$k_{a_2}$}; 

\node[bubble] (h6)  [above right=-0.105cm and -0.65cm of h1]   {};   
\node [port]  (h7)  [above right=-0.605cm and -1.0cm of h1]  {$p_{d_2}$};  

\node []  (k5)  [below=0.4cm of h6]  {$k_{d_2}$}; 

\node[type]   (h8)      [below=-1.5cm of h1]{Coord. $2$};
\node                   [below=-0.1cm of h8]{$wD(2)$};

\tikzstyle{client}=[rectangle,ultra thin,draw=black!75,align=center,inner sep=9pt,minimum size=2.5cm,minimum height=3.0cm, minimum width=4cm]

\tikzstyle{port}=[rectangle,ultra thin,draw=black!75,minimum size=6mm,inner sep=4.5pt,minimum height=0.4pt]
\node [client] (b1) [above left=8.0cm and 14.0cm of a1]{};

\node[bubble]     (b2) [below left=-0.105cm and -0.7cm of b1]   {};   
\node [port]      (b3) [below left=-0.58cm and -1.1cm of b1]  {$p_{n_2}$}; 
 
\node []  (k6)  [above=0.4cm of b2]  {$k_{n_2}$};

\tikzstyle{port}=[rectangle,ultra thin,draw=black!75,minimum size=6mm,inner sep=4.5pt, minimum height=2pt, minimum width= 0.4cm]
\node[bubble]     (b4) [below=-0.105cm of b1]   {};  
\node [port]      (b5) [below=-0.60cm and -2.0cm of b1]  {$p_{q_2}$}; 

\node []  (k7)  [above=0.4cm of b4]  {$k_{q_2}$};

\tikzstyle{port}=[rectangle,ultra thin,draw=black!75,minimum size=6mm,inner sep=4.5pt, minimum height=1pt, minimum width= 0.4cm]
\node[bubble]     (b6) [below right=-0.105cm and -0.75cm of b1]   {};   
\node [port]      (b7) [below right=-0.58cm and -1.05cm of b1]  {$p_{c_2}$};  
 
 \node []  (k8)  [above=0.4cm of b6]  {$k_{c_2}$};

\tikzstyle{port}=[rectangle,ultra thin,draw=black!75,minimum size=6mm,inner sep=4.5pt, minimum height=2.4pt, minimum width=0.8cm]
\node[bubble] (b8)  [above right=-0.6cm and -0.1cm of b1]   {};   
\node [port]  (b9)  [above right=-0.8cm and -.81cm of b1]  {$p_{l_2}$};  

\node []  (k9)  [left=0.6cm of b8]  {$k_{l_2}$};

\node[bubble] (b10)  [below right=-1.6cm and -0.1cm of b1]   {};   
\node [port]  (b11)  [below right=-1.8 cm and -.82cm of b1]  {$p_{o_2}$};

\node []  (k10)  [left=0.6cm of b10]  {$k_{o_2}$};

\node[type]    (b12)        [above left=-0.8cm and -2.3cm of b1]{Client $2$};
\node                  [below=-0.1cm of b12]{$wC(2)$};

\node [component] (g1) [right=2.0 of b1]{};

\tikzstyle{port}=[rectangle,ultra thin,draw=black!75,minimum size=6mm,inner sep=4.5pt, minimum height=2pt, minimum width=0.8cm]
\node[bubble] (g2)  [right=-0.1cm of g1]   {};   
\node [port]  (g3)  [right=-.81cm of g1]  {$p_{e}$};  

\node []  (k11)  [left=0.6cm of g2]  {$k_{e}$};

\node[bubble] (g4)  [above left=-0.6 and -0.1cm of g1]   {};   
\node [port]  (g5)  [above left=-0.8cm and -.81cm of g1]  {$p_{u}$};  

\node []  (k12)  [right=0.6cm of g4]  {$k_{u}$};

\node[bubble] (g6)  [below left=-1.6cm and -0.1cm of g1]   {};   
\node [port]  (g7)  [below left=-1.8 cm and -.82cm of g1]  {$p_{t}$};  

\node []  (k13)  [right=0.6cm of g6]  {$k_{t}$};

\node[type]   (g8)      [above right=-2.5cm and -3cm of g1]{S. Registry};
\node                   [below=-0.1cm of g8]{$wR(1)$};

\tikzstyle{port}=[rectangle,ultra thin,draw=black!75,minimum size=6mm,inner sep=4.5pt, minimum height=1pt, minimum width= 0.4cm]
\node [client] (c1) [below=2.5cm of b1]{};

\node[bubble]     (c2) [below left=-0.105cm and -0.7cm of c1]   {};   
\node [port]      (c3) [below left=-0.58cm and -1.1cm of c1]  {$p_{n_1}$}; 
 
\node []  (k14)  [above=0.4cm of c2]  {$k_{n_1}$}; 

\tikzstyle{port}=[rectangle,ultra thin,draw=black!75,minimum size=6mm,inner sep=4.5pt, minimum height=2pt, minimum width= 0.4cm]
\node[bubble]     (c4) [below=-0.105cm of c1]   {};  
\node [port]      (c5) [below=-0.60cm and -2.0cm of c1]  {$p_{q_1}$}; 

\node []  (k15)  [above=0.4cm of c4]  {$k_{q_1}$}; 

\tikzstyle{port}=[rectangle,ultra thin,draw=black!75,minimum size=6mm,inner sep=4.5pt, minimum height=1pt, minimum width= 0.4cm]
\node[bubble]     (c6) [below right=-0.105cm and -0.75cm of c1]   {};   
\node [port]      (c7) [below right=-0.58cm and -1.05cm of c1]  {$p_{c_1}$};  

\node []  (k16)  [above=0.4cm of c6]  {$k_{c_1}$}; 

\tikzstyle{port}=[rectangle,ultra thin,draw=black!75,minimum size=6mm,inner sep=4.5pt, minimum height=2.4pt, minimum width=0.8cm]
\node[bubble] (c8)  [above right=-0.6 and -0.1cm of c1]   {};   
\node [port]  (c9)  [above right=-0.8cm and -.81cm of c1]  {$p_{l_1}$};

\node []  (k17)  [left=0.6cm of c8]  {$k_{l_1}$};

\node[bubble] (c10)  [below right=-1.6cm and -0.1cm of c1]   {};   
\node [port]  (c11)  [below right=-1.8 cm and -.82cm of c1]  {$p_{o_1}$};
 
\node []  (k18)  [left=0.6cm of c10]  {$k_{o_1}$};

\node[type]    (c12)        [above left=-0.8cm and -2.3cm of c1]{Client $1$};
\node                       [below=-0.1cm of c12]{$wC(1)$};

\tikzstyle{port}=[rectangle,ultra thin,draw=black!75,minimum size=6mm,inner sep=4.5pt, minimum height=1pt, minimum width= 0.4cm]
\node [component] (d1) [right=9cm of b1]{};
\node[bubble] (d2) [below left=-0.105cm and -1.2cm of d1]   {};   
\node [port] (d3) [below left=-0.60cm and -1.5cm of d1]  {$p_{g_2}$};  

\node []  (k19)  [above=0.4cm of d2]  {$k_{g_2}$};

\node[bubble] (d4) [below right=-0.105cm and -1.2cm of d1]   {};   
\node [port] (d5) [below right=-0.58cm and -1.5cm of d1]  {$p_{s_2}$};  

\node []  (k20)  [above=0.4cm of d4]  {$k_{s_2}$};

\node[bubble] (d6) [left=-0.1cm of d1]   {};   
\node [port] (d7) [left=-.81cm of d1]  {$p_{r_2}$}; 

\node []  (k20)  [right=0.6cm of d6]  {$k_{r_2}$};

\node[type] (d8) [above=-0.8cm of d1]{Service $2$};
\node            [below=-0.1cm of d8]{$wS(2)$};

\node [component] (e1) [below right=2cm and 1.1cm of d1]{};
\node[bubble] (e2) [below left=-0.105cm and -1.2cm of e1]   {};   
\node [port] (e3) [below left=-0.60cm and -1.5cm of e1]  {$p_{g_1}$};  

\node []  (k21)  [above=0.4cm of e2]  {$k_{g_1}$};

\node[bubble] (e4) [below right=-0.105cm and -1.2cm of e1]   {};   
\node [port] (e5) [below right=-0.58cm and -1.5cm of e1]  {$p_{s_1}$}; 

\node []  (k22)  [above=0.4cm of e4]  {$k_{s_1}$};

\node[bubble] (e6) [left=-0.1cm of e1]   {};   
\node [port] (e7) [left=-.81cm of e1]  {$p_{r_1}$};  

\node []  (k23)  [right=0.6cm of e6]  {$k_{r_1}$};

\node[type] (e8) [above=-0.8cm of e1]{Service $1$};
\node        [below=-0.1cm of e8]{$wS(1)$};

\path[-]          (g2)  edge                  node {} (d6);
\path[-]          (d6)  edge                  node {} (g2);

\path[-]          (g2)  edge                  node {} (e6);
\path[-]          (e6)  edge                  node {} (g2);
\path[-]          (g4)  edge                  node {} (c8);
\path[-]          (c8)  edge                  node {} (g4);

\path[-]          (g6)  edge                  node {} (c10);
\path[-]          (c10) edge                  node {} (g6);

\path[-]          (c2)  edge                  node {} (h2);
\path[-]          (c4)  edge                  node {} (h4);
\path[-]          (c6)  edge                  node {} (h6);

\path[-]          (h2)  edge                  node {} (c2);
\path[-]          (h4)  edge                  node {} (c4);
\path[-]          (h6)  edge                  node {} (c6);


\path[-]          (g4)  edge                  node {} (b8);
\path[-]          (b8)  edge                  node {} (g4);

\path[-]          (g6)  edge                  node {} (b10);
\path[-]          (b10)  edge                  node {} (g6);

\path[-]          (b2)  edge                  node {} (h2);
\path[-]          (h2)  edge                  node {} (b2);

\path[-]          (b4)  edge                  node {} (h4);
\path[-]          (h4)  edge                  node {} (b4);

\path[-]          (b6)  edge                  node {} (h6);
\path[-]          (h6)  edge                  node {} (b6);

\path[-]          (d2)  edge                  node {} (h4);
\path[-]          (h4)  edge                  node {} (d2);

\path[-]          (d4)  edge                  node {} (h6);
\path[-]          (h6)  edge                  node {} (d4);

\end{tikzpicture}}
\caption{A possible execution of the interactions in a weighted Request/Response architecture.}
\label{b_r_r}
\end{figure}
\end{example}

\begin{example} \textbf{\emph{(Weighted Publish/Subscribe)}}
\label{b-pu-su}
Publish/Subscribe architecture is widely used in IoT 
 applications (cf. for instance \cite{Ol:AP,Pa:Pu}), and recently in cloud systems \cite{Ya:Pr} and robotics \cite{Ma:Ho}. Publish/Subscribe architecture involves
 three types of components, namely publishers, subscribers, and topics (Figure \ref{b-p-s}). 
 Publishers advertise and transmit to
topics the type of messages they are able to produce. Then, subscribers are connected with topics
they are interested in,  and topics in turn transfer the messages from publishers to corresponding subscribers. 
Once a subscriber receives
the message it has requested, then it is disconnected from the relevant topic. Publishers cannot check the existence of subscribers and vice-versa \cite{Eg:Pu}.

Let $(w\B,\widetilde{\varphi})$ be a weighted component-based system with the Publish/Subscribe architecture. For our example we consider two weighted publisher components, two weighted topic components and three weighted subscriber components. 
Hence, $w\B=\lbrace wB(1),wB(2),wT(1),wT(2),\\wS(1),wS(2),wS(3)\rbrace$ refer to the aforementioned components, respectively. The corresponding sets of ports are $\lbrace p_{a_1},p_{t_1}\rbrace,\lbrace p_{a_2},p_{t_2}\rbrace, \lbrace p_{n_1},p_{r_1},p_{c_1},p_{s_1},p_{f_1}\rbrace, \lbrace p_{n_2},p_{r_2},p_{c_2},p_{s_2},p_{f_2}\rbrace,$ $\lbrace p_{e_1},p_{g_1}, \\ p_{d_1} \rbrace$, $\lbrace p_{e_2},p_{g_2},p_{d_2} \rbrace$, and $\lbrace p_{e_3},p_{g_3},p_{d_3} \rbrace$, respectively. 
 Figure \ref{b-p-s} depicts one of the possible instantiations for the interactions  
 among the components of our system. The ports $p_{a_z}$ and $p_{t_z}$, for $z=1,2$, are used from publishers for advertising and transferring their
 messages to topic components, respectively. Each of the two topics is notified from the publishers and receives their messages through ports $p_{n_z}$ and $p_{r_z}$, for $z=1,2$,
 respectively. Ports $p_{c_z},p_{s_z}$ and $p_{f_z}$, for $z=1,2$, are used from  topic components for the connection with a subscriber, the sending of 
 a message to a subscriber and for finalizing their connection (disconnection), respectively. Subscribers use the ports $p_{e_y},p_{g_y},p_{d_y} $, for $y=1,2,3$, for connecting with the topic (express interest),
getting a message from the topic, and disconnecting from the topic, respectively. The interactions in the architecture range over $I_{\B}$ and the weight of each port is shown in Figure \ref{b-p-s}. The \emph{wEPIL} formula $\e$ for the Publish/Subscribe architecture is $\e=\e_1\oplus \e_2\oplus (\e_1\varpi \e_2)$ with
\begin{multline*}
\e_1=\bigg(\big( \x_1 \odot \e_{11}\big)\oplus \big( \x_1 \odot \e_{12}\big)\oplus \big(  \x_1 \odot \e_{13}\big)\oplus \\
\big(  \x_1 \odot (\e_{11}\varpi \e_{12})\big)\oplus \big( \x_1 \odot (\e_{11}\varpi \e_{13})\big)\oplus \\ \big( \x_1 \odot (\e_{12}\varpi \e_{13})\big)\oplus \big( \x_1 \odot (\e_{11}\varpi\e_{12}\varpi \e_{13})\big)\bigg)
\end{multline*}
and 
\begin{multline*}
\e_2=\bigg(\big(  \x_2 \odot \e_{21}\big)\oplus \big( \x_2 \odot \e_{22}\big)\oplus \big(  \x_2 \odot \e_{23}\big)\oplus \\
\big(  \x_2 \odot (\e_{21}\varpi \e_{22})\big)\oplus \big( \x_2 \odot (\e_{21}\varpi \e_{23})\big)\oplus \\ \big( \x_2 \odot (\e_{22}\varpi \e_{23})\big)\oplus \big(  \x_2 \odot (\e_{21}\varpi \e_{22}\varpi \e_{23})\big)\bigg)
\end{multline*}
\noindent where  the following auxiliary subformulas: 
\begin{itemize}
    \item[-] $\x_1=\x_{11}\oplus \x_{12}\oplus (\x_{11}\varpi \x_{12})$
    \item[-] $\x_2=\x_{21}\oplus \x_{22}\oplus (\x_{21}\varpi \x_{22})$
\end{itemize}
\noindent represent the cost for the connection of each of the two topics with the first publisher, or
the second one or with both of them, and
\begin{itemize}
\item[-] $\x_{11}=\#_w(p_{n_1}\otimes p_{a_1}) \odot \#_w(p_{r_1}\otimes p_{t_1})$
\item[-] $\x_{12}=\#_w(p_{n_1}\otimes p_{a_2}) \odot \#_w(p_{r_1}\otimes p_{t_2})$
\item[-] $\x_{21}=\#_w(p_{n_2}\otimes p_{a_1}) \odot \#_w(p_{r_2}\otimes p_{t_1})$
\item[-] $\x_{22}=\#_w(p_{n_2}\otimes p_{a_2}) \odot \#_w(p_{r_2}\otimes p_{t_2})$
\end{itemize}
\noindent describe the cost of the interactions of the two topics with each of the two publishers,  and 
\begin{itemize}
\item[-] $\e_{11}= \#_w(p_{c_1}\otimes p_{e_1}) \odot \#_w(p_{s_1}\otimes p_{g_1}) \odot \#_w(p_{f_1}\otimes p_{d_1})$
\item[-] $\e_{12}= \#_w(p_{c_1}\otimes p_{e_2}) \odot \#_w(p_{s_1}\otimes p_{g_2}) \odot \#_w(p_{f_1}\otimes p_{d_2})$
\item[-] $\e_{13}= \#_w(p_{c_1}\otimes p_{e_3})\odot \#_w(p_{s_1}\otimes p_{g_3}) \odot \#_w(p_{f_1}\otimes p_{d_3})$
\item[-] $\e_{21}= \#_w(p_{c_2}\otimes p_{e_1}) \odot \#_w(p_{s_2}\otimes p_{g_1}) \odot \#_w(p_{f_2}\otimes p_{d_1})$
\item[-] $\e_{22}= \#_w(p_{c_2}\otimes p_{e_2})\odot \#_w(p_{s_2}\otimes p_{g_2}) \odot \#_w(p_{f_2}\otimes p_{d_2})$
\item[-] $\e_{23}= \#_w(p_{c_2}\otimes p_{e_3})\odot \#_w(p_{s_2}\otimes p_{g_3}) \odot \#_w(p_{f_2}\odot p_{d_3})$.
\end{itemize}
\noindent describe the cost of the connections of each of the three subscribers with the two topics.

\noindent Consider $w_1=\lbrace p_{a_1},p_{n_1}\rbrace \lbrace p_{t_1},p_{r_1}\rbrace \lbrace p_{c_1},p_{e_1}\rbrace\lbrace p_{s_1},p_{g_1}\rbrace \lbrace p_{c_1},p_{e_3}\rbrace \lbrace p_{f_1},p_{d_1}\rbrace \lbrace p_{s_1},p_{g_3}\rbrace \lbrace p_{f_1},p_{d_3}\rbrace$ and $w_2=\lbrace p_{a_1},p_{n_1}\rbrace \lbrace p_{t_1},p_{r_1}\rbrace \lbrace p_{c_1},p_{e_3}\rbrace\lbrace p_{c_1},p_{e_1}\rbrace \lbrace p_{s_1},p_{g_1}\rbrace \lbrace p_{c_1},p_{e_2}\rbrace\lbrace p_{s_1},p_{g_2}\rbrace\lbrace p_{s_1},p_{g_3}\rbrace \lbrace p_{f_1},p_{d_3}\rbrace \lbrace p_{f_1},\\p_{d_1}\rbrace \lbrace p_{f_1},p_{d_2}\rbrace$ where $w_1$ expresses one of the possible executions for the interactions in which subscribers $wS(1)$ and $wS(2)$ are interested in topic $wT(1)$, and $w_2$ encodes a possible implementation for applying the connections of all subscribers to $wT(1)$. For instance, in
the Viterbi semiring the value $\max \lbrace \left \Vert \e \right \Vert(w_1),\left \Vert \e \right \Vert(w_2)\rbrace$ shows the interaction executed with the maximum probability.

\begin{figure}[h]
\centering
\resizebox{0.7\linewidth}{!}{
\begin{tikzpicture}[>=stealth',shorten >=1pt,auto,node distance=1cm,baseline=(current bounding box.north)]
\tikzstyle{component}=[rectangle,ultra thin,draw=black!75,align=center,inner sep=9pt,minimum size=1.5cm,minimum height=3.9cm,minimum width=3.3cm]
\tikzstyle{port}=[rectangle,ultra thin,draw=black!75,minimum size=7mm]
\tikzstyle{bubble} = [fill,shape=circle,minimum size=5pt,inner sep=0pt]
\tikzstyle{type} = [draw=none,fill=none]

\node [component,align=center] (a1)  {};
\node [port] (a2) [above right=-1.66 and -0.78cm of a1]  {$p_{a_1}$};
\node[bubble] (a3) [below right=-0.40 and -0.10cm of a2]   {};

\node []  (k1)  [left=0.5cm of a3]  {$k_{a_1}$}; 

\node [port] (a4) [below right=0.86cm and -0.73cm of a2]  {$p_{t_1}$};
\node[bubble] (a5) [below right=-0.37 and -0.10cm of a4]   {};

\node []  (k2)  [left=0.5cm of a5]  {$k_{t_1}$};

\node[type] (a6) [above=-0.45cm of a1]{Publ. $1$};
\node[type]  [below=-0.05cm of a6]{$wB(1)$};

\node [component] (b1) [right=3cm of a1] {};
\node [port] (b2) [below right=-1.07cm and -0.77cm of b1]  {$p_{f_1}$}; 
\node[bubble] (b3) [below right=-0.40cm and -0.10cm of b2]   {};   

\node []  (k3)  [left=0.5cm of b3]  {$k_{f_1}$}; 

\node [port] (b4) [above right=0.25cm and -0.75cm of b2]  {$p_{s_1}$}; 
\node[bubble] (b5) [above right=-0.40cm and -0.10cm of b4]   {};   

\node []  (k4)  [left=0.5cm of b5]  {$k_{s_1}$}; 

\node [port] (b6) [above =0.3cm of b4]  {$p_{c_1}$}; 
\node[bubble] (b7) [above right=-0.40cm and -0.10cm of b6]   {};   

\node []  (k5)  [left=0.5cm of b7]  {$k_{c_1}$}; 

\node [port] (b8) [above left= -1.7cm and -0.79cm of b1]  {$p_{n_1}$};
\node[bubble] (b9) [above left=-0.4cm and -0.10cm of b8]   {};

\node []  (k6)  [right=0.6cm of b9]  {$k_{n_1}$}; 

\node [port] (b10) [below left=0.85 and -0.76cm of b8]  {$p_{r_1}$};
\node[bubble] (b11) [below left=-0.40 and -0.10cm of b10]   {};
 
 \node []  (k7)  [right=0.6cm of b11]  {$k_{r_1}$}; 
 
\node[type] (b12) [above left=-0.55cm and -2.4cm of b1]{Topic $1$};
\node[type]  [below=-0.1cm of b12]{$wT(1)$};

\node [component] (c1) [above right= -1.6cm and 3cm of b1] {};
\node [port] (c2) [below left=-1.07cm and -0.77cm of c1]  {$p_{d_1}$}; 
\node[bubble] (c3) [below left=-0.40cm and -0.10cm of c2]   {};  

 \node []  (k8)  [right=0.6cm of c3]  {$k_{d_1}$}; 

\node [port] (c4) [above=0.25cm and -0.75cm of c2]  {$p_{g_1}$}; 
\node[bubble] (c5) [above left=-0.40cm and -0.10cm of c4]   {};   

 \node []  (k9)  [right=0.6cm of c5]  {$k_{g_1}$}; 

\node [port] (c6) [above=0.3cm of c4]  {$p_{e_1}$}; 
\node[bubble] (c7) [above left=-0.40cm and -0.10cm of c6]   {};  

 \node []  (k10)  [right=0.6cm of c7]  {$k_{e_1}$}; 
 
\node[type] (c8) [above right=-0.55cm and -2.1cm of c1]{Subs. $1$};
\node[type]  [below=-0.1cm of c8]{$wS(1)$};

\node [component] (d1) [below right=-0.8cm and 3cm of b1] {};
\node [port] (d2) [below left=-1.07cm and -0.77cm of d1]  {$p_{d_2}$}; 
\node[bubble] (d3) [below left=-0.40cm and -0.10cm of d2]   {};  

 \node []  (k11)  [right=0.6cm of d3]  {$k_{d_2}$}; 

\node [port] (d4) [above=0.25cm and -0.75cm of d2]  {$p_{g_2}$}; 
\node[bubble] (d5) [above left=-0.40cm and -0.10cm of d4]   {};   

\node []  (k12)  [right=0.6cm of d5]  {$k_{g_2}$}; 

\node [port] (d6) [above=0.3cm of d4]  {$p_{e_2}$}; 
\node[bubble] (d7) [above left=-0.40cm and -0.10cm of d6]   {};   

 \node []  (k13)  [right=0.6cm of d7]  {$k_{e_2}$}; 

\node[type] (d8) [above right=-0.55cm and -2.1cm of d1]{Subs. $2$};
\node[type]  [below=-0.1cm of d8]{$wS(2)$};


\node [component,align=center] (e1) [below=4.5cm of a1] {};
\node [port] (e2) [above right=-1.66 and -0.78cm of e1]  {$p_{a_2}$};
\node[bubble] (e3) [below right=-0.40 and -0.10cm of e2]   {};

 \node []  (k14)  [left=0.55cm of e3]  {$k_{a_2}$}; 

\node [port] (e4) [below right=0.86cm and -0.73cm of e2]  {$p_{t_2}$};
\node[bubble] (e5) [below right=-0.37 and -0.10cm of e4]   {};

 \node []  (k15)  [left=0.5cm of e5]  {$k_{t_2}$}; 

\node[type] (e6) [above=-0.45cm of e1]{Publ. $2$};
\node[type]  [below=-0.05cm of e6]{$wB(2)$};

2nd topic

\node [component] (f1) [right=3cm of e1] {};
\node [port] (f2) [below right=-1.07cm and -0.77cm of f1]  {$p_{f_2}$}; 
\node[bubble] (f3) [below right=-0.40cm and -0.10cm of f2]   {};   

 \node []  (k16)  [left=0.5cm of f3]  {$k_{f_2}$}; 

\node [port] (f4) [above right=0.25cm and -0.75cm of f2]  {$p_{s_2}$}; 
\node[bubble] (f5) [above right=-0.40cm and -0.10cm of f4]   {};   

 \node []  (k17)  [left=0.5cm of f5]  {$k_{s_2}$}; 

\node [port] (f6) [above =0.3cm of f4]  {$p_{c_2}$}; 
\node[bubble] (f7) [above right=-0.40cm and -0.10cm of f6]   {};   

 \node []  (k18)  [left=0.5cm of f7]  {$k_{c_2}$}; 

\node [port] (f8) [above left= -1.7cm and -0.79cm of f1]  {$p_{n_2}$};
\node[bubble] (f9) [above left=-0.4cm and -0.10cm of f8]   {};

 \node []  (k19)  [right=0.6cm of f9]  {$k_{n_2}$}; 

\node [port] (f10) [below left= 0.85cm and -0.76cm of f8]  {$p_{r_2}$};
\node[bubble] (f11) [below left=-0.40 and -0.10cm of f10]   {};
 
  \node []  (k20)  [right=0.6cm of f11]  {$k_{r_2}$}; 
 
\node[type] (f12) [above left=-0.55cm and -2.4cm of f1]{Topic $2$};
\node[type]  [below=-0.1cm of f12]{$wT(2)$};

\node [component] (g1) [right= 3cm of f1] {};
\node [port] (g2) [below left=-1.07cm and -0.77cm of g1]  {$p_{d_3}$}; 
\node[bubble] (g3) [below left=-0.40cm and -0.10cm of g2]   {};  

 \node []  (k21)  [right=0.6cm of g3]  {$k_{d_3}$}; 

\node [port] (g4) [above left=0.25cm and -0.75cm of g2]  {$p_{g_3}$}; 
\node[bubble] (g5) [above left=-0.40cm and -0.10cm of g4]   {};   

 \node []  (k22)  [right=0.6cm of g5]  {$k_{g_3}$}; 

\node [port] (g6) [above=0.3cm of g4]  {$p_{e_3}$}; 
\node[bubble] (g7) [above left=-0.40cm and -0.10cm of g6]   {};  

 \node []  (k23)  [right=0.6cm of g7]  {$k_{e_3}$}; 
 
\node[type] (g8) [above right=-0.55cm and -2.1cm of g1]{Subs. $3$};
\node[type]  [below=-0.1cm of g8]{$wS(3)$};

\path[-]          (a3)  edge                  node {} (b9);
\path[-]          (b9)  edge                  node {} (a3);

\path[-]          (a5)  edge                  node {} (b11);
\path[-]          (b11)  edge                  node {} (a5);

\path[-]          (b7)  edge                  node {} (c7);
\path[-]          (c7)  edge                  node {} (b7);

\path[-]          (b7)  edge                  node {} (g7);
\path[-]          (g7)  edge                  node {} (b7);

\path[-]          (b5)  edge                  node {} (c5);
\path[-]          (c5)  edge                  node {} (b5);

\path[-]          (b5)  edge                  node {} (g5);
\path[-]          (g5)  edge                  node {} (b5);

\path[-]          (b3)  edge                  node {} (c3);
\path[-]          (c3)  edge                  node {} (b3);

\path[-]          (b3)  edge                  node {} (g3);
\path[-]          (g3)  edge                  node {} (b3);


\path[-]          (e3)  edge                  node {} (f9);
\path[-]          (f9)  edge                  node {} (e3);

\path[-]          (e5)  edge                  node {} (f11);
\path[-]          (f11)  edge                  node {} (e5);

\path[-]          (a3)  edge                  node {} (f9);
\path[-]          (f9)  edge                  node {} (a3);

\path[-]          (a5)  edge                  node {} (f11);
\path[-]          (f11)  edge                  node {} (a5);


\path[-]          (f7)  edge                  node {} (d7);
\path[-]          (d7)  edge                  node {} (f7);

\path[-]          (f5)  edge                  node {} (d5);
\path[-]          (d5)  edge                  node {} (f5);

\path[-]          (f3)  edge                  node {} (d3);
\path[-]          (d3)  edge                  node {} (f3);

\end{tikzpicture}}
\caption{A possible execution of the interactions in a weighted Publish/Subscribe architecture.}
\label{b-p-s}
\end{figure}
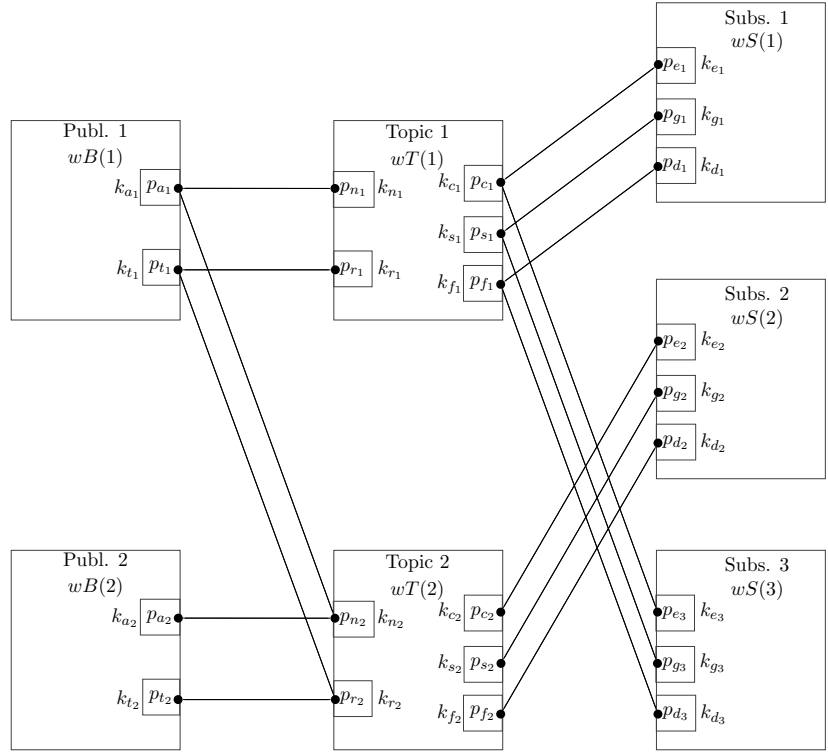

\end{example}

The above examples demonstrate that wEPIL formulas can efficiently represent the overall cost of ordered interactions within architectures. The resulting value is obtained by the relevant operations of each semiring and the notion of cost serves for representing different quantitative values like the required power consumption, battery waste, or time, for implementing
the components' connections.

\section{Parametric weighted component-based systems}

In this section we deal with the parametric extension of weighted component-based systems. 
Weighted component-based systems considered in Subsection \ref{comp_based_system} are comprised of a finite number of weighted components 
which are of the same or distinct type. On the other hand, in the parametric setting a
weighted component-based model is comprised of a finite number of distinct \emph{weighted component types} where the cardinality 
of the \emph{instances} of each type is a parameter for the system.  It should be clear, that in real world applications we do not need an unbounded number of components. Nevertheless, the number of instances of every component type is unknown or it can be modified during a process. Therefore, in the
sequel we consider parametric weighted component-based systems, i.e., weighted component-based systems with
infinitely many instances of every component type. 
We shall need to recall firstly parametric (unweighted) component-based systems \cite{Pi:Ar}. 

Let $\B=\{B(i) \mid i \in [n]  \}$ be a set of component types. For every $i \in [n] $ and $j \geq 1$ we consider a copy $B(i,j)=(Q(i,j),P(i,j),q_{0}(i,j),R(i,j))$ of $B(i)$, namely the \emph{$j$-th instance of} $B(i)$.  Hence, for every $i \in [n]$ and $j \geq 1$, the instance $B(i,j)$ is also a component and we call it a \emph{parametric component} or a \emph{component instance}. We assume that $(Q(i,j) \cup P(i,j)) \cap (Q(i', j') \cup  P(i',j')) = \emptyset$ whenever $i \neq i' $ or $j \neq j'$ for every $ i, i' \in [n]$ and $j, j' \geq 1$.  This restriction is needed in order to identify the distinct parametric components. It also permits us to use, without any confusion, the notation  $P(i,j)=\{p(j) \mid p\in P(i)\}$ for every $i \in [n]$ and $j \geq 1$.   We set $p\mathcal{B}= \{B(i,j) \mid i \in [n],  j \geq 1 \} $ and call it a set of \emph{parametric components}. The set of ports of $p\B$ is given by $P_{p\B} =\bigcup_{i \in [n], j\geq 1}  P(i,j)$.

Let $w\B=\{wB(i) \mid i \in [n]  \}$ be a set of weighed component types. For every $i \in [n] $ and $j \geq 1$ we consider a weighted component   $wB(i,j)=(B(i,j), wt(i))$, where $B(i,j)=(Q(i,j),P(i,j),q_{0}(i,j),R(i,j))$ is the $j$-th instance of $B(i)$, and it is called a \emph{parametric weighted component} or a \emph{weighted component instance}. We set $pw\mathcal{B}= \{wB(i,j) \mid i \in [n],  j \geq 1 \} $ and call it a set of \emph{parametric weighted components}. We impose on $pw\B$ the same  assumptions as for $p\B$. Abusing notations, we denote by $wt(i)$, $i \in [n]$, the weight mapping of $wB(i,j)$, $j\geq 1$, meaning that it assigns the value $wt(i)(p)$ to every port $p(j)\in P(i,j)$. 

As it is already mentioned, in practical applications we do not know how many instances of each weighted component type are connected at a concrete time. This means that we cannot define interactions of $pw\mathcal{B}$ in the same way as we did  for finite sets of weighted component types. For this, we need a symbolic representation to describe interactions and hence architectures of parametric weighted systems. In \cite{Pi:Ar} we investigated the  first-order extended interaction
 logic (FOEIL for short) which was proved sufficient to describe a wide class of architectures of parametric component-based systems. Here, we introduce a weighted first-order extended interaction logic for the description of architectures of parametric weighted component-based systems.

\subsection{Weighted first-order extended interaction logic}
\label{sub_wFOEIL}
In this subsection we introduce the  weighted first-order extended interaction logic as a modelling language for describing the interactions of parametric weighted component-based systems. 

As in FOEIL  we  equip wEPIL formulas with variables. Due to the
nature of parametric systems we need to distinguish variables referring to different component  types. Let $pw\mathcal{B}= \{wB(i,j) \mid i \in [n], j \geq 1 \} $ be a set of parametric weighted components. We consider pairwise disjoint countable sets of first-order variables  $\mathcal{X}^{(1)}, \ldots , \mathcal{X}^{(n)}$ referring to  instances of component types $wB(1), \ldots, wB(n)$, respectively. Variables in $\mathcal{X}^{(i)}$, for every $i \in [n]$, will be denoted by small  letters with the corresponding superscript. Hence by $x^{(i)} $  we understand that $x^{(i)} \in \mathcal{X}^{(i)}$, $i\in [n]$, is a first-order variable referring to an  instance of weighted component type $wB(i)$. We let $\mathcal{X}=\mathcal{X}^{(1)}\cup \ldots \cup \mathcal{X}^{(n)}$ and  set  $P_{p\B(\mathcal{X})} =\left\{p\left(x^{(i)}\right) \mid i \in [n], x^{(i)} \in \mathcal{X}^{(i)},  \text{ and } p\in P(i)\right \}$.
 
\noindent We need to recall FOEIL firstly \cite{Pi:Ar}.  
Let $p\mathcal{B}= \{B(i,j) \mid i \in [n], j \geq 1 \} $ be a set of parametric components. Then the syntax of FOEIL formulas $\psi$ over $p\mathcal{B}$\footnote{According to our terminology for EPIL formulas, a FOEIL formula should be defined over the set of ports of $p\B$. Nevertheless, we prefer for simplicity to refer to the set $p\B$ of parametric components.} is given by the grammar
\begin{multline*}
\psi  ::= \varphi  \mid x^{(i)}=y^{(i)} \mid  \neg(x^{(i)}=y^{(i)}) \mid  \psi\vee\psi \mid \psi \wedge \psi \mid   \psi * \psi   \mid \psi\shuffle \psi \mid \\ \exists x^{(i)}. \psi \mid  \forall x^{(i)} . \psi \mid \exists^* x^{(i)}. \psi \mid \forall^*x^{(i)}.\psi \mid \exists^{\shuffle} x^{(i)}.\psi \mid \forall^{\shuffle}x^{(i)}.\psi  
\end{multline*}
where $\varphi$ is an EPIL formula over $P_{p\B(\mathcal{X})}$, $i \in [n]$,  $x^{(i)}, y^{(i)} \in \mathcal{X}^{(i)}$, $\exists^*$ (resp. $\forall^*$) denotes the existential  (resp. universal) concatenation quantifier, and $\exists ^{\shuffle}$ (resp. $\forall^{\shuffle}$) the existential (resp. universal) shuffle quantifier. Furthermore, we assume that whenever $\psi$ contains a subformula of the form $\exists ^*x^{(i)}.\psi'$ or $\exists ^{\shuffle} x^{(i)}.\psi'$,  then the application of negation in $\psi'$ is permitted only  in PIL formulas and formulas of the form $x^{(j)}=y^{(j)}$.

Let $\psi$ be a FOEIL formula over $p\B$. We denote by $\mathrm{free}(\psi)$ the set of free variables of $\psi$. If $\psi$ has no free variables, then it is a \emph{sentence}. 
We consider a mapping $r:[n] \rightarrow \mathbb{N}$. The value $r(i)$, for every $i \in [n]$, intends to represent the finite number of instances of the component type $B(i)$ in the parametric system. The mapping characterizes the dynamic behavior of such systems, where components' instances can appear or disappear, affecting in turn, the corresponding interactions. Hence, for different mappings we obtain a different parametric system. 
We let $p\B(r)= \{B(i,j) \mid i \in [n], j \in [r(i)] \} $ and call it  the \emph{instantiation of} $p\B$ w.r.t. $r$. We denote by $P_{p\B(r)}$ the set of all ports of components' instances in $p\B(r)$, i.e., $P_{p\B(r)}= \bigcup_{i\in [n], j \in [r(i)]}P(i,j)$. The set $I_{p\B(r)}$ of interactions of $p\B(r)$ is given by $I_{p\B(r)}=\{a \in I(P_{p\B(r)}) \mid  \vert a \cap P(i,j)\vert \leq 1 \text{ for every } i\in [n] \text{ and } j \in [r(i)]\}$.

Let $\mathcal{V} \subseteq \mathcal{X} $ be a finite set of first-order variables.  We let $P_{p\B(\mathcal{V})}= \{ p(x^{(i)}) \in P_{p\B(\mathcal{X})} \mid x^{(i)} \in \mathcal{V} \}$.
To interpret FOEIL formulas over $p\B$ we use the notion of an \emph{assignment} defined with respect to the set of variables $\mathcal{V}$ and the mapping $r$.  \hide{We let $\mathcal{A}_{r}=\lbrace (i,j)\mid i\in [n],j\in [r(i)]\rbrace$.} Formally, a $(\mathcal{V},r)$-\emph{assignment} is a mapping $\sigma : \mathcal{V} \rightarrow \mathbb{N}  $ such that $\sigma(\mathcal{V} \cap \mathcal{X}^{(i)} ) \subseteq [r(i)]$ for every $i \in [n]$. If $\sigma$ is a $(\mathcal{V},r)$-assignment, then $\sigma[x^{(i)} \rightarrow j]$  is the $(\mathcal{V} \cup \{x^{(i)}\},r)$-assignment which acts as $\sigma$ on $\mathcal{V}\setminus \{x^{(i)}\}$  and assigns $j$ to $x^{(i)}$. Intuitively, a $(\mathcal{V},r)$-assignment $\sigma$ assigns unique identifiers to each instance in a parametric system, w.r.t. the mapping $r$. 

We interpret FOEIL formulas over triples consisting of a mapping $r:[n] \rightarrow \mathbb{N}$, a $(\mathcal{V},r)$-assignment $\sigma$, and a word $w \in I_{p\B(r)}^*$.

\begin{definition}
Let $\psi$ be a \emph{FOEIL} formula over a set $p\mathcal{B}= \{B(i,j) \mid i \in [n], j \geq 1 \} $ of parametric components and $\mathcal{V} \subseteq \mathcal{X}$ a finite set containing $\mathrm{free}(\psi)$. Then for every $r:[n] \rightarrow \mathbb{N}$, $(\mathcal{V},r)$-assignment $\sigma$, and $w  \in I_{p\B(r)}^*$ we define the satisfaction relation $(r,\sigma,w) \models\psi$, inductively on the structure of $\psi$ as follows:

\begin{itemize}

     \item[-] $(r,\sigma,w) \models  \varphi$ iff $w \models \sigma(\varphi)$,
     
     \item[-] $(r,\sigma,w) \models  x^{(i)}=y^{(i)}$ iff $\sigma(x^{(i)})=\sigma(y^{(i)})$,
     
     \item[-] $(r,\sigma,w) \models  \neg(x^{(i)}=y^{(i)})$ iff $(r,\sigma,w) \not\models x^{(i)}=y^{(i)}$,

     \item[-]  $(r,\sigma, w)  \models \psi_1 \vee \psi_2$ iff $(r,\sigma, w)\models  \psi_1$ or $(r,\sigma, w)\models \psi_2$, 
 
\item[-]  $(r,\sigma, w)  \models \psi_1 \wedge \psi_2$ iff $(r,\sigma, w)\models  \psi_1$ and $(r,\sigma, w)\models \psi_2$,  
     
     \item[-] $(r,\sigma, w) \models \psi_1 * \psi_2$ iff there exist $w_1,w_2 \in I_{p\B(r)}^*$ such that $w=w_1w_2$ and   $(r,\sigma, w_i) \models  \psi_i$ for $i=1,2$, 
     \item[-] $(r, \sigma, w) \models \psi_1 \shuffle \psi_2$ iff there exist $w_1, w_2 \in I_{p\B(r)}^*$ such that $w \in w_1\shuffle w_2$ and $(r, \sigma, w_i) \models \psi_i$ for $i=1,2$, 
     
     \item[-]   $(r,\sigma, w) \models \exists x^{(i)} . \psi$ iff there exists $j \in [r(i)] $ such that  $(r, \sigma[x^{(i)} \rightarrow  j ], w) \models \psi$,

      \item[-]  $(r,\sigma, w) \models \forall x^{(i)} . \psi$ iff  $(r, \sigma[x^{(i)} \rightarrow j ], w) \models \psi$ for every $j \in [r(i)]$,
      
\item[-]  $(r,\sigma, w) \models \exists^ * x^{(i)} . \psi$ iff there exist $w_{l_1}, \ldots, w_{l_t} \in I_{p\B(r)}^*$ with $1 \leq l_1 < \ldots < l_t \leq r(i)$ such that $w = w_{l_1}  \ldots  w_{l_t}$ and $(r, \sigma[x^{(i)} \rightarrow j ], w_j) \models \psi$ for every $j = l_1, \ldots, l_t$,

      \item[-]  $(r,\sigma, w) \models \forall^* x^{(i)} . \psi$ iff there exist $w_1, \ldots, w_{r(i)} \in I_{p\B(r)}^*$ such that $w=w_1 \ldots w_{r(i)}$ and $(r, \sigma[x^{(i)} \rightarrow j ], w_j) \models \psi$ for every $j \in [r(i)]$,
      
\item[-]  $(r,\sigma, w) \models \exists^{\shuffle} x^{(i)} . \psi$ iff there exist $w_{l_1}, \ldots, w_{l_t} \in I_{p\B(r)}^*$ with $1 \leq l_1 < \ldots < l_t \leq r(i)$ such that $w \in w_{l_1} \shuffle \ldots \shuffle w_{l_t}$ and $(r, \sigma[x^{(i)} \rightarrow j ], w_j) \models \psi$ for every $j = l_1, \ldots, l_t$,

\item[-]  $(r,\sigma, w) \models \forall^{\shuffle} x^{(i)} . \psi$ iff there exist $w_1, \ldots, w_{r(i)} \in I_{p\B(r)}^*$ such that $w \in w_1 \shuffle \ldots \shuffle w_{r(i)}$ and $(r, \sigma[x^{(i)} \rightarrow j ], w_j) \models \psi$ for every $j \in [r(i)]$,

\end{itemize}
where $\sigma(\varphi)$ is obtained by $\varphi$ by replacing every port $p(x^{(i)}) \in P_{p\B(\mathcal{V})}$, occurring in $\varphi$, by $p(\sigma(x^{(i)}))$.
\end{definition}

If $\psi$ is a FOEIL sentence over $p\B$, then we simply write $(r, w) \models \psi$. Let also $\psi'$ be a FOEIL sentence over $p\mathcal{B}$. Then,  $\psi$ and $\psi'$ are called \emph{equivalent w.r.t.} $r$ whenever $(r,w) \models \psi$ iff $(r,w) \models \psi'$, for every $w  \in I_{p\B(r)}^*$.

In the sequel, we shall write also $x^{(i)} \neq y^{(i)}$  for $\neg(x^{(i)}=y^{(i)})$.

Let $\beta$ be a boolean combination of atomic formulas of the form  $x^{(i)}=y^{(i)}$ and $\psi$ a FOEIL formula over $p\B$. Then, we define 
$\beta \rightarrow \psi ::= \neg \beta \vee \psi$.

For simplicity sometimes we denote boolean combinations of  formulas of the form $x^{(i)} = y^{(i)}$ as constraints. For instance we write $\exists x^{(i)}\forall y^{(i)}\exists x^{(j)}\forall y^{(j)} ((x^{(i)}  \neq y^{(i)}) \wedge (x^{(j)} \neq y^{(j)})) .\psi$ for  $\exists x^{(i)}\forall y^{(i)}\exists x^{(j)}\forall y^{(j)}. (((x^{(i)}\neq y^{(i)}) \wedge (x^{(j)} \neq y^{(j)})) \rightarrow \psi)$. 

Now we are ready to introduce our weighted FOEIL.  
\begin{definition} \label{def_wfOEIL}
Let $pw\mathcal{B}= \{wB(i,j) \mid i \in [n], j \geq 1 \} $ be a set of parametric weighted components. Then the syntax of \emph{weighted first-order extended interaction logic} \emph{(wFOEIL} for short\emph{) formulas} $\ps$ \emph{over} $pw\mathcal{B}$ and $K$ is given by the grammar
\begin{multline*}
\ps  ::= k \mid \psi   \mid  \ps \oplus \ps \mid \ps \otimes \ps \mid   \ps \odot \ps   \mid \ps\varpi \ps \mid  {\textstyle\sum x^{(i)}}. \ps \mid  {\textstyle\prod x^{(i)}} . \ps \mid  \\ {\textstyle\sum\nolimits{^{\odot}}} x^{(i)}. \ps \mid {\textstyle\prod\nolimits{^{\odot}}x^{(i)}}.\ps \mid {\textstyle\sum\nolimits{^{\varpi}} x^{(i)}}.\ps \mid {\textstyle\prod\nolimits{^{\varpi}}x^{(i)}}.\ps  
\end{multline*}
where $k \in K$, $\psi$ is a \emph{FOEIL} formula over $p\B$,  $x^{(i)}, y^{(i)}$ are first-order variables in $\mathcal{X}^{(i)}$, ${\textstyle\sum}$ (resp. ${\textstyle\prod}$) denotes the weighted existential (resp. universal) quantifier, ${\textstyle\sum\nolimits{^{\odot}}}$ (resp. $ {\textstyle\prod\nolimits{^{\odot}}}$) denotes the weighted existential (resp. universal) concatenation quantifier, and ${\textstyle\sum\nolimits{^{\varpi}}}$ (resp. ${\textstyle\prod\nolimits{^{\varpi}}}$) the weighted existential (resp. universal) shuffle quantifier. Furthermore, we assume that when $\ps$ contains a subformula of the form $ {\textstyle\sum\nolimits{^{\odot}}} x^{(i)}. \ps'$ or $ {\textstyle\sum\nolimits{^{\varpi}} x^{(i)}}.\ps'$, and $\ps'$ contains a \emph{FOEIL} formula $\psi$, then the application of negation in $\psi$ is permitted only in \emph{PIL} formulas, and formulas of the form $x^{(j)}=y^{(j)}$.
\end{definition}

Let $\ps$ be a wFOEIL formula over $pw\B$ and $r:[n] \rightarrow \mathbb{N}$ a mapping. As for (unweighted) parametric systems the value $r(i)$, for every $i \in [n]$, intends to represent the finite number of instances of the weighted component type $wB(i)$ in the parametric system. 
We let $pw\B(r)= \{wB(i,j) \mid i \in [n], j \in [r(i)] \} $ and call it  the \emph{instantiation of} $pw\B$ w.r.t. $r$. The set of ports and the set of interactions of $pw\B(r)$ are the same as the corresponding ones in $p\B(r)$, hence we use for simplicity
the same symbols $P_{p\B(r)}$ and $I_{p\B(r)}$, respectively.
\hide{
Let $\mathcal{V} \subseteq \mathcal{X} $ be a finite set of first-order variables.  We let $P_{p\B(\mathcal{V})}= \{ p(x^{(i)}) \in P_{p\B(\mathcal{X})} \mid x^{(i)} \in \mathcal{V} \}$.
To interpret FOEIL formulas over $p\B$ we use the notion of an \emph{assignment} defined with respect to the set of variables $\mathcal{V}$ and the mapping $r$.  \hide{We let $\mathcal{A}_{r}=\lbrace (i,j)\mid i\in [n],j\in [r(i)]\rbrace$.} Formally, a $(\mathcal{V},r)$-\emph{assignment} is a mapping $\sigma : \mathcal{V} \rightarrow \mathbb{N}  $ such that $\sigma(\mathcal{V} \cap \mathcal{X}^{(i)} ) \subseteq [r(i)]$ for every $i \in [n]$. If $\sigma$ is a $(\mathcal{V},r)$-assignment, then $\sigma[x^{(i)} \rightarrow j]$  is the $(\mathcal{V} \cup \{x^{(i)}\},r)$-assignment which acts as $\sigma$ on $\mathcal{V}\setminus \{x^{(i)}\}$  and assigns $j$ to $x^{(i)}$. Intuitively, a $(\mathcal{V},r)$-assignment $\sigma$ assigns unique identifiers to each instance in a parametric system, w.r.t. the mapping $r$. }

We interpret wFOEIL formulas $\ps$ as series $\Vert \ps  \Vert$ over triples consisting of a mapping $r:[n] \rightarrow \mathbb{N}$, a $(\mathcal{V},r)$-assignment $\sigma$, and a word $w \in I_{p\B(r)}^*$. Intuitively, the use of weighted existential and universal concatenation (resp. shuffle) quantifiers ${\textstyle\sum\nolimits{^{\odot}}} x^{(i)}. \ps$ and ${\textstyle\prod\nolimits{^{\odot}}x^{(i)}}.\ps$ (resp. ${\textstyle\sum\nolimits{^{\varpi}} x^{(i)}}.\ps$ and ${\textstyle\prod\nolimits{^{\varpi}}x^{(i)}}.\ps $) serves to compute the weight of the partial and whole participation of the weighted component instances, determined by the application of the assignment $\sigma$ to $x^{(i)}$, in sequential (resp. interleaving) interactions.

\begin{definition}
Let $\ps$ be a \emph{wFOEIL} formula over a set $pw\mathcal{B}= \{wB(i,j) \mid i \in [n], j \geq 1 \} $ of parametric weighted components and $K$, and $\mathcal{V} \subseteq \mathcal{X}$ a finite set containing $\mathrm{free}(\psi)$. Then for every $r:[n] \rightarrow \mathbb{N}$, $(\mathcal{V},r)$-assignment $\sigma$, and $w  \in I_{p\B(r)}^*$ we define the value   $\Vert\ps\Vert(r,\sigma,w)$, inductively on the structure of $\ps$ as follows:

\begin{itemize}

     \item[-] $\Vert k \Vert(r,\sigma,w) = k$, 
     
     \item[-] $\Vert\psi\Vert(r,\sigma,w)\left\{
\begin{array}
[c]{rl}%
1 & \textnormal{ if }(r, \sigma,w) \models \psi\\
0 & \textnormal{ otherwise}%
\end{array}
,\right.  $

     \item[-]  $\Vert \ps_1 \oplus \ps_2 \Vert(r,\sigma, w) =\Vert \ps_1 \Vert(r,\sigma, w) + \Vert  \ps_2 \Vert(r,\sigma, w)$, 
 
\item[-]  $\Vert \ps_1 \otimes \ps_2 \Vert(r,\sigma, w) =\Vert \ps_1 \Vert(r,\sigma, w) \cdot \Vert  \ps_2 \Vert(r,\sigma, w)$,

 \item[-]  $\Vert \ps_1 \odot \ps_2 \Vert(r,\sigma, w) =\sum\nolimits_{w=w_1w_2}(\Vert\ps_1\Vert(r,\sigma,w_1) \cdot \Vert\ps_2\Vert(r,\sigma, w_2)) $,

\item[-]  $\Vert \ps_1 \varpi \ps_2 \Vert(r,\sigma, w) =\sum\nolimits_{w\in w_1\shuffle w_2}(\Vert\ps_1\Vert(r,\sigma,w_1) \cdot \Vert\ps_2\Vert(r, \sigma,w_2))$,

     \item[-]   $\left\Vert {\textstyle\sum x^{(i)}}. \ps \right \Vert(r,\sigma, w) = \sum\limits_{j \in [r(i)]}  \Vert \ps   \Vert (r, \sigma[x^{(i)}\rightarrow j],w)$,

     \item[-]   $\left\Vert {\textstyle\prod x^{(i)}}. \ps \right \Vert(r,\sigma, w) = \prod\limits_{j \in [r(i)]}  \Vert \ps   \Vert (r, \sigma[x^{(i)}\rightarrow j],w)$,

 \item[-]   $\left\Vert{\textstyle\sum\nolimits{^{\odot}}} x^{(i)}. \ps\right\Vert(r,\sigma,w) = \sum\limits_{1 \leq l_1 < \ldots <l_t \leq r(i)}\sum\limits_{w=w_{l_1}\ldots w_{l_t}}\prod\limits_{j=l_1, \ldots,l_t}  \Vert \ps   \Vert (r, \sigma[x^{(i)}\rightarrow j],w_j)$,

 \item[-]   $\left\Vert{\textstyle\prod\nolimits{^{\odot}}x^{(i)}}. \ps\right\Vert(r,\sigma,w) = \sum\limits_{w=w_1\ldots w_{r(i)}}\prod\limits_{1\leq j \leq r(i)}  \Vert \ps   \Vert (r, \sigma[x^{(i)}\rightarrow j],w_j)$,

\item[-]   $\left\Vert{\textstyle\sum\nolimits{^{\varpi}}} x^{(i)}. \ps\right\Vert(r,\sigma,w) = \sum\limits_{1 \leq l_1 < \ldots <l_t \leq r(i)}\sum\limits_{w\in w_{l_1} \shuffle \ldots \shuffle w_{l_t}}\prod\limits_{j=l_1, \ldots,l_t}  \Vert \ps   \Vert (r, \sigma[x^{(i)}\rightarrow j],w_j)$,

\item[-]   $\left\Vert{\textstyle\prod\nolimits{^{\varpi}}} x^{(i)}. \ps\right\Vert(r,\sigma,w) = \sum\limits_{w\in w_1 \shuffle \ldots \shuffle w_{r(i)}}\prod\limits_{1 \leq j \leq r(i)}  \Vert \ps   \Vert (r, \sigma[x^{(i)}\rightarrow j],w_j)$.    
\end{itemize}
\end{definition}

If $\ps$ is a wFOEIL sentence over $pw\B$ and $K$, then we simply write $\Vert\ps\Vert(r, w)$. Let also $\ps'$ be a wFOEIL sentence over $pw\mathcal{B}$ and $K$. Then,  $\ps$ and $\ps'$ are called \emph{equivalent w.r.t.} $r$ whenever $\Vert\ps\Vert(r, w)=\Vert\ps'\Vert(r, w)$, for every $w  \in I_{p\B(r)}^*$.

Now we are ready to formally define the concept of a parametric weighted component-based system.

\begin{definition}
A \emph{parametric weighted component-based system over} $K$ is a pair $(pw\B, \ps)$ where $pw\B=\{wB(i,j) \mid i\in [n], j \geq 1 \}$ is a set of parametric weighted components and $\ps$ is a \emph{wFOEIL} sentence over $pw\B$ and $K$.  
\end{definition}

In the sequel, for simplicity we refer to parametric weighted component-based systems simply as parametric weighted systems. We remind that in this work we focus on the architectures of parametric weighted systems. The study of parametric weighted systems' behavior is left for investigation in subsequent work as a part of parametric quantitative verification.

For our examples in the next subsection, we need the following macro wFOEIL formula. Let $pw\B=\{wB(i,j) \mid i\in [n], j \geq 1\}$ and $1 \leq i_1, \ldots, i_m \leq n$ be pairwise different indices. Let $p_{i_1} \in P(i_1), \ldots, p_{i_m} \in P(i_m)$ and $k_{i_1}, \ldots, k_{i_m}$ denote the weights in $K$ assigned to $p_{i_1}, \ldots, p_{i_m}$, respectively, i.e., $k_{i_1} =wt(i_1)(p_{i_1}), \ldots,  k_{i_m}=wt(i_m)(p_{i_m})$. We set
\begin{multline*}
\#_w\big(p_{i_1}(x^{(i_1)}) \otimes \ldots \otimes p_{i_m}(x^{(i_m)})\big)::= \big ((k_{i_1}\otimes p_{i_1}(x^{(i_1)})) \otimes  \ldots \otimes (k_{i_m}\otimes p_{i_m}(x^{(i_m)}))\big) \otimes \\ 
\bigg(\bigg(\bigwedge_{j =i_1, \ldots, i_m } \bigwedge_{p \in P(j)\setminus \{ p_j\}} \neg p(x^{(j)}) \bigg) \wedge 
\bigg(\bigwedge_{j=i_1, \ldots, i_m} \forall y^{(j)}(y^{(j)} \neq x^{(j)}).\bigwedge_{p \in P(j)} \neg p(y^{(j)}) \bigg ) \wedge \\
 \bigg( \bigwedge_{t \in [n]\setminus \{i_1, \ldots, i_m\} }\bigwedge_{p \in P(t)}\forall x^{(t)} .   \neg p(x^{(t)}) \bigg )\bigg) .
\end{multline*}

The weighted conjunctions in the right-hand side of the first line, in the above formula,   express that the ports appearing in the argument of $\#_w$ participate in the interaction with their corresponding weights. In the second line, the   
  double indexed conjunctions in the first pair of big parentheses disable all the other ports of the participating instances of components of type $i_1, \ldots, i_m$ described by variables $x^{(i_1)}, \ldots, x^{(i_m)}$, respectively; conjunctions in the second pair of parentheses disable all ports of remaining instances of component types $i_1, \ldots, i_m$. Finally, the last conjunct in the third line ensures that no ports in instances of remaining component types
participate in the interaction. Then we get 
\begin{multline*}
\#_w\big(p_{i_1}(x^{(i_1)}) \otimes \ldots \otimes p_{i_m}(x^{(i_m)})\big)\equiv \big (k_{i_1}\otimes \ldots \otimes k_{i_m}\big) \otimes \bigg(\big(p_{i_1}(x^{(i_1)}) \wedge  \ldots \wedge p_{i_m}(x^{(i_m)})\big) \\ \wedge 
\bigg(\bigwedge_{j =i_1, \ldots, i_m } \bigwedge_{p \in P(j)\setminus \{ p_j\}} \neg p(x^{(j)}) \bigg) \wedge 
\bigg(\bigwedge_{j=i_1, \ldots, i_m} \forall y^{(j)}(y^{(j)} \neq x^{(j)}).\bigwedge_{p \in P(j)} \neg p(y^{(j)}) \bigg ) \wedge \\
 \bigg( \bigwedge_{t \in [n]\setminus \{i_1, \ldots, i_m\} }\bigwedge_{p \in P(t)}\forall x^{(t)} .   \neg p(x^{(t)}) \bigg )\bigg) .
\end{multline*}

\subsection{Examples of wFOEIL sentences for parametric weighted architectures} \label{chap:examp}

In this subsection we present several examples of wFOEIL sentences describing concrete parametric architectures with quantitative features.

\begin{example}
\label{ma-sl}
\textbf{\emph{(Weighted Master/Slave)}} We present a \emph{wFOEIL} sentence for the parametric weighted Master/Slave architecture. Master/Slave architecture concerns two types of components, namely masters and slaves \cite{Ma:Co}. Every slave must be connected with exactly one master. Interactions among masters (resp. slaves) are not permitted.  
   
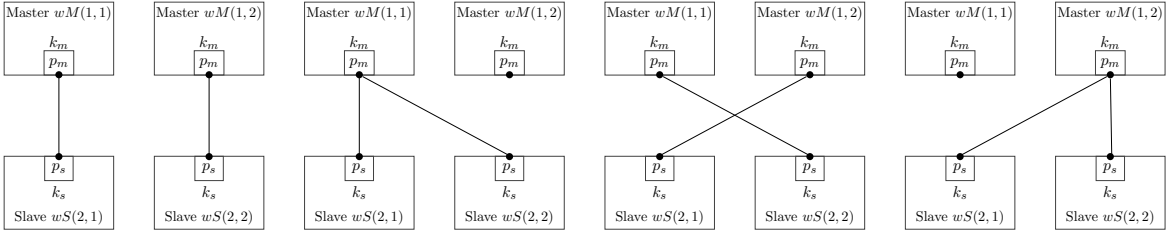
\begin{figure}[h]
\centering
\resizebox{1.0\linewidth}{!}{
\begin{tikzpicture}[>=stealth',shorten >=1pt,auto,node distance=1cm,baseline=(current bounding box.north)]
\tikzstyle{component}=[rectangle,ultra thin,draw=black!75,align=center,inner sep=9pt,minimum size=2.5cm,minimum height=1.8cm, minimum width=2.7cm]
\tikzstyle{port}=[rectangle,ultra thin,draw=black!75,minimum size=6mm]
\tikzstyle{bubble} = [fill,shape=circle,minimum size=5pt,inner sep=0pt]
\tikzstyle{type} = [draw=none,fill=none]

 \node [component,align=center] (a1)  {};
 \node [port] (a2) [below=-0.605cm of a1]  {$p_m$};
 \node[bubble] (a3) [below=-0.105cm of a1]   {};
 
\node []  (w1)  [above=0.4cm of a3]  {$k_{m}$}; 
\node[type]  [above=-0.6cm of a1]{{\small Master $wM(1,1)$}};

\node [component] (a4) [below=2cm of a1]  {};
\node [port,align=center,inner sep=5pt] (a5) [above=-0.6035cm of a4]  {$p_s$};
\node[bubble] (a6) [above=-0.105cm of a4]   {};
\node []  (w2)  [below=0.5cm of a6]  {$k_{s}$}; 
\node[type]  [below=-0.6cm of a4]{{\small Slave $wS(2,1)$}};

\path[-]          (a1)  edge                  node {} (a4);

 \node [component] (b1) [right=1cm of a1] {};
 \node [port] (b2) [below=-0.605cm of b1]  {$p_m$};
 \node[bubble] (b3) [below=-0.105cm of b1]   {};
 \node []  (w3)  [above=0.4cm of b3]  {$k_{m}$}; 
 \node[type]  [above=-0.6cm of b1]{{\small Master $wM(1,2)$}};

\node [component] (b4) [below=2cm of b1]  {};
\node [port,align=center,inner sep=5pt] (b5) [above=-0.6035cm of b4]  {$p_s$};
\node[bubble] (b6) [above=-0.105cm of b4]   {};
\node []  (w4)  [below=0.5cm of b6]  {$k_{s}$}; 
\node[type]  [below=-0.6cm of b4]{{\small Slave $wS(2,2)$}};

\path[-]          (b1)  edge                  node {} (b4);

 \node [component] (c1)[right=1cm of b1] {};
 \node [port] (c2) [below=-0.605cm of c1]  {$p_m$};
 \node[bubble] (c3) [below=-0.105cm of c1]   {};
 \node []  (w5)  [above=0.4cm of c3]  {$k_{m}$}; 
  \node[type]  [above=-0.6cm of c1]{{\small Master $wM(1,1)$}};

\node [component] (c4) [below=2cm of c1]  {};
\node [port,align=center,inner sep=5pt] (c5) [above=-0.6035cm of c4]  {$p_s$};
\node[bubble] (c6) [above=-0.105cm of c4]   {};
\node []  (w6)  [below=0.5cm of c6]  {$k_{s}$}; 
\node[type]  [below=-0.6cm of c4]{{\small Slave $wS(2,1)$}};

\path[-]          (c1)  edge                  node {} (c4);

 \node [component] (d1)[right=1cm of c1] {};
 \node [port] (d2) [below=-0.605cm of d1]  {$p_m$};
 \node[bubble] (d3) [below=-0.105cm of d1]   {};
 \node []  (w7)  [above=0.4cm of d3]  {$k_{m}$}; 
  \node[type]  [above=-0.6cm of d1]{{\small Master $wM(1,2)$}};

\node [component] (e4) [below=2cm of d1]  {};
\node [port,align=center,inner sep=5pt] (e5) [above=-0.6035cm of e4]  {$p_s$};
\node[] (i1) [above right=-0.25 cm and -0.25cm of e5]   {};
\node[bubble] (e6) [above=-0.105cm of e4]   {};
\node []  (w8)  [below=0.5cm of e6]  {$k_{s}$}; 
\node[type]  [below=-0.6cm of e4]{{\small Slave $wS(2,2)$}};

\path[-]          (c3)  edge  node {}           (i1);

 \node [component] (f1)[right=1cm of d1] {};
 \node [port] (f2) [below=-0.605cm of f1]  {$p_m$};
 \node[bubble] (f3) [below=-0.105cm of f1]   {};
 \node []  (w9)  [above=0.4cm of f3]  {$k_{m}$}; 
  \node[type]  [above=-0.6cm of f1]{{\small Master $wM(1,1)$}};

\node [component] (g4) [below=2cm of f1]  {};
\node [port,align=center,inner sep=5pt] (g5) [above=-0.6035cm of g4]  {$p_s$};
\node[] (i2) [above left=-0.25 cm and -0.25cm of g5]   {};
\node[bubble] (g6) [above=-0.105cm of g4]   {};
\node []  (w10)  [below=0.5cm of g6]  {$k_{s}$}; 
\node[type]  [below=-0.6cm of g4]{{\small Slave $wS(2,1)$}};

 \node [component] (h1)[right=1cm of f1] {};
 \node [port] (h2) [below=-0.605cm of h1]  {$p_m$};
 \node[bubble] (h3) [below=-0.105cm of h1]   {};
 \node []  (w11)  [above=0.4cm of h3]  {$k_{m}$}; 
  \node[type]  [above=-0.6cm of h1]{{\small Master $wM(1,2)$}};

\node [component] (j4) [below=2cm of h1]  {};
\node [port,align=center,inner sep=5pt] (j5) [above=-0.6035cm of j4]  {$p_s$};
\node[] (i3) [above right=-0.25 cm and -0.25cm of j5]   {};
\node[bubble] (j6) [above=-0.105cm of j4]   {};
\node []  (w12)  [below=0.5cm of j6]  {$k_{s}$}; 
\node[type]  [below=-0.6cm of j4]{{\small Slave $wS(2,2)$}};

\path[-]          (h3)  edge                  node {} (i2);

\path[-]          (f3)  edge                  node {} (i3);

 \node [component] (k1)[right=1cm of h1] {};
 \node [port] (k2) [below=-0.605cm of k1]  {$p_m$};
 \node[bubble] (k3) [below=-0.105cm of k1]   {};
 \node []  (w13)  [above=0.4cm of k3]  {$k_{m}$}; 
  \node[type]  [above=-0.6cm of k1]{{\small Master $wM(1,1)$}};

\node [component] (k4) [below=2cm of k1]  {};
\node [port,align=center,inner sep=5pt] (k5) [above=-0.6035cm of k4]  {$p_s$};
\node[] (i4) [above left=-0.25 cm and -0.25cm of k5]   {};
\node[bubble] (k6) [above=-0.105cm of k4]   {};
\node []  (w14)  [below=0.5cm of k6]  {$k_{s}$}; 
\node[type]  [below=-0.6cm of k4]{{\small Slave $wS(2,1)$}};

 \node [component] (l1)[right=1cm of k1] {};
 \node [port] (l2) [below=-0.605cm of l1]  {$p_m$};
 \node[bubble] (l3) [below=-0.105cm of l1]   {};
 \node []  (w15)  [above=0.4cm of l3]  {$k_{m}$}; 
  \node[type]  [above=-0.6cm of l1]{{\small Master $wM(1,2)$}};

\node [component] (m4) [below=2cm of l1]  {};
\node [port,align=center,inner sep=5pt] (m5) [above=-0.6035cm of m4]  {$p_s$};
\node[] (i5) [above right=-0.25 cm and -0.45cm of m5]   {};
\node[bubble] (m6) [above=-0.105cm of m4]   {};
\node []  (w16)  [below=0.5cm of m6]  {$k_{s}$}; 
\node[type]  [below=-0.6cm of m4]{{\small Slave $wS(2,2)$}};

\path[-]          (l3)  edge                  node {} (i4);

\path[-]          (l3)  edge                  node {} (i5);

\end{tikzpicture}}
\caption{Weighted Master/Slave architecture.}
\label{w-m-s}
\end{figure}

\noindent We let $\mathcal{X}^{(1)}, \mathcal{X}^{(2)}$ denote the sets of variables of master and slave weighted component instances, respectively. We denote by $p_m$ the port of master weighted component and by $p_s$ the port of slave weighted component. Then, the \emph{wFOEIL} sentence $\ps$ representing parametric weighted Master/Slave architecture is
$$\ps={\textstyle\prod\nolimits{^{\odot}}x^{(2)}} {\textstyle\sum x^{(1)}}. \#_w( p_m(x^{(1)}) \otimes p_s(x^{(2)})). $$

\noindent An instantiation of the weighted parametric Master/Slave architecture for two masters and two slaves, i.e., for  $r(1)=r(2)=2$ is shown in Figure \ref{w-m-s}. Let $\{p_m(1),p_m(2),p_s(1),p_s(2)\}$ be the set of ports. Then, consider $w_1=\{p_m(1),p_s(1)\}\{p_m(2),p_s(2)\}$, $w_2=\{p_m(1),p_s(1)\}\{p_m(1),p_s(2)\}$,
 $w_3=\{p_m(2),p_s(1)\}\{p_m(1),p_s(2)\}$, and $w_4=\{p_m(2),p_s(1)\}\{p_m(2),p_s(2)\}$, that correspond to the 
four possible connections for the components in the system as shown in Figure \ref{w-m-s}. Then, the values
$\Vert \ps \Vert (r,w_1)$, $\Vert \ps \Vert (r,w_2)$, $\Vert \ps \Vert (r,w_3)$,
 and $\Vert \ps \Vert (r,w_4)$
return the cost of the implementation of each of the four possible interactions in the architecture, according to the underlying semiring.
In turn, the ``sum" $\Vert \ps \Vert (r,w_1)+\Vert \ps \Vert (r,w_2)+\Vert \ps \Vert (r,w_3)
 +\Vert \ps \Vert (r,w_4)$
equals for instance, in the semiring of natural numbers to the total cost for executing
the possible connections in the architecture.

\end{example}

\begin{example}
\label{str}
\textbf{\emph{(Weighted Star)}} Star architecture has only one component type with a unique port, namely $p$. One instance is considered as the center in the sense that every other instance has to be connected with it. No other interaction is permitted. Figure \ref{w-star} represents the Star architecture for five instances and center $wS(1,1)$.

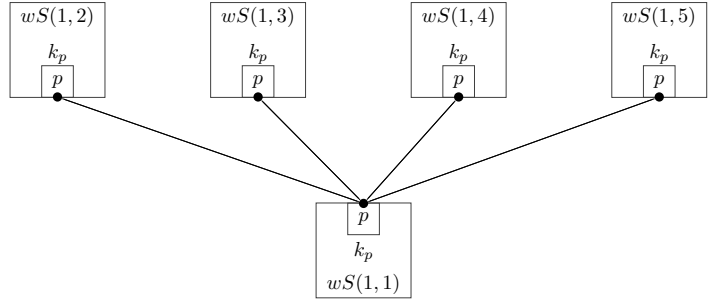
\begin{figure}[h]
\centering
\resizebox{0.6\linewidth}{!}{
\begin{tikzpicture}[>=stealth',shorten >=1pt,auto,node distance=1cm,baseline=(current bounding box.north)]
\tikzstyle{component}=[rectangle,ultra thin,draw=black!75,align=center,inner sep=9pt,minimum size=1.8cm]
\tikzstyle{port}=[rectangle,ultra thin,draw=black!75,minimum size=6mm]
\tikzstyle{bubble} = [fill,shape=circle,minimum size=5pt,inner sep=0pt]
\tikzstyle{type} = [draw=none,fill=none] 

\node [component] (a1) {};
\node [port] (a2) [above=-0.605cm of a1]  {$p$};
\node[bubble] (a3) [above=-0.105cm of a1]   {};
 
 \node []  (w1)  [below=0.5cm of a3]  {$k_{p}$};  
 \node []  (s1)  [below=1.1cm of a3]  {$wS(1,1)$}; 

\node [component] (a4) [above left =2.cm and 4cm of a1]  {};
\node [port] (a5) [below=-0.605cm of a4]  {$p$};
\node[] (i1) [above left=-0.15 cm and -0.73cm of a4]   {};
\node[bubble] (a6) [below=-0.105cm of a4]   {};

\node []  (w2)  [above=0.4cm of a6]  {$k_{p}$}; 
 \node []  (s2)  [above=1.1cm of a6]  {$wS(1,2)$}; 

\path[-]          (a3)  edge                  node {} (a6);
\path[-]          (a6)  edge                  node {} (a3);

\node [component] (b4) [right =2 cm of a4]  {};
\node [port] (b5) [below=-0.605cm of b4]  {$p$};
\node[] (i2) [above left=-0.25 cm and -0.30cm of b5]   {};
\node[bubble] (b6) [below=-0.105cm of b4]   {};
\node []  (w3)  [above=0.4cm of b6]  {$k_{p}$}; 
 \node []  (s3)  [above=1.1cm of b6]  {$wS(1,3)$};

\path[-]          (a3)  edge                  node {} (b6);
\path[-]          (b6)  edge                  node {} (a3);
 
\node [component] (c4) [right =2 cm of b4]  {};
\node [port] (c5) [below=-0.605cm of c4]  {$p$};
\node[] (i3) [above right=-0.20 cm and -0.38cm of c5]   {};
\node[bubble] (c6) [below=-0.105cm of c4]   {};
\node []  (w4)  [above=0.4cm of c6]  {$k_{p}$}; 
 \node []  (s4)  [above=1.1cm of c6]  {$wS(1,4)$}; 

\path[-]          (a3)  edge                  node {} (c6);
\path[-]          (c6)  edge                  node {} (a3);
 
\node [component] (d4) [right =2 cm of c4]  {};
\node [port] (d5) [below=-0.605cm of d4]  {$p$};
\node[] (i4) [above right=-0.20 cm and -0.27cm of d5]   {};
\node[bubble] (d6) [below=-0.105cm of d4]   {};
\node []  (w5)  [above=0.4cm of d6]  {$k_{p}$}; 
 \node []  (s5)  [above=1.1cm of d6]  {$wS(1,5)$}; 
\path[-]          (a3)  edge                  node {} (d6);
\path[-]          (d6)  edge                  node {} (a3);

\end{tikzpicture}}
\caption{Weighted Star architecture.}
\label{w-star}
\end{figure}

\noindent The \emph{wFOEIL} sentence $\ps$ for the parametric weighted Star architecture is as follows:
$$ \ps= {\textstyle\sum x^{(1)}}  {\textstyle\prod\nolimits{^{\odot}}y^{(1)}} (x^{(1)}\neq y^{(1)}). \#_w(p(x^{(1)}) \otimes p(y^{(1)})). $$

\noindent Let $a_1=\lbrace p(1),p(2)\rbrace,a_2=\lbrace p(1),p(3)\rbrace,a_3=\lbrace p(1),p(4)\rbrace,a_4=\lbrace p(1),p(5)\rbrace$ refer to the
interactions of $wS(1,2)$, $wS(1,3)$, $wS(1,4)$, and $wS(1,5)$ respectively, to $wS(1,1)$. 
The value $\Vert \ps \Vert (r,w)$ for $w=a_1a_2a_3a_4$ is the cost of the
 implementation of this architecture. Similarly, we get the cost of all possible 
Star architectures with center $wS(1,2),wS(1,3), wS(1,4), wS(1,5)$, respectively. Then, if we ``sum up" 
those values we get for instance, in the semiring of rational numbers, the total cost. 
\end{example}

\begin{example}
\label{pi-fi}
\textbf{\emph{(Weighted Pipes/Filters)}} Pipes/Filters architecture involves two types of components, namely pipes and filters \cite{Ga:An}. Pipe (resp. filter) component has an entry  port  $p_e$ and an output port $p_o$ (resp. $f_e,f_o$). Every filter $F$ is connected to two separate pipes $P$ and $P'$ via interactions  $\{f_e, p_o\}$ and $\{f_o,p'_e \}$, respectively. Every pipe $P$ can be connected to at most one filter $F$ via an interaction $\{p_o, f_e\}$. Any other interaction is not permitted. Figure \ref{w-p-f} shows an instantiation of the parametric weighted Pipes/Filters architecture for four pipes and three filters,
i.e., for $r(1)=4$ and $r(2)=3$. 

\begin{figure}[h]
\centering
\resizebox{1.0\linewidth}{!}{
\begin{tikzpicture}[>=stealth',shorten >=1pt,auto,node distance=1cm,baseline=(current bounding box.north)]
\tikzstyle{component}=[rectangle,ultra thin,draw=black!75,align=center,inner sep=9pt,minimum size=1.5cm,minimum width=2.5cm,minimum height=2cm]
\tikzstyle{port}=[rectangle,ultra thin,draw=black!75,minimum size=6mm]
\tikzstyle{bubble} = [fill,shape=circle,minimum size=5pt,inner sep=0pt]
\tikzstyle{type} = [draw=none,fill=none]

 \node [component] (a1) {};

\node[bubble] (a2) [right=-0.105cm of a1]   {};   
\node [port] (a3) [right=-0.605cm of a1]  {$p_e$};  
\node []  (w1)  [below=0.0cm of a3]  {$k_{p_e}$};

\node[bubble] (a4) [left=-0.105cm of a1]   {};   
\node [port] (a5) [left=-0.615cm of a1]  {$p_o$};  
\node []  (w2)  [below=0.0cm of a5]  {$k_{p_o}$};
\node[type]  [above=-0.6cm of a1]{{\small Pipe $P(2,1)$}};

\node [component] (b2) [right=2cm of a1]  {};
\node[bubble] (b3) [right=-0.105cm of b2]   {};   
\node [port] (b4) [right=-0.605cm of b2]  {$f_e$};  
\node []  (w3)  [below=0.0cm of b4]  {$k_{f_e}$};

\node[bubble] (b5) [left=-0.105cm of b2]   {};   
\node [port] (b6) [left=-0.615cm of b2]  {$f_o$};  
\node []  (w4)  [below=0.0cm of b6]  {$k_{f_o}$};
\node[type]  [above=-0.6cm of b2]{{\small Filter $F(1,1)$}};
\path[-]          (a3)  edge                  node {} (b6);

 \node [component] (c2) [right=2cm of b2]{};

\node[bubble] (c3) [right=-0.105cm of c2]   {};   
\node [port] (c4) [right=-0.61cm of c2]  {$p_e$};  
\node []  (w5)  [below=0.0cm of c4]  {$k_{p_e}$};

\node[bubble] (c5) [left=-0.105cm of c2]   {};   
\node [port] (c6) [left=-0.615cm of c2]  {$p_o$};  
\node []  (w6)  [below=0.0cm of c6]  {$k_{p_o}$};
\node[type]  [above=-0.6cm of c2]{{\small Pipe $P(2,2)$}};

\path[-]          (b4)  edge                  node {} (c6);

\node [component] (d2) [above right= 0.5cm and 2cm of c2]  {};
\node[bubble] (d3) [right=-0.105cm of d2]   {};  
\node [port] (d4) [right=-0.600cm of d2]  {$f_e$};  
\node []  (w7)  [below=0.0cm of d4]  {$k_{f_e}$};

\node[bubble] (d5) [left=-0.105cm of d2]   {};   
\node [port] (d6) [left=-0.61cm of d2]  {$f_o$};  
\node[] (i1) [above left=-0.3 cm and -0.20cm of d6]   {};
\node []  (w8)  [below=0.0cm of d6]  {$k_{f_o}$};
\node[type]  [above=-0.6cm of d2]{Filter $F(1,2)$};
\path[-]          (c3)  edge                  node {} (i1);

\node [component] (e2) [below right= 0.5cm and 2cm of c2]  {};
\node[bubble] (e3) [right=-0.105cm of e2]   {};   
\node [port] (e4) [right=-0.600cm of e2]  {$f_e$};  
\node []  (w9)  [below=0.0cm of e4]  {$k_{f_e}$};

\node[bubble] (e5) [left=-0.105cm of e2]   {};   
\node [port] (e6) [left=-0.61cm of e2]  {$f_o$}; 
\node[] (i2) [above left=-0.62 cm and -0.30cm of e6]   {};
\node []  (w10)  [below=0.0cm of e6]  {$k_{f_o}$};

\node[type]  [above=-0.6cm of e2]{{\small Filter $F(1,3)$}};
\path[-]          (c3)  edge                  node {} (i2);

 \node [component] (f2) [right=2cm of d2]{};

\node[bubble] (f3) [right=-0.105cm of f2]   {};   
\node [port] (f4) [right=-0.61cm of f2]  {$p_e$}; 
\node []  (w11)  [below=0.0cm of f4]  {$k_{p_e}$};

\node[bubble] (f5) [left=-0.105cm of f2]   {};   
\node [port] (f6) [left=-0.615cm of f2]  {$p_o$};  
\node[] (i3) [above left=-0.43 cm and -0.25cm of f6]   {};
\node []  (w12)  [below=0.0cm of f6]  {$k_{p_o}$};

\node[type]  [above=-0.6cm of f2]{{\small Pipe $P(2,3)$}};

\path[-]          (d3)  edge                  node {} (i3);

 \node [component] (g2) [right=2cm of e2]{};

\node[bubble] (g3) [right=-0.105cm of g2]   {};   
\node [port] (g4) [right=-0.61cm of g2]  {$p_e$};  
\node []  (w13)  [below=0.0cm of g4]  {$k_{p_e}$};

\node[bubble] (g5) [left=-0.105cm of g2]   {};   
\node [port] (g6) [left=-0.615cm of g2]  {$p_o$};  
\node[] (i4) [above left=-0.43 cm and -0.25cm of g6]   {};
\node []  (w14)  [below=0.0cm of g6]  {$k_{p_o}$};

\node[type]  [above=-0.6cm of g2]{{\small Pipe $P(2,4)$}};

\path[-]          (e3)  edge                  node {} (i4);

\end{tikzpicture}}
\caption{Weighted Pipes/Filters architecture}
\label{w-p-f}
\end{figure}
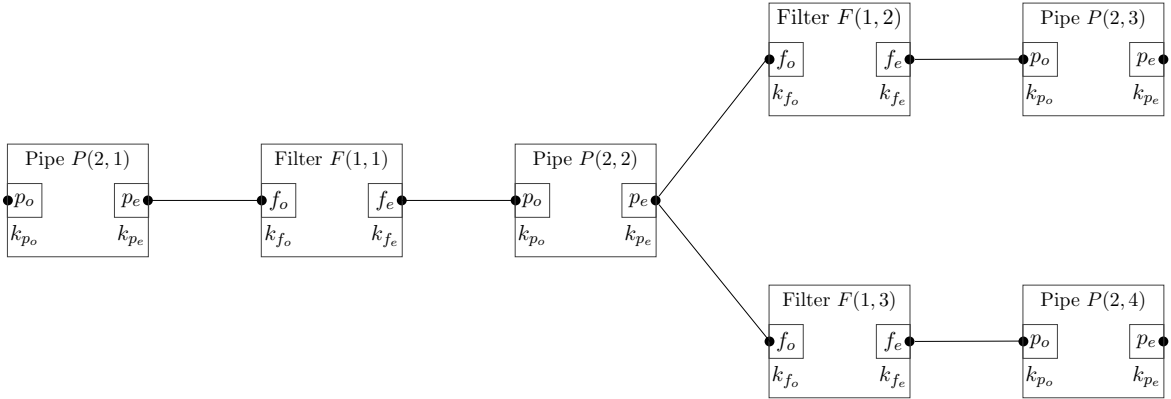

\noindent We denote by $\mathcal{X}^{(1)}$ and $\mathcal{X}^{(2)}$ the sets of variables  corresponding to pipe and filter weighted component instances, respectively. The subsequent \emph{wFOEIL} sentence $\ps$ describes the parametric weighted Pipes/Filters architecture.
\begin{multline*}
\ps= {\textstyle\prod\nolimits{^{\odot}} x^{(2)}}{\textstyle\sum x^{(1)}} {\textstyle\sum y^{(1)}}(x^{(1)} \neq y^{(1)}).\bigg(\#_w(p_o(x^{(1)}) \otimes f_e(x^{(2)}))\odot    \#_w(p_e(y^{(1)}) \otimes f_o(x^{(2)})) \bigg) \bigotimes     \\
\qquad \quad \bigg ( \forall z^{(1)}\forall y^{(2)}  \bigg(\forall z^{(2)} (y^{(2)} \neq z^{(2)}).  
\big( \big(\mathrm{true}* (p_o(z^{(1)}) \wedge f_e(y^{(2)}))*\mathrm{true}\big) \wedge \\ \qquad \qquad \qquad \big(\neg \big( \mathrm{true}* (p_o(z^{(1)}) \wedge f_e(z^{(2)}))*
\mathrm{true}\big)\big)\big)\bigg) \vee \hide{\\
\qquad \qquad \qquad \qquad \qquad \qquad} \bigg(\neg\big( \mathrm{true} *  (p_o(z^{(1)}) \wedge f_e(y^{(2)}))  * 
\mathrm{true}\big)\bigg) \bigg).
\end{multline*}
\noindent In the above weighted sentence the arguments of $\#_w$ express the cost for the connection of a filter entry (resp. output) port with a pipe output (resp. entry) port. The \emph{FOEIL} subformula after the big $\otimes$ ensures that no more than one filter entry port will be connected to the same pipe output port.

\noindent Let $w_1=\{f_e(1),p_o(2)\}\{f_o(1),p_e(1)\}\{f_e(2),p_o(3)\}\{f_o(2),p_e(2)\}\{f_e(3),p_o(4)\}\{f_o(3),p_e(2)\}$ and $w_2=\{f_e(1),p_o(3)\}\{f_o(1),p_e(4)\}\{f_e(2),p_o(4)\}\{f_o(2),p_e(1)\}\{f_e(3),p_o(2)\}\{f_o(3),p_e(4)\}$ that encode 
two possible implementations of the interactions for the architecture instantiation of Figure \ref{w-p-f}.
 Then, the value $\Vert \ps \Vert (r,w_1)+\Vert \ps \Vert (r,w_2)$ represents for instance, in the semiring of natural numbers the total cost of performing the interactions.

\end{example}

\begin{example}
\label{repo}
\textbf{\emph{(Weighted Repository)}} Repository architecture involves two types of components, namely repository and data accessor \cite{Cl:Do}. Repository component is unique and all data accessors are connected to it. No connection among data accessors exists. Both repository and data accessors have one port each called $p_r,p_d$, respectively. Figure \ref{w-rep} shows the instantiation of parametric weighted Repository architecture for four data accessors. 

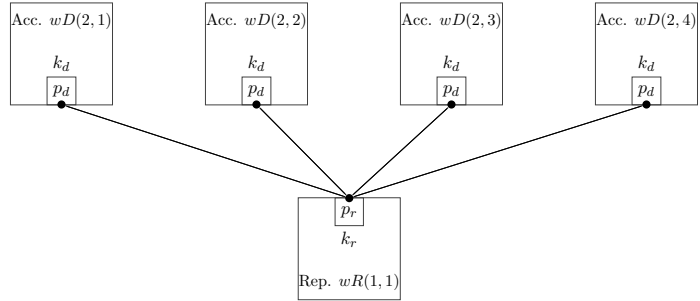
\begin{figure}[h]
\centering
\resizebox{0.6\linewidth}{!}{
\begin{tikzpicture}[>=stealth',shorten >=1pt,auto,node distance=1cm,baseline=(current bounding box.north)]
\tikzstyle{component}=[rectangle,ultra thin,draw=black!75,align=center,inner sep=9pt,minimum size=2.2cm]
\tikzstyle{port}=[rectangle,ultra thin,draw=black!75,minimum size=6mm]
\tikzstyle{bubble} = [fill,shape=circle,minimum size=5pt,inner sep=0pt]
\tikzstyle{type} = [draw=none,fill=none] 

\node [component] (a1) {};
\node [port] (a2) [above=-0.605cm of a1]  {$p_r$};
\node[bubble] (a3) [above=-0.105cm of a1]   {};
 
 \node []  (w1)  [below=0.5cm of a3]  {$k_{r}$};  
 \node []  (s1)  [below=1.4cm of a3]  {{\small Rep. $wR(1,1)$}};

\node [component] (a4) [above left =2.0cm and 4cm of a1]  {};
\node [port] (a5) [below=-0.605cm of a4]  {$p_d$};
\node[] (i1) [above left=-0.15 cm and -0.73cm of a4]   {};
\node[bubble] (a6) [below=-0.105cm of a4]   {};

\node []  (w2)  [above=0.5cm of a6]  {$k_{d}$}; 
\node []  (s2)  [above=1.4cm of a6]  {{\small Acc. $wD(2,1)$}};

\path[-]          (a3)  edge                  node {} (a6);
\path[-]          (a6)  edge                  node {} (a3);

\node [component] (b4) [right =2 cm of a4]  {};
\node [port] (b5) [below=-0.605cm of b4]  {$p_d$};
\node[] (i2) [above left=-0.25 cm and -0.30cm of b5]   {};
\node[bubble] (b6) [below=-0.105cm of b4]   {};
\node []  (w3)  [above=0.5cm of b6]  {$k_{d}$}; 
\node []  (s3)  [above=1.4cm of b6]  {{\small Acc. $wD(2,2)$}};

\path[-]          (a3)  edge                  node {} (b6);
\path[-]          (b6)  edge                  node {} (a3);
 
\node [component] (c4) [right =2 cm of b4]  {};
\node [port] (c5) [below=-0.605cm of c4]  {$p_d$};
\node[] (i3) [above right=-0.20 cm and -0.38cm of c5]   {};
\node[bubble] (c6) [below=-0.105cm of c4]   {};
\node []  (w4)  [above=0.5cm of c6]  {$k_{d}$};
\node []  (s4)  [above=1.4cm of c6]  {{\small Acc. $wD(2,3)$}};
 
\path[-]          (a3)  edge                  node {} (c6);
\path[-]          (c6)  edge                  node {} (a3);
 
\node [component] (d4) [right =2 cm of c4]  {};
\node [port] (d5) [below=-0.605cm of d4]  {$p_d$};
\node[] (i4) [above right=-0.20 cm and -0.27cm of d5]   {};
\node[bubble] (d6) [below=-0.105cm of d4]   {};
\node []  (w5)  [above=0.5cm of d6]  {$k_{d}$}; 
\node []  (s5)  [above=1.4cm of d6]  {{\small Acc. $wD(2,4)$}};

\path[-]          (a3)  edge                  node {} (d6);
\path[-]          (d6)  edge                  node {} (a3);

\end{tikzpicture}}
\caption{Weighted Repository architecture.}
\label{w-rep}
\end{figure}

\noindent The subsequent \emph{wFOEIL} sentence $\ps$ characterizes the parametric Repository architecture with weighted features. Variable set $\mathcal{X}^{(1)}$ refers to  instances of repository component and variable set $\mathcal{X}^{(2)}$ to instances of data accessor component.  
$$ \ps= {\textstyle\sum x^{(1)}} {\textstyle\prod\nolimits{^{\odot}} x^{(2)}}.  \#_w(p_r(x^{(1)}) \otimes
p_d(x^{(2)})). $$

\noindent Let $K$ be the fuzzy semiring, and consider $a_1= \{p_r(1),p_d(1)\}$, $a_2=\{p_r(1),p_d(2)\}$, $a_3= \{p_r(1),p_d(3)\}$, $a_4=\{p_r(1),p_d(4)\}$
 which represent each of the four connections for the architecture instantiation of Figure \ref{w-rep}. Then, the value
 $\Vert \ps \Vert (r,w)$ for $w=a_1a_2a_3a_4$ is the cost for implementing the interactions
computed in a fuzzy framework system.

\end{example}

The interactions of  parametric weighted architectures discussed in  Examples \ref{ma-sl}-\ref{repo} can be executed with arbitrary order. Hence, wFOEIL can describe sufficiently parametric weighted architectures with no order restrictions in the allowed interactions. Next, we provide three more examples of architectures with quantitative characteristics, namely weighted Blackboard, weighted Request/Response, and weighted Publish/Subscribe where the order of interactions constitutes a main feature.

\begin{example}
\label{blackboard}
\textbf{\emph{(Weighted Blackboard)}} The subsequent \emph{wFOEIL} sentence $\ps$ encodes the cost of the interactions of weighted Blackboard architecture, described in Example \ref{ex_b_blackboard}, in the parametric setting. We consider three set of variables, namely $\mathcal{X}^{(1)}, \mathcal{X}^{(2)}, \mathcal{X}^{(3)},$ for the weighted component instances of blackboard, controller, and knowledge sources components, respectively.  
\begin{multline*}
\ps={\textstyle\sum x^{(1)}} {\textstyle\sum x^{(2)}}.\bigg( \#_w (p_d(x^{(1)})\otimes p_r(x^{(2)})) \odot \bigg({\textstyle\prod^{\varpi} x^{(3)}}.\#_w (p_d(x^{(1)})\otimes p_n(x^{(3)}))\bigg) \odot \\ \bigg({\textstyle\sum^{\varpi} y^{(3)}}. \big(\#_w(p_l(x^{(2)})\otimes p_t(y^{(3)})) \odot  \#_w (p_e(x^{(2)})\otimes p_w(y^{(3)})\otimes p_a(x^{(1)}))\big) \bigg)\bigg).
\end{multline*}
\end{example}

\begin{example}
\label{re-re}
\textbf{\emph{(Weighted Request/Response)}} Next we present a \emph{wFOEIL} sentence $\ps$ for weighted Request/Response architecture, described in Example \ref{b-r-r}, in the parametric setting. We consider the variable sets $\mathcal{X}^{(1)},\mathcal{X}^{(2)},\mathcal{X}^{(3)}$, and $\mathcal{X}^{(4)}$  referring to weighted component instances of service registry, service, client, and coordinator component, respectively. 
\begin{multline*}
\ps= \bigg( {\textstyle\sum x^{(1)}}.\bigg(\big({\textstyle\prod^{\varpi} x^{(2)}}. \#_w (p_e(x^{(1)})\otimes p_r(x^{(2)}))\big) \odot \\ \big({\textstyle\prod^{\varpi} x^{(3)}}. (\#_w (p_l(x^{(3)})\otimes p_u(x^{(1)})) \odot \#_w(p_o(x^{(3)})\otimes p_t(x^{(1)}))) \big )\bigg)\bigg) \odot \\ \bigg({\textstyle\sum^{\varpi} y^{(2)}}{\textstyle\sum x^{(4)}}{\textstyle\sum^{\odot}y^{(3)}}.\x \otimes \bigg(\forall y^{(4)}\forall z^{(3)}\forall z^{(2)}.\big( \theta \vee \big(\forall t^{(3)}\forall t^{(2)}(z^{(2)}\neq t^{(2)}).\theta'\big)\big)\bigg)\bigg)
\end{multline*}
\noindent where the \emph{wEPIL} formula $\x$ is  given by:

\

$ \x= \#_w (p_n(y^{(3)})\otimes p_m(x^{(4)})) \odot \#_w (p_q(y^{(3)})\otimes p_a(x^{(4)})\otimes p_g (y^{(2)})) \odot  \\ \text{\qquad \qquad \qquad \qquad \qquad \qquad \qquad \qquad \qquad\qquad \qquad } \#_w (p_c(y^{(3)})  \otimes p_d(x^{(4)}) \otimes p_s(y^{(2)}))$,

\

\noindent and the \emph{EPIL} formulas $\theta$ and $\theta'$ are given respectively, by

$\theta=\neg (\mathrm{true}* \# (p_q(z^{(3)})\wedge p_a(y^{(4)})\wedge p_g (z^{(2)}))*\mathrm{true})$,

\

\noindent and

\

$\theta' =(\mathrm{true} *  \# (p_q(z^{(3)})\wedge p_a(y^{(4)})\wedge p_g (z^{(2)}))*\mathrm{true}) \wedge \\
 \text{ \qquad \qquad \qquad \qquad \qquad \qquad \qquad \ \ \ \ } \neg( \mathrm{true} * \# (p_q(t^{(3)})\wedge p_a(y^{(4)})\wedge p_g (t^{(2)}))* \mathrm{true}).$

\

\noindent The \emph{EPIL} subformula $\forall y^{(4)}\forall z^{(3)} \forall z^{(2)} .\big( \theta \vee \big(\forall t^{(3)}\forall t^{(2)}(z^{(2)}\neq t^{(2)}).\theta'\big)\big)$   in $\ps$ serves as a constraint to ensure that a unique coordinator is assigned to each service.    
\end{example}

\begin{example}\textbf{\emph{(Weighted Publish/Subscribe)}}
\label{pu-su}
We consider weighted Publish/Subscribe architecture, described in Example \ref{b-pu-su}, in the parametric setting. In the subsequent \emph{wFOEIL} sentence $\ps$, we let variable sets $\mathcal{X}^{(1)},\mathcal{X}^{(2)},\mathcal{X}^{(3)}$ correspond to publisher, topic, and subscriber weighted component instances, respectively. 
\begin{multline*}
\ps= {\textstyle\sum^{\varpi} x^{(2)}}. \Bigg( \bigg( {\textstyle\sum^{\varpi} x^{(1)}}.\big(\#_w (p_a(x^{(1)})\otimes p_n (x^{(2)})) \odot \#_w (p_t(x^{(1)})\otimes p_r(x^{(2)}))\big) \bigg)  
\odot \\ \bigg({\textstyle\sum^{\varpi} x^{(3)}}. \big( \#_w (p_e(x^{(3)})\otimes p_c(x^{(2)})) \odot \#_w(p_g(x^{(3)})\otimes p_s (x^{(2)})) \odot  \#_w(p_d(x^{(3)})\otimes p_f (x^{(2)}))\big)\bigg)\Bigg).
\end{multline*}
\end{example}

Existing work \cite{Bo:Ch,Bo:St,Ko:Pa,Ma:Co} studied the architectures of the presented examples in the qualitative setting. On the other hand, the work of \cite{Ka:We,Pa:On,Pa:We} consider no execution order of the weighted interactions, and the parametric setting is considered only in \cite{Pa:On}.

Observe that in the presented examples,
 whenever it is defined a unique instance for a weighted component type we may also consider the corresponding set of variables as a singleton. 
\hide{ 
\subsection{Parametric BIP ab-models}
We introduce parametric BIP ab-models consisting of sets of parametric components and FOEIL sentences over them. We define the semantics of a parametric BIP ab-model as a BIP ab-model w.r.t. a mapping which assigns a concrete number of instances to every component type. In the subsequent definition, we assume that the FOEIL sentence $\psi$ is well-defined, i.e., no more that one port in each component instance participates in every interaction. If it is not the case, we can trivially achieve this, by taking the conjunction of $\psi$ with the FOEIL sentence $\forall x \bar\forall y. \bigwedge\limits_{p \neq p'}(\neg p(x,y) \vee \neg p'(x,y)) $.

\begin{definition}
A \emph{parametric BIP ab-model} is a pair $(p\B, \psi)$ where $p\B=\{B(i,j) \mid i \in [n],  j \geq 1 \}$ is a set of parametric components and $\psi$ is a \emph{FOEIL} sentence over $p\B$.
\end{definition}

\noindent For the semantics of parametric BIP ab-models we need the next result.

\begin{proposition}\label{FOEIL_to_EPIL}
Let $p\mathcal{B}= \{B(i,j) \mid i \in [n], j \geq 1 \} $ be a set of parametric components, $\psi$ a \emph{FOEIL} formula over $p\B$, $r:[n] \rightarrow \mathbb{N}$ a mapping, \blue{$\mathcal{V}\subseteq \mathcal{X} \cup \mathcal{Y}$}  a finite set containing $\mathrm{free}(\psi)$, and $\sigma$ a $(\mathcal{V},r)$-assignment. Then, we can effectively construct an \emph{EPIL} formula $\phi_{\psi,\sigma}$ over $P_{p\B(r)}$ such that $(r,\sigma, w) \models \psi$ iff $w \models \phi_{\psi, \sigma}$ for every $w \in I(P_{p\B(r)})^*$.
\end{proposition}

\begin{proof} We construct $\phi_{\psi, \sigma}$ by induction on the structure of $\psi$ as follows.
\begin{itemize}
\item[-] If $\psi=\phi$, then $\phi_{\psi,\sigma} = \phi$.
\item[-] If $\psi=(x=x')$, then $\phi_{\psi,\sigma}=
 \left\{
\begin{array}
[c]{ll}%
\mathrm{true} & \textnormal{ if } \sigma(x)  = \sigma(x')\\
\mathrm{false} & \textnormal{ otherwise}%
\end{array}
.\right.$ 
\item[-] If $\psi=(y=y')$, then $\phi_{\psi,\sigma}=
 \left\{
\begin{array}
[c]{ll}%
\mathrm{true} & \textnormal{ if } \sigma(y)  = \sigma(y')\\
\mathrm{false} & \textnormal{ otherwise}%
\end{array}
.\right.$ 
\item[-] If $\psi=\neg(x=x')$, then $\phi_{\psi,\sigma}=
 \left\{
\begin{array}
[c]{ll}%
\mathrm{true} & \textnormal{ if } \sigma(x)  \neq \sigma(x')\\
\mathrm{false} & \textnormal{ otherwise}%
\end{array}
.\right.$ 
\item[-] If $\psi=\neg (y=y')$, then $\phi_{\psi,\sigma}=
 \left\{
\begin{array}
[c]{ll}%
\mathrm{true} & \textnormal{ if } \sigma(y)  \neq \sigma(y')\\
\mathrm{false} & \textnormal{ otherwise}%
\end{array}
.\right.$ 
\item[-] If $\psi=\psi_1 \vee \psi_2$, then $\phi_{\psi,\sigma}=\phi_{\psi_1,\sigma}\vee \phi_{\psi_2,\sigma}$.
\item[-] If $\psi=\psi_1 \wedge \psi_2$, then $\phi_{\psi,\sigma}=\phi_{\psi_1,\sigma}\wedge \phi_{\psi_2,\sigma}$.
\item[-] If $\psi = \psi_1 \barwedge \psi_2$, then $\phi_{\psi,\sigma}= \phi_{\psi_1,\sigma} \barwedge \phi_{\psi_2,\sigma}$. 
\item[-] If $\psi=\exists x\exists y . \psi'$, then $\phi_{\psi,\sigma}=\bigvee\limits_{\substack{i\in [n]}} \bigvee\limits_{\substack{j\in [r(i)]}}\phi_{\psi',\sigma[x \rightarrow i, y \rightarrow j ]}$.
\item[-] If $\psi=\exists x\forall y . \psi'$, then $\phi_{\psi,\sigma}=\bigvee\limits_{\substack{i\in [n]}} \bigwedge\limits_{\substack{j\in [r(i)]}}\phi_{\psi',\sigma[x \rightarrow i, y \rightarrow j ]}$.
\item[-] If $\psi=\exists x \bar{\forall} y . \psi'$, then $\phi_{\psi,\sigma}=\bigvee\limits_{\substack{i\in [n]}} \bar{\bigwedge\limits_{\substack{j\in [r(i)]}}}\phi_{\psi',\sigma[x \rightarrow i, y \rightarrow j ]}$.
\item[-] If $\psi=\forall x\exists y . \psi'$, then $\phi_{\psi,\sigma}=\bigwedge\limits_{\substack{i\in [n]}} \bigvee\limits_{\substack{j\in [r(i)]}}\phi_{\psi',\sigma[x \rightarrow i, y \rightarrow j ]}$.
\item[-] If $\psi=\forall x\forall y . \psi'$, then $\phi_{\psi,\sigma}=\bigwedge\limits_{\substack{i\in [n]}} \bigwedge\limits_{\substack{j\in [r(i)]}}\phi_{\psi',\sigma[x \rightarrow i, y \rightarrow j ]}$.
\item[-] If $\psi=\forall x\bar{\forall} y . \psi'$, then $\phi_{\psi,\sigma}=\bigwedge\limits_{\substack{i\in [n]}} \bar{\bigwedge\limits_{\substack{j\in [r(i)]}}}\phi_{\psi',\sigma[x \rightarrow i, y \rightarrow j ]}$.
\end{itemize}
\end{proof}

\noindent If $\psi$ is a sentence in Proposition \ref{FOEIL_to_EPIL}, then  $\mathrm{free}(\psi)= \emptyset$ hence we need no assignment.

\begin{definition}
Let $(p\B, \psi)$ be a parametric \emph{BIP} ab-model with $p\B=\{B(i,j) \mid i \in [n], j \geq 1\}$, and $r:[n] \rightarrow \mathbb{N}$. Then, the semantics of $(p\B,\psi)$ w.r.t. $r$ is the \emph{BIP} ab-model $(p\B(r),\phi_{\psi})$.
\end{definition}
} 
 
\section{Decidability results for wFOEIL}\label{sec_dec}
In this section, we state an effective translation of wFOEIL sentences to weighted automata. For this, we use a corresponding result from \cite{Pi:Ar}, namely for every FOEIL sentence we can effectively construct, in exponential time, an expressively equivalent finite automaton. Then, we show that the equivalence of wFOEIL sentences over specific semirings is decidable. For the reader's convenience we briefly recall basic notions and results on weighted automata. 

Let $A$ be an alphabet. A (nondeterministic) weighted finite automaton (WFA for short) over $A$ and $K$ is a quadruple $\mathcal{A}=(Q, in, wt, ter)$ where $Q$ is the finite state set, $in: Q \rightarrow K$ is the initial distribution, $wt:Q\times A \times Q \rightarrow K$ is the mapping assigning weights to transitions of $\mathcal{A}$, and $ter:Q \rightarrow K$ is the terminal distribution.  

Let $w=a_1 \ldots a_n \in A^*$. A path $P_w$ of $\mathcal{A}$ over $w$ is a sequence of transitions $P_w=((q_{i-1}, a_i, q_i))_{1 \leq i \leq n}$. The weight of $P_w$ is given by 
$weight(P_w)=in(q_0)\cdot\prod_{1 \leq i \leq n}wt(q_{i-1},a_i,q_i) \cdot ter(q_n)$. The behavior of $\mathcal{A}$ is the series $\Vert \mathcal{A} \Vert:A^* \rightarrow K$ which is determined by $\Vert \mathcal{A} \Vert(w)= \sum_{P_w}weight(P_w)$. 

Two WFA $\mathcal{A}$ and $\mathcal{A}'$ over $A$ and $K$ are called equivalent if $\Vert\mathcal{A}\Vert=\Vert\mathcal{A}'\Vert$.  For our translation algorithm, of wFOEIL formulas to WFA, we shall need folklore results in WFA theory. We collect them in the following proposition (cf. for instance \cite{Dr:Ha, Sa:El}). 

\begin{proposition}\label{prop-rec} Let $\mathcal{A}_1=(Q_1, in_1, wt_1, ter_1)$ and $\mathcal{A}_2=(Q_2, in_1, wt_2, ter_2)$ be two \emph{WFA's} over $A$ and $K$. Then, we can construct in polynomial time \emph{WFA's} $\mathcal{B}, \mathcal{C}, \mathcal{D}, \mathcal{E}$ over $A$ and $K$ accepting the sum, and the Hadamard, Cauchy and shuffle product of $\Vert \mathcal{A}_1\Vert$ and  $\Vert \mathcal{A}_2\Vert$, respectively. \end{proposition}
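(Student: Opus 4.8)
The plan is to recall, for each of the four operations, the classical effective construction on weighted automata and to verify that its behavior realizes the corresponding operation on series; in each case the construction is given by explicit formulas for the initial and terminal distributions and the transition weights, from which polynomiality is immediate. We use throughout that $K$ is commutative (and distributive), which is precisely what lets us regroup products of transition weights in the product-type constructions.

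For the \emph{sum}, take $\mathcal{B}$ to be the disjoint union of $\mathcal{A}_1$ and $\mathcal{A}_2$: state set $Q_1\sqcup Q_2$, the initial (resp. terminal) distribution obtained by juxtaposing $in_1,in_2$ (resp. $ter_1,ter_2$), and transition weights inherited componentwise, with weight $0$ for any transition between the two copies. Every path of $\mathcal{B}$ over $w$ stays inside one copy, so $\Vert\mathcal{B}\Vert(w)=\Vert\mathcal{A}_1\Vert(w)+\Vert\mathcal{A}_2\Vert(w)$ for all $w\in A^*$. For the \emph{Hadamard product}, take the product automaton $\mathcal{C}$ on $Q_1\times Q_2$ with $in((q_1,q_2))=in_1(q_1)\cdot in_2(q_2)$, $ter((q_1,q_2))=ter_1(q_1)\cdot ter_2(q_2)$ and $wt((q_1,q_2),a,(q_1',q_2'))=wt_1(q_1,a,q_1')\cdot wt_2(q_2,a,q_2')$. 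Paths of $\mathcal{C}$ over $w$ correspond bijectively to pairs consisting of a path of $\mathcal{A}_1$ over $w$ and a path of $\mathcal{A}_2$ over $w$, and commutativity of $K$ regroups the weight of each such pair as the product of the two individual path weights; summing and using distributivity gives $\Vert\mathcal{C}\Vert(w)=\Vert\mathcal{A}_1\Vert(w)\cdot\Vert\mathcal{A}_2\Vert(w)$.

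For the \emph{Cauchy product}, form the automaton on $Q_1\sqcup Q_2$ keeping $\mathcal{A}_1$-transitions inside $Q_1$ and $\mathcal{A}_2$-transitions inside $Q_2$, with initial distribution $in_1$ on $Q_1$ and $0$ on $Q_2$, terminal distribution $ter_2$ on $Q_2$ and $0$ on $Q_1$, and an $\varepsilon$-transition $q_1\xrightarrow{\varepsilon}q_2$ of weight $ter_1(q_1)\cdot in_2(q_2)$ for all $q_1\in Q_1$, $q_2\in Q_2$; eliminating these $\varepsilon$-transitions (a polynomial-time step, here particularly simple since they form a single acyclic layer) yields $\mathcal{D}$. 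A path of this automaton over $w$ is an $\mathcal{A}_1$-run on some prefix $w_1$ followed by one $\varepsilon$-jump and then an $\mathcal{A}_2$-run on the complementary suffix $w_2$, so $\Vert\mathcal{D}\Vert(w)=\sum_{w=w_1w_2}\Vert\mathcal{A}_1\Vert(w_1)\cdot\Vert\mathcal{A}_2\Vert(w_2)$, with the factorization $w=\varepsilon\cdot\varepsilon$ contributing exactly once. For the \emph{shuffle product}, take $\mathcal{E}$ on $Q_1\times Q_2$ with the same initial and terminal distributions as $\mathcal{C}$, but with transitions $(q_1,q_2)\xrightarrow{a}(q_1',q_2)$ of weight $wt_1(q_1,a,q_1')$ and $(q_1,q_2)\xrightarrow{a}(q_1,q_2')$ of weight $wt_2(q_2,a,q_2')$ for all $a\in A$, so that each letter of $w$ is routed to exactly one of the two components. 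A path of $\mathcal{E}$ over $w$ thus encodes an interleaving of $w$ into $w_1$ and $w_2$ together with a run of $\mathcal{A}_1$ on $w_1$ and a run of $\mathcal{A}_2$ on $w_2$; using commutativity to reorder the weight of the path and distributivity to sum, we obtain $\Vert\mathcal{E}\Vert(w)=\sum_{w\in w_1\shuffle w_2}\Vert\mathcal{A}_1\Vert(w_1)\cdot\Vert\mathcal{A}_2\Vert(w_2)$.

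Each of $\mathcal{B},\mathcal{C},\mathcal{D},\mathcal{E}$ has at most $|Q_1|\cdot|Q_2|+|Q_1|+|Q_2|$ states and a transition table and distributions computable by a constant number of passes over the data of $\mathcal{A}_1$ and $\mathcal{A}_2$, so all four constructions run in polynomial time. The only parts requiring genuine care are the empty-word bookkeeping in the Cauchy construction (ensuring $\Vert\mathcal{D}\Vert(\varepsilon)=\Vert\mathcal{A}_1\Vert(\varepsilon)\cdot\Vert\mathcal{A}_2\Vert(\varepsilon)$ with no double counting, which is exactly why the $\varepsilon$-transition formulation is convenient) and the verification that the path/interleaving correspondence in the shuffle construction is a weight-preserving bijection with the correct multiplicities; both are entirely standard, so I would carry out these routine verifications and otherwise refer to \cite{Dr:Ha, Sa:El}.
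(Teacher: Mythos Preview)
Your proposal is correct and supplies exactly the standard constructions; the paper itself does not prove this proposition but states it as a folklore result with references to \cite{Dr:Ha, Sa:El}, so there is no ``paper's own proof'' to compare against beyond those citations. One minor point worth tightening in your shuffle construction: since a WFA's transition map is a function $Q\times A\times Q\to K$, your two transition ``types'' should be combined into a single weight, namely $wt\big((q_1,q_2),a,(p_1,p_2)\big)=wt_1(q_1,a,p_1)\cdot[p_2=q_2]+wt_2(q_2,a,p_2)\cdot[p_1=q_1]$, so that the self-loop case $(p_1,p_2)=(q_1,q_2)$ is handled by summing both contributions rather than listing two parallel transitions.
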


Next we present the translation algorithm of  wFOEIL formulas to WFA's. Our algorithm requires a doubly exponential time  at its worst case.  Specifically, we prove the following theorem.

\begin{theorem}\label{wsent_to_waut}
Let $pw\B=\{wB(i,j) \mid i\in [n], j\geq 1\}$ be a set of parametric weighted components over a commutative semiring $K$,  and $r:[n] \rightarrow \mathbb{N}$. Then, for every  \emph{wFOEIL} sentence $\ps$ over $pw\B$ and $K$ we can effectively construct a \emph{WFA} $\mathcal{A}_{\ps,r}$ over $I_{p\B(r)}$ and $K$ such that $\Vert \ps \Vert (r,w)=\Vert\mathcal{A}_{\ps,r}\Vert(w)$ for every  $w \in I_{p\B(r)}^*$. The worst case run time for the translation algorithm is doubly exponential and the best case is exponential.    
\end{theorem}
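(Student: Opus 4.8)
The plan is to proceed by induction on the structure of the wFOEIL sentence $\ps$, building the automaton $\mathcal{A}_{\ps,r}$ bottom-up. First I would handle the two base cases. For $\ps = k$ with $k\in K$, a single-state WFA with initial and terminal weight making the empty-word value equal $k$ (and handling how constants interact with words — here $k$ is treated as the constant series $\widetilde k$, so actually one needs a one-state automaton with a $k$-weighted self-loop on every letter, initial weight $1$, terminal weight $1$, adjusted so that $\Vert\mathcal A\Vert(w)=k$ for all $w$; care is needed depending on the precise reading of $\Vert k\Vert$). For $\ps = \psi$ a FOEIL sentence, I would invoke the cited result from \cite{Pi:Ar}: one can construct, in exponential time, a finite automaton $\mathcal{A}_{\psi,r}$ over $I_{p\B(r)}$ recognizing exactly $\{w : (r,w)\models\psi\}$; viewing this NFA as a WFA over $K$ with all transition/initial/terminal weights in $\{0,1\}$ gives $\Vert\mathcal{A}_{\psi,r}\Vert(w) = 1$ iff $(r,w)\models\psi$ and $0$ otherwise, as required. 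This is where the single exponential blow-up enters.

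Next I would treat the binary operators $\oplus,\otimes,\odot,\varpi$. By the inductive hypothesis we have WFA $\mathcal{A}_1,\mathcal{A}_2$ for the two subformulas; then Proposition~\ref{prop-rec} gives, in polynomial time, WFA computing the sum, Hadamard product, Cauchy product, and shuffle product of $\Vert\mathcal{A}_1\Vert$ and $\Vert\mathcal{A}_2\Vert$, which by Definition of the semantics of wFOEIL are exactly $\Vert\ps_1\oplus\ps_2\Vert$, $\Vert\ps_1\otimes\ps_2\Vert$, $\Vert\ps_1\odot\ps_2\Vert$, $\Vert\ps_1\varpi\ps_2\Vert$ respectively (evaluated at the relevant $r$, with no free variables). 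Since these cost only polynomial time and a constant increase in structural depth per operator, and there are only finitely many such operators in $\ps$, they contribute no extra exponential tower.

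The remaining and most delicate cases are the six quantifiers. For a fixed $r$, each quantifier over $x^{(i)}$ ranges over the finite set $[r(i)]$, so it unfolds into a finite combination: $\sum x^{(i)}.\ps'$ becomes the $r(i)$-fold sum $\bigoplus_{j\in[r(i)]}$ of the automata for $\ps'$ with $\sigma[x^{(i)}\to j]$ (which, since $\ps$ is a sentence, are genuine closed instances); $\prod x^{(i)}.\ps'$ becomes the $r(i)$-fold Hadamard product; $\sum^{\odot}$ and $\prod^{\odot}$ become finite sums of Cauchy products over ordered index tuples / the full ordered product; and $\sum^{\varpi}$, $\prod^{\varpi}$ likewise with the shuffle product. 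Each unfolding uses Proposition~\ref{prop-rec} polynomially many times in $r(i)$, so a single quantifier causes only polynomial growth — but a block of nested quantifiers, applied to a FOEIL subformula whose automaton is already of exponential size, is still only exponential. The subtle point, and the main obstacle, is bookkeeping the exact complexity: the FOEIL-to-NFA step is exponential, and then the quantifier unfoldings multiply automaton sizes; I would argue that in the worst case (e.g. nested universal concatenation/shuffle quantifiers forcing products of exponentially many exponential-size automata) the size becomes doubly exponential, while in the best case (a single FOEIL formula with no or few weighted quantifiers) it stays exponential. I would make this precise by tracking, at each inductive step, an upper bound on $|Q|$ as a function of $r$ and the subformula, showing it never exceeds a doubly-exponential bound and that all constructions are effective. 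The correctness at each step is routine from the semantic definitions; the genuine work is the careful complexity accounting and verifying that the restricted use of negation under $\sum^{\odot}$ and $\sum^{\varpi}$ (inherited from the FOEIL constraint) is exactly what keeps the FOEIL-to-automaton translation in \cite{Pi:Ar} applicable.
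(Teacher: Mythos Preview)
Your overall plan matches the paper's: induct on the structure of $\ps$, handle FOEIL subformulas via the automaton construction of \cite{Pi:Ar}, and use Proposition~\ref{prop-rec} for the binary operators and the quantifier unfoldings. Two points, however, need correction.

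The substantive gap is in your base case $\ps=\psi$. You write that ``viewing this NFA as a WFA over $K$ with all transition/initial/terminal weights in $\{0,1\}$ gives $\Vert\mathcal{A}_{\psi,r}\Vert(w)=1$ iff $(r,w)\models\psi$ and $0$ otherwise.'' This is false over a general commutative semiring: if the NFA has several accepting runs on $w$, the WFA so obtained returns their \emph{sum}, which in $\mathbb{N}$, $\mathbb{Q}$, $\mathbb{R}_{\max}$, etc.\ is not $1$. To obtain the characteristic series you must first pass to a deterministic (or at least unambiguous) automaton. The paper does exactly this: after invoking Proposition~\ref{form-aut} it constructs ``in exponential time an equivalent complete finite automaton'' and only then reinterprets it as a WFA. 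This determinization step is not a technicality---it is the second exponential in the worst-case bound: Proposition~\ref{form-aut} already costs one exponential in the worst case, and determinizing the resulting NFA costs another, yielding the doubly exponential upper bound directly at the FOEIL base case rather than through nested quantifier products as you suggest. Your complexity accounting therefore locates the blow-up in the wrong place.

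A smaller structural issue: your induction is phrased for sentences, but a single quantifier unfolding $\sum x^{(i)}.\ps'$ exposes $\ps'$ with $x^{(i)}$ free, and under nested quantifiers $\ps'$ may have several free variables simultaneously; the instances are \emph{not} closed until all outer variables are assigned. The paper handles this by proving the stronger statement for formulas together with an assignment $\sigma$ (Proposition~\ref{wformula-waut}) and then specializing to sentences. You should set up the induction the same way, carrying $(\mathcal V,\sigma)$ as a parameter, rather than asserting that the unfolded instances are ``genuine closed instances.''
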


We shall prove Theorem \ref{wsent_to_waut} using the subsequent Proposition \ref{wformula-waut}. For this, we need to slightly modify a corresponding result from \cite{Pi:Ar}. More precisely, we state the next proposition.
\begin{proposition}\label{form-aut}
Let $\psi$ be a \emph{FOEIL} formula over a set $p\B=\{B(i,j) \mid i\in [n], j\geq 1\}$ of parametric components. Let also $\mathcal{V} \subseteq \mathcal{X}$ be a finite set of variables containing $\mathrm{free}(\psi)$, $r:[n] \rightarrow \mathbb{N}$, and $\sigma$ a  $(\mathcal{V}, r)$-assignment. Then, we can effectively construct a finite automaton $\mathcal{A}_{\psi,r, \sigma}$ over $I_{p\B(r)}$ such that $(r, \sigma, w) \models \psi$ iff $w \in L(\mathcal{A}_{\psi,r, \sigma})$ for every $w \in I_{p\B(r)}^*$. The worst case run time for the translation algorithm is exponential and the best case is polynomial. 
\end{proposition}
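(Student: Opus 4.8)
The plan is to prove Proposition \ref{form-aut} by induction on the structure of the \emph{FOEIL} formula $\psi$, closely following the construction of the analogous result in \cite{Pi:Ar} and adapting it to the present multi-sorted setting — where first-order variables are partitioned according to the component types $wB(1), \ldots, wB(n)$ — and to formulas carrying free variables interpreted under a $(\mathcal{V},r)$-assignment $\sigma$. Throughout, the ambient alphabet is $I_{p\B(r)}$, which is finite because $r(i)<\infty$ for every $i\in[n]$, and the invariant to maintain is $L(\mathcal{A}_{\psi,r,\sigma})=\{w\in I_{p\B(r)}^* \mid (r,\sigma,w)\models\psi\}$.

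First I would dispose of the base cases. For $\psi=\varphi$ an \emph{EPIL} formula over $P_{p\B(\mathcal{X})}$, I would form the ground \emph{EPIL} formula $\sigma(\varphi)$ over $P_{p\B(r)}$ by substituting $p(\sigma(x^{(i)}))$ for each $p(x^{(i)})$, and then invoke the \emph{EPIL}-to-automaton construction of \cite{Pi:Ar}: a \emph{PIL} formula becomes an automaton accepting the one-letter words over $I_{p\B(r)}$ whose letter satisfies it, while the operators $\vee$, $\wedge$, $*$, $\shuffle$ are realised by the usual closures of regular languages under union, intersection, concatenation and shuffle; the only operator forcing an exponential blow-up is the negation $\neg\zeta$ of a $\zeta$-formula, which calls for complementation of a possibly nondeterministic automaton. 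For $\psi$ of the form $x^{(i)}=y^{(i)}$ or $\neg(x^{(i)}=y^{(i)})$, satisfaction is independent of $w$, so $\mathcal{A}_{\psi,r,\sigma}$ can be taken to be a one-state automaton accepting $I_{p\B(r)}^*$ if $\sigma$ validates the (in)equality and the empty automaton otherwise.

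For the inductive step I would treat $\psi_1\vee\psi_2$, $\psi_1\wedge\psi_2$, $\psi_1*\psi_2$ and $\psi_1\shuffle\psi_2$ by the union, product, concatenation and shuffle constructions applied to $\mathcal{A}_{\psi_1,r,\sigma}$ and $\mathcal{A}_{\psi_2,r,\sigma}$ (the unweighted counterparts of Proposition \ref{prop-rec}); $\exists x^{(i)}.\psi'$ by $\bigcup_{j\in[r(i)]}\mathcal{A}_{\psi',r,\sigma[x^{(i)}\to j]}$ and $\forall x^{(i)}.\psi'$ by $\bigcap_{j\in[r(i)]}\mathcal{A}_{\psi',r,\sigma[x^{(i)}\to j]}$. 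For the concatenation quantifiers, writing $\mathcal{A}_j:=\mathcal{A}_{\psi',r,\sigma[x^{(i)}\to j]}$, I would realise $\forall^* x^{(i)}.\psi'$ as the ordered concatenation $\mathcal{A}_1\cdots\mathcal{A}_{r(i)}$ and $\exists^* x^{(i)}.\psi'$ as the concatenation $(\mathcal{A}_1\cup\mathcal{A}_\varepsilon)\cdots(\mathcal{A}_{r(i)}\cup\mathcal{A}_\varepsilon)$, where $\mathcal{A}_\varepsilon$ accepts only the empty word — this automaton recognises precisely the words $w_{l_1}\cdots w_{l_t}$ with $1\leq l_1<\cdots<l_t\leq r(i)$ and $w_{l_j}\in L(\mathcal{A}_{l_j})$, matching the semantic clause. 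The shuffle quantifiers $\forall^{\shuffle}$ and $\exists^{\shuffle}$ I would handle identically with the shuffle product replacing concatenation, using commutativity of $\shuffle$ to drop the ordering of indices in the existential case; in each case a routine comparison with the corresponding clause of the \emph{FOEIL} semantics yields the invariant.

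The running-time estimate I would obtain by tracking the number of states through the recursion: the base automaton for an \emph{EPIL} formula is polynomial in $|I_{p\B(r)}|$ and in the formula, except that each negated $\zeta$-subformula multiplies the bound by an exponential factor, while each quantifier over $x^{(i)}$ multiplies the state count (for the product and shuffle constructions) by a factor governed by $r(i)$, and nested quantifiers compound these factors — giving an overall exponential worst case and a polynomial best case when no such features are present. The part I expect to be most delicate is the correctness of the four quantifier clauses: checking that the ordered-product and shuffle-product automata realise exactly the word splittings $w=w_{l_1}\cdots w_{l_t}$ (respectively $w\in w_{l_1}\shuffle\cdots\shuffle w_{l_t}$) demanded by the semantics, and verifying that the restricted use of negation assumed in the syntax of $\psi$ ensures that no complementation is ever triggered inside the scope of $\exists^*$ or $\exists^{\shuffle}$ — which is exactly what keeps the construction effective and the stated bounds valid.
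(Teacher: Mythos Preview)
Your proposal is essentially the same inductive construction as the paper's, which simply defers to Proposition~23 of \cite{Pi:Ar} after spelling out the base automata for $\mathrm{true}$ and for a port atom $p(x^{(i)})$; your write-up just makes the remaining induction steps explicit.

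One small correction is needed in your treatment of $\exists^{*} x^{(i)}.\psi'$ (and analogously $\exists^{\shuffle} x^{(i)}.\psi'$). Encoding it as $(\mathcal{A}_1\cup\mathcal{A}_\varepsilon)\cdots(\mathcal{A}_{r(i)}\cup\mathcal{A}_\varepsilon)$ also admits the all-$\varepsilon$ selection, i.e.\ the empty index set, whereas the semantics requires $1\leq l_1<\cdots<l_t\leq r(i)$ with $t\geq 1$ (the paper's weighted analogue explicitly ranges over \emph{nonempty} $J\subseteq[r(i)]$). Thus your automaton may wrongly accept $\varepsilon$ when no $\mathcal{A}_j$ does. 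The fix is easy --- remove $\varepsilon$ from the resulting language unless some $\mathcal{A}_j$ already accepts it, or revert to the union over nonempty subsets as in \cite{Pi:Ar} --- and does not affect the stated complexity bounds.
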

\begin{proof}
We modify the proof of Proposition 23 in \cite{Pi:Ar} as follows.
If $\psi=\mathrm{true}$, then we consider the complete finite automaton $\mathcal{A}_{\psi,r, \sigma}=(\{q\}, I_{p\B(r)},q, \Delta, \{q\})$ with $\Delta=\{(q,a,q) \mid   a \in I_{p\B(r)} \}$. If $\psi=p(x^{(i)})$, then we consider the deterministic finite automaton  $\mathcal{A}_{\psi, r, \sigma}=(\{q_0,q_1\}, I_{p\B(r)}, q_0, \Delta, \{q_1\})$ with $\Delta=\{(q_0,a,q_1) \mid   p(\sigma(x^{(i)})) \in a\}$. Then, we follow accordingly the same induction steps. Concerning the complexity of the translation we use the same arguments (we do not take into account the trivial case $\psi = \mathrm{true}$ where the complexity of the translation is constant). 
\end{proof}

\begin{proposition}\label{wformula-waut}
Let $\ps$ be a \emph{wFOEIL} formula over a set $pw\B=\{wB(i,j) \mid i\in [n], j\geq 1\}$ of parametric weighted components and $K$. Let also $\mathcal{V} \subseteq \mathcal{X}$ be a finite set of variables containing $\mathrm{free}(\psi)$, $r:[n] \rightarrow \mathbb{N}$ and $\sigma$  a $(\mathcal{V}, r)$-assignment. Then, we can effectively construct a \emph{WFA} $\mathcal{A}_{\ps,r,\sigma}$ over $I_{p\B(r)}$ and $K$ such that $\Vert \ps \Vert(r, \sigma, w) = \Vert \mathcal{A}_{\ps,r, \sigma}\Vert(w)$ for every $w \in I_{p\B(r)}^*$. The worst case run time for the translation algorithm is doubly exponential and the best case is exponential.   
\end{proposition}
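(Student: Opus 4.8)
The plan is to proceed by structural induction on the wFOEIL formula $\ps$, mirroring the inductive proof of the FOEIL-to-automata translation (Proposition~\ref{form-aut}), but now building weighted finite automata over $I_{p\B(r)}$ and $K$ instead of ordinary finite automata. The base cases are immediate: for $\ps = k$ we take a one-state WFA whose single state has initial weight $k$, terminal weight $1$, and all transition weights $0$ (so only the empty path survives and $\Vert\mathcal{A}\Vert(\varepsilon)=k$, $\Vert\mathcal{A}\Vert(w)=0$ otherwise); for $\ps=\psi$ a FOEIL formula we first invoke Proposition~\ref{form-aut} to obtain a finite automaton $\mathcal{A}_{\psi,r,\sigma}$ with $L(\mathcal{A}_{\psi,r,\sigma})=\{w \mid (r,\sigma,w)\models\psi\}$ in (at worst) exponential time, and then view its characteristic behaviour as a WFA over $K$ by assigning weight $1$ to every transition, initial state, and final state of a deterministic, complete version of it — this yields a WFA computing exactly $\Vert\psi\Vert(r,\sigma,\cdot)$, and the cost of this step is dominated by the exponential determinization/complementation already present in Proposition~\ref{form-aut}.

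For the inductive step, the propositional-level connectives $\oplus$, $\otimes$, $\odot$, $\varpi$ are handled directly by Proposition~\ref{prop-rec}: given WFA for $\Vert\ps_1\Vert(r,\sigma,\cdot)$ and $\Vert\ps_2\Vert(r,\sigma,\cdot)$, we construct in polynomial time WFA for their sum, Hadamard product, Cauchy product, and shuffle product respectively, which by Definition of the semantics of wFOEIL are precisely $\Vert\ps_1\oplus\ps_2\Vert$, $\Vert\ps_1\otimes\ps_2\Vert$, $\Vert\ps_1\odot\ps_2\Vert$, $\Vert\ps_1\varpi\ps_2\Vert$ evaluated at $(r,\sigma,\cdot)$. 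For the weighted quantifiers we use the finiteness of $[r(i)]$ to unfold each quantifier into a finite semiring expression over instantiated subformulas, exactly as the unweighted quantifiers were unfolded into finite disjunctions/conjunctions in Proposition~\ref{form-aut}:
$\Vert{\textstyle\sum x^{(i)}}.\ps'\Vert(r,\sigma,\cdot)=\bigoplus_{j\in[r(i)]}\Vert\ps'\Vert(r,\sigma[x^{(i)}\to j],\cdot)$ becomes an $r(i)$-fold sum of WFA; $\Vert{\textstyle\prod x^{(i)}}.\ps'\Vert$ an $r(i)$-fold Hadamard product; $\Vert{\textstyle\sum^{\odot}x^{(i)}}.\ps'\Vert$ a sum, over all ordered subsets $\{l_1<\dots<l_t\}\subseteq[r(i)]$, of the Cauchy products of the WFA for $\Vert\ps'\Vert(r,\sigma[x^{(i)}\to l_1],\cdot),\dots,\Vert\ps'\Vert(r,\sigma[x^{(i)}\to l_t],\cdot)$ in that order; $\Vert{\textstyle\prod^{\odot}x^{(i)}}.\ps'\Vert$ the single Cauchy product over all of $[r(i)]$ in increasing order; and analogously $\sum^{\varpi}$, $\prod^{\varpi}$ with shuffle products in place of Cauchy products. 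Each unfolding introduces at most $2^{r(i)}$ summands (from the subset choices), each a bounded product of WFA built from the recursive calls, so Proposition~\ref{prop-rec} applies termwise.

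The main obstacle, and the only place the doubly-exponential bound enters, is the complexity bookkeeping: a FOEIL leaf already costs one exponential via Proposition~\ref{form-aut}, and the weighted-quantifier unfoldings multiply the automaton size by factors that are themselves exponential in $r(i)$, but — crucially — the number of nested quantifier alternations is bounded by the fixed sentence $\ps$, so the blow-ups compose into a bound of the form $2^{2^{\mathrm{poly}}}$ rather than an unbounded tower; I would argue this by an induction on quantifier depth, tracking the state-count of $\mathcal{A}_{\ps,r,\sigma}$ as a function of $r$ and $|\ps|$, and checking that the only super-exponential contribution comes from the single determinization inside Proposition~\ref{form-aut} (needed for the restricted negation and for complementation of FOEIL subformulas), everything else contributing at most one further exponential. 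The best case is exponential, occurring when no FOEIL subformula requires complementation/determinization, so Proposition~\ref{form-aut} runs in polynomial time and only the quantifier unfoldings (exponential in the $r(i)$) remain. Correctness of each step is the routine verification that the constructed WFA realizes the corresponding clause of the semantics in Definition of $\Vert\ps\Vert$, which follows from the defining identities of sum, Hadamard, Cauchy and shuffle products of series recalled in the Preliminaries.
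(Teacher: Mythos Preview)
Your overall strategy---structural induction on $\ps$, using Proposition~\ref{form-aut} for FOEIL leaves and Proposition~\ref{prop-rec} for all connectives and for the finitary unfoldings of the weighted quantifiers---is exactly the paper's approach, and the inductive cases you describe match the paper's cases (iii)--(ix) essentially verbatim.

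There is, however, a genuine error in your base case $\ps=k$. You build a one-state WFA with all transition weights equal to $0$, so that $\Vert\mathcal{A}\Vert(\varepsilon)=k$ and $\Vert\mathcal{A}\Vert(w)=0$ for every nonempty $w$. But the semantics of a constant in wFOEIL is the \emph{constant series}: by Definition of the semantics one has $\Vert k\Vert(r,\sigma,w)=k$ for \emph{every} $w\in I_{p\B(r)}^*$, not only for $w=\varepsilon$. Your automaton therefore computes the wrong series. The fix is the paper's construction: a single state $q$ with $in(q)=k$, $ter(q)=1$, and $wt(q,a,q)=1$ for every $a$, so that every word has exactly one path of weight $k$.

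A smaller point concerns your complexity bookkeeping. You locate the doubly-exponential blow-up ``inside Proposition~\ref{form-aut}'', but Proposition~\ref{form-aut} itself is only singly exponential in the worst case. The second exponential arises in step (ii) of the present proof, when the (possibly exponential-size) nondeterministic automaton produced by Proposition~\ref{form-aut} is turned into a complete deterministic automaton before being read as a WFA; it is this composition that yields the doubly-exponential bound. Your best-case analysis also needs adjustment: even when Proposition~\ref{form-aut} runs in polynomial time, the completion/determinization in step (ii) still contributes one exponential, which is why the best case is exponential rather than polynomial.
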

\begin{proof}
We prove our claim by  induction on the structure of the wFOEIL formula $\ps$. 
\begin{itemize}
\item[i)] If $\ps=k$, then we consider the WFA $\mathcal{A}_{\ps,r,\sigma}=(\{q\}, in, wt, ter)$ over $I_{p\B(r)}$ and $K$ with $in(q)=k$, $wt(q,a,q)=1$ for every $a \in I_{p\B(r)}$, and $ter(q)=1$. 
\item[ii)] If $\ps=\psi$, then we consider the finite automaton $\mathcal{A}_{\psi,r,\sigma}$ derived in Proposition \ref{form-aut}. Next, we construct, in exponential time, an equivalent complete finite automaton $\mathcal{A}'_{\psi,r,\sigma}=(Q,I_{p\B(r)},q_0,\Delta,F)$. Then, we construct, in linear time,  the WFA $\mathcal{A}_{\ps,r,\sigma}=(Q,in,wt,ter)$ where $in(q)=1$ if $q=q_0$ and $in(q)=0$ otherwise, for every $q \in Q$, $wt(q,a,q')=1$ if $(q,a,q') \in \Delta$ and $wt(q,a,q')=0$ otherwise, for every $(q,a,q') \in Q\times A \times Q$, and $ter(q)=1$ if $q \in F$ and $ter(q)=0$ otherwise, for every $q\in Q$.
\item[iii)] If $\ps=\ps_1 \oplus \ps_2$ or $\ps=\ps_1 \otimes \ps_2$ or $\ps=\ps_1 \odot \ps_2$ or $\ps=\ps_1 \varpi \ps_2$, then we rename firstly variables in $\mathrm{free}(\ps_1) \cap \mathrm{free}(\ps_2)$ as well variables which are free in $\ps_1$ (resp. $\ps_2$) and bounded (i.e., not free) in $\ps_2$ (resp. in $\ps_1$). Then, we extend $\sigma$ on $\mathrm{free}(\ps_1) \cup \mathrm{free}(\ps_2)$ in the obvious way, and construct   $\mathcal{A}_{\ps,r,\sigma}$ from   $\mathcal{A}_{\ps_1,r,\sigma}$ and $\mathcal{A}_{\ps_2,r,\sigma}$  by applying Proposition \ref{prop-rec}.
\item[iv)] If $\ps={\textstyle\sum x^{(i)}}. \ps' $, then we get $\mathcal{A}_{\ps,r,\sigma}$ as the WFA for the sum of the series $\Vert\mathcal{A}_{\ps',r,\sigma[x^{(i)}\rightarrow j]}\Vert$, $j\in [r(i)]$ (Proposition \ref{prop-rec}).
\item[v)] If $\psi={\textstyle\prod x^{(i)}}. \ps'$, then we get $\mathcal{A}_{\ps,r,\sigma}$ as the WFA for the Hadamard product of the series $\Vert\mathcal{A}_{\ps',r,\sigma[x^{(i)}\rightarrow j]}\Vert$, $j\in [r(i)]$ (Proposition \ref{prop-rec}).
\item[vi)] If $\ps={\textstyle\sum ^{\odot} x^{(i)}}. \ps' $, then we compute firstly all nonempty  subsets $J$ of $[r(i)]$. For every such subset $J=\{l_1, \ldots, l_t\}$, with $1 \leq l_1 <  \ldots <  l_k \leq r(i)$, we consider the WFA $\mathcal{A}_{\ps,r,\sigma}^{(J)}$ accepting the Cauchy product of the series $\Vert\mathcal{A}_{\ps',r,\sigma[x^{(i)}\rightarrow l_1]}\Vert,\ldots, \Vert\mathcal{A}_{\ps',r,\sigma[x^{(i)} \rightarrow l_t]}\Vert$. Then, we get $\mathcal{A}_{\ps,r,\sigma}$ as the WFA for the sum of the series  $\Vert\mathcal{A}_{\ps,r,\sigma}^{(J)}\Vert$  with $\emptyset \neq J \subseteq [r(i)]$ (Proposition \ref{prop-rec}). 
\item[vii)] If $\ps={\textstyle\prod ^{\odot} x^{(i)}}. \ps' $, then we get $\mathcal{A}_{\ps,r,\sigma}$ as the WFA for the Cauchy product of the series $\Vert\mathcal{A}_{\ps',r,\sigma[x^{(i)}\rightarrow j]}\Vert$, $j\in [r(i)]$ (Proposition \ref{prop-rec}).
\item[viii)] If  $\ps={\textstyle\sum ^{\varpi} x^{(i)}}. \ps' $, then we compute firstly all nonempty  subsets $J$ of $[r(i)]$. For every such subset $J=\{l_1, \ldots, l_t\}$, with $1 \leq l_1 <  \ldots <  l_k \leq r(i)$, we consider the WFA $\mathcal{A}_{\ps,r,\sigma}^{(J)}$ accepting the shuffle product of the series $\Vert\mathcal{A}_{\ps',r,\sigma[x^{(i)}\rightarrow l_1]}\Vert,\ldots, \Vert\mathcal{A}_{\ps',r,\sigma[x^{(i)} \rightarrow l_t]}\Vert$. Then, we get $\mathcal{A}_{\ps,r,\sigma}$ as the WFA for the sum of the series  $\Vert\mathcal{A}_{\ps,r,\sigma}^{(J)}\Vert$  with $\emptyset \neq J \subseteq [r(i)]$ (Proposition \ref{prop-rec}). 
\item[ix)] If $\ps={\textstyle\prod ^{\varpi} x^{(i)}}. \ps' $, then we get $\mathcal{A}_{\ps,r,\sigma}$ as the WFA for the shuffle product of the series $\Vert\mathcal{A}_{\ps',r,\sigma[x^{(i)}\rightarrow j]}\Vert$, $j\in [r(i)]$ (Proposition \ref{prop-rec}).
\end{itemize}
By our constructions above, we immediately get  $\Vert \ps\Vert(r, \sigma, w) = \Vert \mathcal{A}_{\ps,r,\sigma} \Vert(w)$ for every $w \in I_{p\B(r)}^*$ . Hence, it remains to deal with the time complexity of our translation algorithm. 
  
Taking into account the above induction steps, we show that the worst case run time for our translation algorithm is doubly exponential. Indeed, if $\ps'=\psi$ is a FOEIL formula, then our claim holds by (ii) and Proposition \ref{form-aut}. Then the constructions in steps (iii)-(v), (vii) and (ix) require a polynomial time (cf. Proposition \ref{prop-rec}). Finally, the translations in steps (vi) and (viii) require at most a doubly exponential run time because of the following reasons. Firstly, we need to compute all nonempty subsets of $[r(i)]$ which requires an exponential time. Then, due to our restrictions for $\ps'$ in $\ps={\textstyle\sum ^{\odot} x^{(i)}}. \ps' $ and $\ps={\textstyle\sum ^{\varpi} x^{(i)}}. \ps' $, and Proposition \ref{form-aut} (cf. also the proof of Proposition 24 in \cite{Pi:Ar}), if a FOEIL subformula $\psi$ occurs in $\ps'$, then we need a polynomial time to translate it to a finite automaton  and by (ii) an exponential time to translate it to a WFA. We should note that if $\ps'$ contains a subformula of the form $\exists ^* x^{(i')}.\psi''$ or $\exists ^{\shuffle} x^{(i')}.\psi''$ or ${\textstyle\sum ^{\odot} x^{(i')}}. \ps'' $ or ${\textstyle\sum ^{\varpi} x^{(i')}}. \ps'' $, then the computation of the subsets of $r[i']$ is independent of the computation of the subsets of $r[i]$. 
On the other hand, the best case run time of the algorithm is exponential. Indeed, if in step (ii) we get $\mathcal{A}_{\psi, r, \sigma}$ in polynomial time (cf. Proposition \ref{form-aut}) and we need  no translations of steps (vi) and (viii), then the required time is exponential.
\end{proof}

\

Now we are ready to state the proof of Theorem \ref{wsent_to_waut}.

\begin{proof}[Proof of Theorem \ref{wsent_to_waut}] We apply Proposition \ref{wformula-waut}. Since $\ps$ is a weighted sentence it contains no free variables. Hence, we get a WFA $\mathcal{A}_{\ps,r}$ over $I_{p\B(r)}$ and $K$ such that $\Vert\ps\Vert(r,w)=\Vert\mathcal{A}_{\ps,r}\Vert(w)$ for every $w \in I^*_{p\B(r)}$, and this concludes our proof. The worst case run time for the translation algorithm is doubly exponential and the best case   is exponential.
\end{proof}

\

Next we prove the decidability of the equivalence of wFOEIL sentences over (subsemirings of) skew fields. It is worth noting that the complexity remains the same with the one for the decidability of equivalence for FOEIL formulas \cite{Pi:Ar}.

\begin{theorem}
Let $K$ be  a (subsemiring of a) skew field, $pw\mathcal{B}= \{wB(i,j) \mid i \in [n], j \geq 1 \} $  a set of parametric weighted components over   $K$,  and  $r:[n] \rightarrow \mathbb{N}$ a mapping. Then, the equivalence  problem for \emph{wFOEIL} sentences over $pw\B$ and $K$ w.r.t. $r$ is decidable in doubly exponential time. 
\end{theorem}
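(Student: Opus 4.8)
The plan is to reduce the equivalence problem for wFOEIL sentences to the equivalence problem for weighted finite automata over a (subsemiring of a) skew field, which is known to be decidable in polynomial time in the size of the automata. First I would invoke Theorem \ref{wsent_to_waut}: given a mapping $r:[n] \rightarrow \mathbb{N}$ and two wFOEIL sentences $\ps$ and $\ps'$ over $pw\B$ and $K$, we effectively construct WFA's $\mathcal{A}_{\ps,r}$ and $\mathcal{A}_{\ps',r}$ over $I_{p\B(r)}$ and $K$ with $\Vert \ps \Vert(r,w) = \Vert \mathcal{A}_{\ps,r} \Vert(w)$ and $\Vert \ps' \Vert(r,w) = \Vert \mathcal{A}_{\ps',r} \Vert(w)$ for every $w \in I_{p\B(r)}^*$. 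By the definition of equivalence w.r.t.\ $r$ for wFOEIL sentences, $\ps$ and $\ps'$ are equivalent w.r.t.\ $r$ if and only if $\Vert \mathcal{A}_{\ps,r} \Vert = \Vert \mathcal{A}_{\ps',r} \Vert$, i.e.\ the two WFA's are equivalent.

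Next I would appeal to the classical result that equivalence of weighted finite automata over a field (and more generally over a subsemiring of a skew field, since the behaviour of a WFA over such a $K$ already lies in the skew field generated by its weights, and over a skew field one may pass to its centre or use the linear-algebraic normal-form / Schützenberger-type argument) is decidable in polynomial time in the number of states, by computing a basis of the vector space spanned by the reachable weight vectors and checking that the two behaviours agree on it; equivalently one tests whether the difference automaton has zero behaviour, which over a (skew) field reduces to solving a linear system. Since $\mathcal{A}_{\ps,r}$ and $\mathcal{A}_{\ps',r}$ have size at most doubly exponential in the size of $\ps$ and $\ps'$ by Theorem \ref{wsent_to_waut}, and the equivalence test runs in time polynomial in that size, the overall procedure runs in doubly exponential time. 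This yields the claimed complexity bound, and matches the bound obtained in \cite{Pi:Ar} for the unweighted FOEIL case, since the weighted translation in Theorem \ref{wsent_to_waut} has the same doubly-exponential worst-case size as its unweighted counterpart.

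The step I expect to require the most care is making precise what ``subsemiring of a skew field'' buys us for the WFA equivalence test: one must argue that a WFA over such a $K$ can be regarded as a WFA over the ambient skew field $D$ (the weights being a finite subset of $D$), that equivalence over $D$ coincides with equivalence over $K$ (trivially, since the behaviours are the same $D$-valued series), and that the polynomial-time decidability of WFA equivalence — stated classically for fields — carries over to skew fields via the standard module-theoretic argument (left vector spaces over a division ring still have well-defined dimension and Gaussian elimination works). Once this is in place, the remainder is bookkeeping: combine the doubly-exponential construction cost of Theorem \ref{wsent_to_waut} with the polynomial cost of the equivalence test to conclude that the equivalence problem for wFOEIL sentences over $pw\B$ and $K$ w.r.t.\ $r$ is decidable in doubly exponential time.
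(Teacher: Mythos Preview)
Your proposal is correct and follows essentially the same approach as the paper: invoke Theorem \ref{wsent_to_waut} to translate the two wFOEIL sentences into WFA's of at most doubly exponential size, then apply the known polynomial-time (cubic, via \cite{Sa:El,Sa:Ra}) decidability of WFA equivalence over (subsemirings of) skew fields. In fact your write-up is more detailed than the paper's two-line proof, which simply cites Theorem 4.2 of \cite{Sa:El} for the WFA equivalence step without elaborating on the subsemiring issue.
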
 
\begin{proof}
It is well known that the equivalence problem for weighted automata, with weights taken in  (a subsemiring of) a skew field, is decidable in cubic time (cf. Theorem 4.2 in \cite{Sa:El}, \cite{Sa:Ra}). Hence, we conclude our result by Theorem \ref{wsent_to_waut}.
\end{proof}

\begin{corollary}
Let $pw\mathcal{B}= \{wB(i,j) \mid i \in [n], j \geq 1 \} $ be a set of parametric weighted components over $\mathbb{Q}$ and  $r:[n] \rightarrow \mathbb{N}$ a mapping. Then, the equivalence  problem for \emph{wFOEIL} sentences over $pw\B$ and $\mathbb{Q}$ w.r.t. $r$ is decidable in doubly exponential time. 

\end{corollary}

\section{Conclusion}
In this paper we studied the modelling of architectures for 
parametric weighted component-based systems. We introduced a weighted first-order extended interaction logic, wFOEIL, over a finite set of ports and a commutative semiring to
characterize quantitative aspects of architectures, such as the minimum
cost or the maximum probability of the implementation of interactions, depending on the underlying semiring. Our wFOEIL models parametric weighted interactions by preserving their execution order as imposed by the corresponding architecture.
Moreover, we showed that the equivalence problem for wFOEIL sentences over (a subsemiring of) a skew field is decidable in doubly exponential time. Furthermore, we applied wFOEIL for describing well-known parametric architectures in the quantitative setting. 

Work in progress involves the investigation of the second-order level of our wEPIL 
over semirings, in order to capture the quantitative aspects of more complicated parametric architectures such as Ring, Pipeline, or  Grid (cf. \cite{Bo:St, Ma:Co}). Future research also includes the study of our weighted logics over alternative weight structures, found in applications, like for
instance valuation monoids \cite{Dr:Re,Ka:We}. Existing work has extensively studied the verification problem for parametric qualitative systems with specific topology or communication rules
against invariant properties \cite{Bo:Ch,Bo:St,Pn:Au,Zu:In} or temporal properties \cite{Am:Pa,Bl:De, Ch:On}. On the other hand, a few works have investigated verification problems of probabilistic parametric systems \cite{An:Pa, Be:Pa, Es:Po}. 
Therefore, it would be very interesting to study verification problems of parametric weighted systems
within our modelling architecture framework. Finally, another research direction is the 
implementation of our results in component-based frameworks
for automating the modelling and architecture identification of arbitrary parametric weighted systems.

\

\hide{\noindent \textbf{Acknowledgement.} We are deeply grateful to Simon Bliudze for discussions on a previous version of the paper. }

\end{document}